\providecommand{\psreset}{\psset{%
		linewidth=0.3pt,linestyle=solid,linecolor=black,
		dotsize=2.5pt,dotsep=2.5pt,arrowsize=4pt,
		fillstyle=none,fillcolor=white,
		showpoints=false,arrows=-,linearc=0,framearc=0,
		hatchsep=2pt,hatchwidth=0.2pt,nodesep=4pt,opacity=1}
	\psset{gridcolor=black!60, subgridcolor=black!30}
}
\titleformat{\section}[block]{\centering\large\bfseries\sffamily}{\thesection.}{0.3em}{}
\titleformat{\subsection}[block]{\flushleft\bfseries}{\thesubsection.}{0.3em}{}
\titleformat{\subsection}[block]{\flushleft\bfseries\sffamily}{\thesubsection.}{0.2em}{}
\titleformat{\subsubsection}[runin]{\normalsize\itshape}{\bfseries\upshape\sffamily\thesubsubsection.}{0.2em}{}[.--\:]
\renewcommand{\thesubsubsection}{\arabic{section}.\arabic{subsection}.\alph{subsubsection}}
\titlespacing{\section}{0ex}{8ex}{4ex}
\titlespacing{\subsection}{0in}{5ex}{2ex}
\titlespacing{\subsubsection}{0mm}{2ex}{0.4em}
\providecommand{\abstitle}[1]{{\par\vspace*{2ex}\small\bfseries\sffamily #1}\hspace*{1ex}}
\renewenvironment{abstract}%
{\begin{center}\begin{minipage}{0.8\linewidth}%
			%\setstretch{1.15}
			\setlength{\parindent}{0.0em}\abstitle{Abstract}\small}%
		{\end{minipage}\end{center}\vfill\clearpage}
\newtheorem{proposition}{Proposition}[section]
\DeclareMathOperator*{\argmax}{arg\,max}
\providecommand{\Real}{{\mathds{R}}}
\providecommand{\tr}{^{\prime}}
\newcommand{\Char}[1]{\mathds{1}\left(\,#1\,\right)}
\newcommand{\abs}[1]{\left\lvert#1\right\rvert}
  \theoremstyle{remark}
  \newtheorem{rem}{\protect\remarkname}
  \theoremstyle{plain}
  \theoremstyle{definition}
\theoremstyle{plain}
  \theoremstyle{plain}
 \theoremstyle{definition}
  \newtheorem{example}{\protect\examplename}
  \theoremstyle{definition}
  \newtheorem{assumption}{\protect\assumptionname}
  \providecommand{\assumptionname}{Assumption}
  \providecommand{\definitionname}{Definition}
  \providecommand{\lemmaname}{Lemma}
  \providecommand{\remarkname}{Remark}
\providecommand{\corollaryname}{Corollary}
\providecommand{\theoremname}{Theorem}
\providecommand{\examplename}{Example}
\newcounter{aux}
\newcounter{eg1}
\newcounter{eg2}
\begin{document}
\title{Peer Effects in Consideration and Preferences\thanks{\scriptsize This paper subsumes some results from \citet{kashaev2019peer}. We thank the editor and four anonymous referees for comments and suggestions that have greatly improved the manuscript. We also thank Brice Romuald Gueyap Kounga for excellent RA work. For their useful comments we thank Victor Aguiar, Roy Allen, Tim Conley, Steven Durlauf, Yao Luo, Mathieu Marcoux, Rory McGee, Paola Manzini, Marco Mariotti, Salvador Navarro, \'Aureo de Paula, David Rivers, and Al Slivinski. We would also like to thank the audience at Workshop in Experiments, Revealed Preferences, and Decisions, the Inter University Colloquium on Analytical Research in Economics, the CARE-Colloquium series at University of Kansas, NAWM ES 2021, the Econometrics of Games, Matching and Network at Toulouse, Bristol ESG 2022, MEG 2022, Midwest Economic Theory and International Economics Meetings 2022, 2023 PennState Alumni Conference, CESG 2023, NASM ES 2024, Advances in Demand Analysis 2024, SITE 2025 at Stanford, and the seminar presentations at El Colegio de Mexico, Cornell University, LSE, MSU, Queen Mary University of London, Rice University, UCL, UC Riverside, Washington University in St. Louis, Georgetown University, York University, UNC, Duke University, and University of Virginia. Kashaev gratefully acknowledges financial support from Social Sciences and Humanities Research Council Development and  Insight Development Grants.}}

\author{ 
	Nail Kashaev
	\and
	Natalia Lazzati
    \and
	Ruli Xiao\thanks{Kashaev: University of Western Ontario; \href{mailto:nkashaev@uwo.ca}{nkashaev@uwo.ca}. Lazzati: UC Santa Cruz; \href{mailto:nlazzati@ucsc.edu}{nlazzati@ucsc.edu}.
    Xiao: Indiana University; \href{mailto:rulixiao@iu.edu}{rulixiao@iu.edu}.
	}
	}
\date{\footnotesize{This version: \today\\}
      First version: April 13, 2019, arXiv:1904.06742}
\maketitle

\begin{abstract} 
\noindent \footnotesize{We develop a general model of discrete choice that incorporates peer effects in preferences and consideration sets. We characterize the equilibrium behavior and establish conditions under which all parts of the model can be recovered from a sequence of choices. We allow peers to affect preferences, consideration, or both. We show that these peer-effect mechanisms have different behavioral implications in the data. This allows us to recover the set and the type of connections between the agents in the network. We then use this information to recover each agent's preferences and consideration mechanisms. These nonparametric identification results allow for general forms of heterogeneity across agents and do not rely on the variation of either exogenous covariates or the set of available options (menus). We apply our results to model expansion decisions by tea chains and find evidence of limited consideration. We simulate counterfactual predictions and show how limited consideration slows market penetration and competition.}
\bigskip \bigskip

\noindent JEL codes: C31, C33, D83, L13, L22, O33

\noindent Keywords: Peer Effects, Networks, Random Preferences, Random Consideration Sets, Bounded Rationality
\end{abstract}

\section{Introduction}\label{sec: introduction}

\noindent Agents are known to be influenced by others when making decisions \citep{durlauf2001social}. This social influence has been shown to be important for people in areas such as health and education and for the decisions of firms such as opening a new store. It has also been argued that agents can influence each other in different ways \citep{manski2000economic}. A comprehensive social influence approach is needed to understand the mechanisms by which the interactions operate in practice and inform private and public policies. We offer a model of social influence where the choices of connected agents or peers shape the subset of options that the agent ends up considering\footnote{This possibility has been (explicitly or implicitly) discussed by other researchers in specific contexts ---e.g., the choices made by friends may help us discover a new television show \citep{godes2004using}, a new welfare program \citep{caeyers2014peer}, a new retirement plan \citep{duflo2003role}, a new restaurant \citep{qiu2018learning}, or an opportunity to protest \citep{enikolopov2020social}.} and her preferences over the alternatives. We show that these two mechanisms have different behavioral implications in the data and that variation over time in the choices made by connected agents allows us to recover all parts of the model. We illustrate our ideas with an empirical application of expansion decisions of the two largest tea chains in the high-end tea industry in China.

In our model, a fixed group of agents are linked through a given network. A linked agent or peer can affect the options an agent considers, how she ranks the alternatives, or both.\footnote{Throughout the paper, we use a behavioral definition of peers: for a given agent, her peers are defined as all other agents that directly impact the choices that the agent makes.} More precisely, at a randomly given time, an agent gets the opportunity to select a new option from a finite set of alternatives. As in standard consideration set models, the agent does not pay attention to all the available options. Instead, she first forms a consideration set and then picks an option from it. The distinctive feature of our model is that the probability that a given alternative enters the consideration set depends on the number of peers currently adopting that option.\footnote{As in most of the literature, we assume that consideration sets are (ex-ante) random. While there are alternative interpretations, one could motivate random preferences by saying that when subjects make a decision, they maximize a well-defined utility function (or preference), but this changes stochastically over time. We interpret random consideration in our model in a similar way. Our set-up allows current consideration to depend on current choices, but does not allow the dependence of current consideration on the previously considered (but not picked) alternatives ---Remark~\ref{rem: D} discusses this important limitation of the model.} We also allow the choices made by her peers to affect the preferences of the agent over alternatives. 

Our model might fit in a large number of applications. In the domain of consumer behavior, we can think of an online platform that offers video games to a set of players (agents).\footnote{\citet{lee2015interpersonal} finds that the likelihood of a player adopting a particular game increases as more of her online friends have previously adopted it.} The number of games offered by the platforms is often quite large, so agents might not be able to pay attention to all of them when making a decision. Platforms often allow agents to form social networks and make the last purchased or played game by peers visible to the agent. This might shrink the subset of games she ends up considering. Some of these games are played in groups, so the choices made by her peers can directly affect the utility the agent gets from playing a particular game. If we observe choices made by the players over time, our model can help the platform personalize each agent's reference groups to maximize profits. While well-suited for many applications, our model requires repeated choices by agents in the network. Thus, this framework does not directly fit, for instance, a situation where alternatives involve durable goods such as different brands of vehicles.

After showing equilibrium existence, we consider a long sequence of choices made by the network members from an invariant, latent network structure and time-stable preferences and consideration set formation mechanisms. We show that all primitives of the model can be uniquely recovered: the network structure, the consideration probabilities, and the distribution of choices given a consideration set (which we associate with preferences). Three aspects of our nonparametric identification results deserve special attention. First, we allow for very general forms of heterogeneity across agents. Second, we identify not only who influences whom, but also whether a peer affects consideration, preferences, or both. Third, we do not rely on changes in exogenous covariates or the set of available options (menus) to identify the model. Instead, we use variation in the choices made by peers. One can think of them as excluded covariates that affect different parts of the decision process.\footnote{We thank Francesca Molinari for pointing this out.} These excluded covariates are special as they vary \emph{endogenously} in the model and we need to \emph{identify} them in the data.

In our model, the observed choices are generated by a system of conditional choice probabilities (CCPs): each CCP specifies the frequency of choices made by a given agent conditional on the choice configuration (i.e., choices of everyone in the network) at the moment of revising her selection. The identification strategy has two parts. First, we identify the primitives of the model from the CCPs. Second, we recover the CCPs from the data.

To identify the model from the CCPs, we proceed in steps. First, we observe that changes in an agent's peers' choices lead to changes in her own CCPs. This helps us identify who the peers are, but not whether their influence comes from affecting consideration or preferences. We separate the two mechanisms using the following feature: for a given agent, the probability of selecting an alternative can be written as the probability of considering the alternative times the probability of selecting it conditional on it being considered. These two terms capture the peer effect channels by the consideration and the preference, respectively. While the first probability changes when a consideration peer switches to that alternative, the second probability remains constant since the agent is already considering the alternative. Also, while the second probability varies when a preference peer chooses something different from the alternative, the first probability is not affected by this change. In other words, an interdependence between alternatives is present in preferences, but not in consideration. We use these observations to separate the two mechanisms via a cross order effect of peers in alternatives in the CCPs.

When the network structure is recovered, choices made by different types of peers can be used as double exclusion restrictions to identify consideration mechanisms and preferences. For example, variation in choices made by peers that only affect consideration (consideration-only peers) can be used to identify changes in consideration probabilities. To recover preferences, we first show that variation in choices made by consideration-only peers can be used to mimic variation in menus. Thus, one can identify the CCPs for the cases in which a subset of alternatives has been completely removed from the menu. This artificial variation in menus generated by consideration-only peers can then be used to identify preferences.

To identify the CCPs, we assume the researcher observes a sequence of choices for the network members in a long time-series with a time-stable structure. We consider two datasets: In continuous-time datasets, the researcher observes agents' choices in real time, so one can identify the CCPs directly. Our empirical application is an example of this type of ``ideal'' dataset. In discrete-time datasets, the researcher observes the joint choices at fixed time intervals (e.g., every Monday). In this case, we use \citet{blevins2017identifying,blevins2018identification} to show that the CCPs are also uniquely identified under a mild condition.

We offer empirically relevant extensions: we study finite history dependence; we explain what to do when one of the choices (e.g., ``do nothing'') is not observed in the data.

To showcase our methodology, we study the possibility of limited attention in the expansion decisions of China's top two high-end tea chains, Nayuki and Heytea. The set of markets in which these firms can open a new store is very large, and we observe their choices near the time when they started their business. As newcomers, they could lack knowledge about current and future market conditions (e.g., the level of consumer demand and local government regulations) or the potential actions of competitors, and thus, not be able to form expectations about market profitability. We think that managers may have employed simple rules (heuristics) to first narrow down the markets they considered, and then used their limited resources to evaluate those options more carefully. This idea builds on previous theories of bounded rationality in firms offered by \citet{simon1955behavioral}.\footnote{See also the discussion of boundedly rational firm behavior in \citet{simon1945administrative}, \citet{armstrong2010behavioral}, and \citet{heidhues2018behavioral}.} We add neighborhood effects to these heuristics by assuming that, for a given market, the number of firm stores in the nearby markets affects the probability of considering that market, but not its profitability. This exclusion restriction allows us to identify the network structure ---neighboring markets that affect consideration only (i.e. consideration-only peers)--- and to recover profits and consideration mechanisms. Our results show that firms in our application limit their consideration, that ignoring this behavior can mislead the analysis of market profitability, and that limited consideration may play a key role in shaping market structure.

Finally, we relate our results with the existing literature. From a modeling perspective, our setup combines the dynamic model of social interactions of \citet{blume1993statistical,blume1995statistical} with the (single-agent) model of random consideration sets of \citet{manski1977structure} and \citet{manzini2014stochastic}. By adding peer effects in consideration sets, we use variation in the choices made by the peers of a given agent as the main tool to recover her random preferences. The literature on the identification of single-agent consideration set models has mainly relied on the variation of the set of available options (menus), e.g., \citet{aguiar2017random, ABDsatisficing16, brady2016menu, caplin2018rational, cattaneo2017random, horan2019random, kashaev2022random, lleras2017more, manzini2014stochastic}, and \citet{masatliogluRA}; or offered partial identification results, e.g., \citet {barseghyan2019heterogeneous}. (See \citealp{aguiar2023random}, for a comparison of several consideration set models in an experiment.) Other papers have relied on covariates that shift preferences or consideration sets, e.g., \citet{barseghyan2019discrete, crawford2021survey, conlon2013demand, draganska2011choice, gaynor2016free, goeree2008limited, mehta2003price, lu2014moment}, and \citet{roberts1991development}. \citet{abaluck2021consumers} show that choice probabilities in full consideration models satisfy a symmetry property analogous to Slutsky symmetry in continuous-choice models that breaks down in consideration set models when changes in characteristics perturb consideration. They use unbounded alternative-specific covariates to generate exogenous menu variation to identify the consideration probabilities. 
\citet{aguiar2020identification, allen2023latent, crawford2021survey}, and \citet{dardanoni2020inferring} use repeated choices (i.e., panel data) but do not allow for peer effects.  

There is a vast econometric literature on the identification of models of social interactions in which friends' choices affect the preferences but not the consideration set of a given person (see, for example, \citealp{blume2011identification, bramoulle2020peer, de2017econometrics} and \citealp{graham2015methods}, for comprehensive reviews of this literature). We add to this literature a second mechanism of peer effects that might be important in specific applications. 

As we mentioned earlier, we can recover from the data the set of connections between the agents in the network. In the context of linear models, a few recent papers have made progress in the same direction. Among them, \citet{blume2015linear, bonaldi2015empirical, de2018recovering, lewbel2023social}, and \citet{manresa2013estimating}. In the context of discrete choice, \citet{chambers2019behavioral} also identify the network structure. However, in this paper, peers do not affect consideration sets but directly change preferences (among other differences).

Two theoretical papers incorporate peer effects in the consideration sets: \citet{borah2018} do so in a static framework and use menu variation for identification. \citet{lazzati2018diffusion} considers a dynamic model but focuses on two alternatives that can be acquired together.

In relation to the application, a large empirical literature addresses how firms make entry and expansion decisions in oligopolies. Part of this literature studies strategic interactions in payoffs in static entry games (e.g., \citealp{Bresnahan1990,berry1992}, and \citealp{ciliberto2009}). These studies typically use cross-sectional data and address the challenge of multiple equilibria across markets. Another branch, more closely related to our work, models expansion decisions as the result of dynamic games (e.g., \citealp{aguirregabiria2007sequential, bajari2007, arcidiacono2016estimation}, and \citealp{blevins2018}). These papers use panel data, often assume a single equilibrium in the data-generating process, and add forward-looking behavior. By capturing the trade-off between the upfront cost and the potential benefits in the future, they show the incentives of firms to preempt their opponents by entering the market early. We study a third possible channel of interdependencies in firms' entry decisions via bounded rationality in managerial decision-making and introduce interaction effects in both payoffs and consideration sets. By modeling sequential revisions in a continuous-time setting, the underlying mechanisms and factors that produce the observed data in our set-up lead to only one equilibrium. Although we abstract away from firms' forward-looking behavior, we allow the history of rivals' actions to affect the firm's payoffs.\footnote{As we mentioned earlier, our application studies the expansion decisions of Nayuki and Heytea when they were relatively new in the market and of moderate size. Thus, a less-sophisticated model might explain their behavior well. We understand that a fully rational, forward-looking approach would be more appropriate to study well-established large-scale companies.} One could view marginal profit in our framework as the reduced-form value from expansion, which embeds forward-looking behaviors without explicitly modeling them. We leave for future research the modeling of limited consideration, multiplicity of equilibria, and forward-looking behavior altogether ---this new model will require a different interpretation.

The rest of the paper is organized as follows. Section~\ref{sec: model} presents the model, the main assumptions, and establishes equilibrium existence. Section~\ref{sec: model identification} studies identification. Section~\ref{sec: extensions} offers some extensions. Section~\ref{sec: model application} applies our model to expansion decisions by firms in the tea market, and Section~\ref{sec: conclusion} concludes. The Online Appendix contains the regularity conditions for identification, all the formal proofs, and extra results for identification and estimation.

\section{The Model}\label{sec: model}
\noindent This section describes the model and the main assumptions in the paper. It also establishes the existence and uniqueness of equilibrium. 

\subsection{Network, Consideration Sets, and Preferences}\label{sec: model 1}
\noindent \textbf{Network and Choice Configuration} There is a finite set of agents $\mathcal{A}=\left\{ 1,2,\dots,A\right\}$, $A\geq 2$, and a finite set (menu) of alternatives $\mathcal{Y}=\left\{ 0,1,2,\dots,Y\right\}$, $Y\geq 1$, from which the agents might choose. Alternative $0$ is called the default alternative. We refer to $\mathbf{y}=\left( y_{a}\right)_{a\in \mathcal{A}}\in\mathcal{Y}^{A}$ as a choice configuration.\footnote{The model easily extends to settings where menus are agent-specific if, for every pair of agents, there is a one-to-one mapping between their choice sets.}  

Agents are connected through a fixed given network and interact with others in different ways. Specifically, the choices made by the peers of a given agent can affect the set of alternatives the agent ends up considering, her preferences over the alternatives, or both. Thus, the network comprises two sets of edges in $\mathcal{A}$, $\Gamma=(\Gamma_{C}, \Gamma_{R})$, where $\Gamma_{C}$ and $\Gamma_{R}$ are the sets of consideration and preference edges, respectively. Each edge identifies two connected agents and the direction of the connection. Hence, the reference group of Agent $a$ consists of reference groups for consideration, $\mathcal{NC}_a$, and for preferences, $\mathcal{NR}_a$. Formally, for each $a\in \mathcal{A}$ 
\[
\mathcal{NC}_{a}=\left\{ a'\in \mathcal{A}: \exists\text{\ edge from $a$ to $a'$ in $\Gamma_{C}$}\right\},  \mathcal{NR}_{a}=\left\{ a'\in \mathcal{A}: \exists\text{ edge from $a$ to $a'$ in $\Gamma_{R}$}\right\}.
\]
The full reference group is $\mathcal{N}_a = \mathcal{NC}_a \bigcup \mathcal{NR}_a$. We follow the convention and assume that $a\not\in\mathcal{N}_a$. Since we allow for the possibility that some peers affect both consideration and preferences, $\mathcal{NCR}_a = \mathcal{NC}_a \bigcap \mathcal{NR}_a$ can be nonempty. 

We will use a simple example throughout the paper to illustrate the main concepts, assumptions, and identification results.

\setcounter{eg1}{\value{example}}
\begin{example}
There are four agents and three alternatives. That is, $\mathcal{A}=\left\{ 1,2,3,4\right\}$ and $\mathcal{Y}=\left\{ 0,1,2\right\}$. The reference groups for consideration are:
\[
\mathcal{NC}_{1}=\left\{2, 3\right\},\quad \mathcal{NC}_{2}=\left\{1\right\},\quad \mathcal{NC}_{3}=\left\{2\right\}, \quad\mathcal{NC}_{4}=\emptyset.
\]
This means that, for instance, Agents 2 and 3 affect the set of alternatives that Agent 1 ends up considering. We assume that only Agents 1 and 3 affect each other's preferences:
\[
\mathcal{NR}_{1}=\left\{3\right\},\quad \mathcal{NR}_{2}=\emptyset,\quad \mathcal{NR}_{3}=\left\{1\right\}, \quad \mathcal{NR}_{4}=\emptyset.
\]
Agent 4 has no peers and Agent 3 affects both preferences and consideration of Agent 1.  
\hfill $\square$

\end{example}

\noindent\textbf{Choice Revision Process} We model the revision process as a standard continuous-time Markov process on the space of choice configurations. We assume agents are endowed with independent Poisson ``alarm clocks'' with rates $\left( \lambda_{a}\right)_{a\in \mathcal{A}}$. At randomly given moments (exponentially distributed with mean $1/\lambda_{a}$) the alarm of Agent $a$ goes off.\footnote{That is, each Agent $a$ is endowed with a collection of random variables $\left\{\tau_{n}^{a}\right\}_{n=1}^{\infty}$ such that $\tau_{n}^{a}-\tau_{n-1}^{a}$ is exponentially distributed with mean $1/\lambda_{a}$. These differences are independent across agents and time. All the identification results in Section~\ref{PIP} are still valid if agents have perfectly synchronized clocks.} When this happens, the agent forms a consideration set and then selects an alternative from it. Since the probability of any two alarm clocks going off simultaneously is zero, the probability that two agents make choices simultaneously is also zero. This fact simplifies identification and rules out multiple equilibria in the data-generating process.

\bigskip

\noindent \textbf{Peer Effects in Consideration Sets} The probability that Agent $a$ pays attention to and, thereby, includes a particular alternative in her consideration set depends on the choice configuration at the moment of revising decisions. We indicate by $\operatorname{Q}_{a}\left( v \mid \mathbf{y},\mathcal{NC}_{a}\right)$ the ex-ante probability that Agent $a$ considers alternative $v$ given a choice configuration $\mathbf{y}$ and her consideration reference group $\mathcal{NC}_{a}$. We assume that each alternative is added to the consideration set independently from other alternatives.
\begin{assumption}[Independent Consideration]\label{ass: MM}
    For each $a\in\mathcal{A}$ and $\mathbf{y}\in \mathcal{Y}^{A}$, the probability of facing consideration set $\mathcal{C}$, which is a subset of menu $\mathcal{Y}$, is
\begin{equation*}
\operatorname{C}_{a}\left( \mathcal{C}  \mid \mathbf{y},\mathcal{NC}_{a},\mathcal{Y}\right)=\prod\nolimits_{v\in \mathcal{C}}\operatorname{Q}_{a}\left( v \mid\mathbf{y},\mathcal{NC}_{a}\right) \prod\nolimits_{v\in \mathcal{Y}\setminus\mathcal{C}}\left( 1-\operatorname{Q}_{a}\left(
v \mid\mathbf{y},\mathcal{NC}_{a}\right) \right).
\end{equation*}
\end{assumption}
Since the consideration set cannot be empty, we assume that the default alternative is always considered. That is, $\operatorname{Q}_{a}\left( 0 \mid\mathbf{y},\mathcal{NC}_{a}\right)=1$ for all $a\in\mathcal{A}$ and $\mathbf{y}\in \mathcal{Y}^{A}$. This restriction is the only one imposed on the default alternative, and it is satisfied in many applications (including the one we present in Section~\ref{sec: model application}). We include $\mathcal{Y}$ as a determinant of $\operatorname{C}_{a}$ to simplify the notation when we later identify counterfactual choice distributions for different menus.

\setcounter{aux}{\value{example}}
\setcounter{example}{\value{eg1}}
\begin{example}[continued]
    Recall that the consideration peers of Agent 1 are $\mathcal{NC}_{1} = \left\{2,3\right\}$. Thus, the probability that Agent 1 considers at the moment of choosing, say, set $\left\{0,1\right\}$  is 
    \begin{equation*}
\operatorname{C}_{1}\left( \left\{0,1\right\}  \mid \mathbf{y},\left\{2,3\right\},\left\{0,1,2\right\}\right)=\operatorname{Q}_{1}\left( 1 \mid\mathbf{y},\left\{2,3\right\}\right) \left[ 1-\operatorname{Q}_{1}\left(
2 \mid\mathbf{y},\left\{2,3\right\}\right) \right].
\end{equation*}
    
\noindent Assume that the attention to option $v$ given by Agent 1 is modeled as follows: $v$ is considered if and only if $c_{1,v}\left(\mathbf{y},\left\{2,3\right\}\right) \geq \varepsilon_{1,v}$, where $c_{1,v}(\mathbf{y},\mathcal{NC}_1)$ measures the mean attention of Agent $1$ to $v$ as a function of her consideration peers and $\varepsilon_{1,v}$ is a random attention shock independent of $\mathbf{y}$. Then, the probability of considering $v$ is
    $
    \operatorname{Q}_{1}\left( v \mid\mathbf{y},\left\{2,3\right\}\right) = 
    \operatorname{F}_{\varepsilon_{1,v}}\left(c_{1,v}\left(\mathbf{y},\left\{2,3\right\}\right)\right),
    $
    where $\operatorname{F}_{\varepsilon_{1,v}}$ is the cumulative distribution function (c.d.f.) of $\varepsilon_{1,v}$.
    \hfill $\square$
\end{example}

\noindent \textbf{Peer Effects in Preferences} 
After the consideration set is formed, the agent selects an alternative according to some choice rule. The choices made by the agent from any consideration set may be random from the researcher's perspective if there are latent preference shocks across choice instances, as in the example below. Choice rules summarize the decision process after the consideration set is formed. Since, in practice, the underlying preferences (utility function parameters) can be identified and estimated from the choice rule, we focus on the choice rule and leave its association with preferences to be flexible. 

We incorporate the peer effect in preferences by allowing the choice rule of agents to depend on the configuration of choices and her preference peers. Formally, given consideration set $\mathcal{C}$, the choice rule $\operatorname{R}_{a}\left( \cdot  \mid \mathbf{y},\mathcal{NR}_{a},\mathcal{C}\right)$ is a discrete probability distribution supported on $\mathcal{C}$. That is, $\operatorname{R}_{a}\left(v  \mid \mathbf{y},\mathcal{NR}_{a},\mathcal{C}\right)\geq0$ for all $v$, and 
$\sum\nolimits_{{v}\in\mathcal{C}}\operatorname{R}_{a}\left(v  \mid \mathbf{y},\mathcal{NR}_{a},\mathcal{C}\right) = 1$.

\setcounter{aux}{\value{example}}
\setcounter{example}{\value{eg1}}
\begin{example}[continued]
Recall the preference peer of Agent 1 is $\mathcal{NR}_{1} = \left\{3\right\}$. Hence, her probability of selecting alternative 1 in consideration set $\left\{0,1\right\}$ is
$
    \operatorname{R}_{1}\left(1  \mid \mathbf{y},\left\{3\right\},\left\{0,1\right\}\right).
$ Assume the utility Agent $1$ gets from $v$ in set $\mathcal{C}$ is given by $u_{1,v,\mathcal{C}}\left(\mathbf{y},\mathcal{NR}_{1}\right)+\xi_{1,v,\mathcal{C}}$, where $u_{1,v,\mathcal{C}}$ captures the mean utility from the alternative in the given consideration set. The vector of agent- and consideration-set-specific taste shocks $\xi_{a,\mathcal{C}}=(\xi_{a,v,\mathcal{C}})_{v\in\mathcal{C}}$ is continuously distributed\footnote{Continuity is only needed to handle cases where two alternatives give the same utility.} with the conditional c.d.f. $\operatorname{F}_{a,\xi}(\cdot\mid \mathbf{y},\mathcal{NR}_a, \mathcal{C})$. Then, for $v\in\mathcal{C}$,
\[
\operatorname{R}_{1}\left(v  \mid \mathbf{y},\left\{3\right\},\mathcal{C}\right)=\int\Char{v=\argmax_{v'\in\mathcal{C}}\{u_{1,v',\mathcal{C}}\left(\mathbf{y},\left\{3\right\}\right)+\xi_{1, v',\mathcal{C}}\}}d\operatorname{F}_{1,\xi}(\xi_{1,\mathcal{C}}\mid  \mathbf{y},\left\{3\right\},\mathcal{C}),
\]
where $\Char{\cdot}$ is the indicator function. If the $\xi_{a,v,\mathcal{C}}$s are independent and identically distributed (i.i.d.) shocks, distributed according to the standard Type I extreme value distribution, then the above expressions simplify to
\[ \quad \quad \quad \quad \quad \quad  \quad \quad
    \operatorname{R}_{1}\left(v  \mid \mathbf{y},\left\{3\right\},\mathcal{C}\right)=\dfrac{\exp\left(u_{1,v,\mathcal{C}}\left(\mathbf{y},\left\{3\right\}\right)\right)}{\sum_{v'\in\mathcal{C}}\exp\left(u_{1,v',\mathcal{C}}\left(\mathbf{y},\left\{3\right\}\right)\right)}.\quad \quad  \quad \quad \quad \quad  \quad \quad  \quad \quad  \quad  \hfill \square \]
  
\end{example}
Note that in the above example, consideration sets and choices of preference peers can directly affect mean utilities and the distribution of the latent taste shocks. Since the choices of consideration-only peers cannot affect the distribution of utilities conditional on consideration sets, in contrast to \citet{barseghyan2019heterogeneous} and \citet{lu2014moment}, our model imposes some restrictions on the dependence between random preferences and consideration.\footnote{\citet{aguiar2020identification} allow for a similar form of dependence of mean utilities on consideration.} 

We extend the model in Section~\ref{sec: extensionslong} to allow the dependence of $\operatorname{Q}_a$ and $\operatorname{R}_a$ on the current or past choices made by Agent $a$ (e.g., a Markov process with memory). This allows the possibility that an agent considers for sure (i.e., with probability 1) her current option in the next revision. We write the model in a stricter way here only to simplify the exposition.

By combining preferences and consideration sets, the conditional choice (ex-ante) probability (CCP) that Agent $a$ selects (at the time of choosing) alternative $v$ given $\mathbf{y}$ is
\begin{equation*}
\operatorname{P}_{a}\left( v \mid \mathbf{y}\right) =\sum\nolimits_{ 
\mathcal{C}\subseteq \mathcal{Y}}\operatorname{R}_{a}\left( v \mid \mathbf{y},\mathcal{NR}_{a},\mathcal{C}\right)\prod\nolimits_{v'\in \mathcal{C}}\operatorname{Q}_{a}\left( v' \mid\mathbf{y},\mathcal{NC}_{a}\right) \prod\nolimits_{v'\in \mathcal{Y}\setminus\mathcal{C}}\left( 1-\operatorname{Q}_{a}\left(
v' \mid\mathbf{y},\mathcal{NC}_{a}\right) \right).
\end{equation*}
\noindent We aim to identify $\mathcal{NR}_a$, $\mathcal{NC}_a$, $\operatorname{R}_a$, and $\operatorname{Q}_a$ from a sequence of choices over time.

\begin{rem}
    Our identification arguments \emph{only} use variation in the choices made by connected agents. That is, we do not use exogenous variation in observable characteristics or menus. Thus, to simplify the exposition, we do not include observable covariates in the model. We can interpret our setting as if we were conditioning on them. We show in our application that covariates can be easily incorporated into the model for estimation purposes and could also offer extra sources of identification. In particular, alternative-specific covariates can serve as further exclusion restrictions for the consideration or preferences ---e.g., product-specific advertisement might only affect attention to a specific product \citep{goeree2008limited}. 
\end{rem}
\begin{rem}
The dynamic interaction process we model assumes that each agent best responds to the observed choices made by others and does not anticipate their actions in the future or how her choice could affect them. Allowing for these possibilities would require a different interpretation of our model. For instance, an agent could select an option so that others incorporate the alternative in their consideration sets. We believe this setup could lead to a new model of endogenous social norms or rules within a group of people.
\end{rem}
\begin{rem}\label{rem: D}
    Our framework does not allow the dependence of current consideration on the previously considered (but not picked) alternatives. Allowing this invalidates our exclusion restrictions. Specifically, the introduction of the dependence of the consideration on past consideration leads to a hidden state Markov process, where the hidden state is the past consideration set. Since the past consideration sets and the past choices are correlated, and the past choices depend on preference peers, we obtain that the past consideration sets are correlated with choices made by preference peers. As a result, the choices made by \emph{all} peers affect consideration. We leave this important and hard problem for future research.
\end{rem}

\subsection{Main Assumptions}
\noindent Our results for equilibrium existence and identification build on four main assumptions. We have already discussed Assumption~\ref{ass: MM}. We introduce next the other three main restrictions. Let $\operatorname{NC}_{a}^{v}\left( \mathbf{y}\right)$ and $\operatorname{NR}_{a}^{v}\left( \mathbf{y}\right)$ be the number of agents in the consideration and preference groups of Agent $a$ who select option $v$ in choice configuration $\mathbf{y}$, respectively. Formally, 
\begin{align*}
\operatorname{NC}_{a}^{v}\left(\mathbf{y}\right) =\abs{\left\{a'\in\mathcal{NC}_a\::\: y_{a'}=v\right\}}
\text{ and } \operatorname{NR}_{a}^{v}\left(\mathbf{y}\right) =\abs{\left\{a'\in\mathcal{NR}_a\::\: y_{a'}=v\right\}},
\end{align*}
where $\abs{A}$ is the cardinality of $A$. Let $\operatorname{NR}^{\mathcal{S}}_{a}\left(\mathbf{y}\right)=\left(\operatorname{NR}_{a}^{v}\left(\mathbf{y}\right) \right)_{v\in\mathcal{S}\setminus\{0\}}$ and $\operatorname{NC}^{\mathcal{S}}_{a}\left(\mathbf{y}\right)=\left(\operatorname{NC}_{a}^{v}\left(\mathbf{y}\right) \right)_{v\in\mathcal{S}\setminus\{0\}}$ for any $\mathcal{S}\subseteq\mathcal{Y}$. The second assumption is as follows.

\begin{assumption}[Consideration]\label{ass: A1}
    For each $a\in\mathcal{A}$, $\mathbf{y}\in \mathcal{Y}^{A}$, and $v\neq 0$, we have that
    \begin{enumerate}
    \item $\operatorname{Q}_{a}\left( v \mid \mathbf{y},\mathcal{NC}_{a}\right) >0$; 

    \item $\operatorname{Q}_{a}\left( v \mid \mathbf{y},\mathcal{NC}_{a}\right) \equiv \operatorname{Q}_{a}\left( v \mid \operatorname{NC}_{a}^{v}\left(\mathbf{y}\right)\right)$; and
    
    \item Under Assumption~\ref{ass: A1}(ii), $\operatorname{Q}_{a}\left( v \mid 1\right)/ \operatorname{Q}_{a}\left( v \mid 0\right)$ differs from 1 and $\operatorname{Q}_{a}\left( v \mid 2\right)/ \operatorname{Q}_{a}\left( v \mid 1\right)$.
    \end{enumerate}
\end{assumption}

\noindent Assumption~\ref{ass: A1}(i) states that the probability of considering each option is strictly positive regardless of how many peers have selected it. This assumption captures the idea that agents can eventually pay attention to an alternative for various reasons that are outside the control of our model (e.g., watching an ad on television or receiving a coupon). We allow alternatives to be considered with probability 1,\footnote{Note that, due to its multiplicative structure, the ex-ante probability of facing a given consideration set is between $0$ and $1$ and adds up to $1$ when we sum across all consideration sets.} capturing a form of persistence in consideration sets. We allow even richer forms of evolution of consideration sets by introducing history dependence to the model in  Section~\ref{sec: extensionslong}. Assumption~\ref{ass: A1}(ii) says that the probability of considering a specific option depends on the number but not the identity of the consideration peers that currently selected it. Assumption~\ref{ass: A1}(iii) is a variability requirement stating that the choices made by consideration peers have an effect on consideration probabilities. It rules out constant or exponential probability functions of the number of peers (i.e., $\ln \operatorname{Q}_a(v\mid\cdot)$ is nonlinear) around the origin. This restriction could be imposed at any other point in the support. This assumption allows for different levels of satiation. For example, consideration may change only when the number of peers picking the option achieves a given threshold (e.g., 10 agents, 20 agents, etc.). Assumption~\ref{ass: A1}(iii) is not fully needed for all our results. For instance, nonexponential probabilities are needed to identify $\mathcal{NC}_a$, but $\mathcal{N}_a$ and $\mathcal{NR}_a$ only require consideration probabilities to vary with the choices made by the peers of Agent $a$.  
\setcounter{aux}{\value{example}}
\setcounter{example}{\value{eg1}}
\begin{example}[continued]
    Suppose that the mean attention of Agent $1$ towards alternative $v$, $c_{1,v}$, is such that 
    $c_{1,v}(\mathbf{y},\left\{{2,3}\right\})= \operatorname{NC}_{1}^{v}\left(\mathbf{y}\right)$ with $\operatorname{NC}_{1}^{v}\left(\mathbf{y}\right) = \Char{y_2 = v} +\Char{y_3 = v}$. Thus, $\operatorname{Q}_1(v|\operatorname{NC}_{1}^{v}\left(\mathbf{y}\right))=\operatorname{F}_{\varepsilon_{1,v}}\left(\Char{y_2 = v} +\Char{y_3 = v}\right)$. Assumption~\ref{ass: A1} holds if, for instance, the probability of considering an option increases with the number of consideration peers that select that alternative at a decreasing rate, e.g., $\operatorname{F}_{\varepsilon_{1,v}}(0) = 1/8$, $\operatorname{F}_{\varepsilon_{1,v}}(1) = 1/2$, and $\operatorname{F}_{\varepsilon_{1,v}}(2) = 3/4$. It also holds if the rate of change is initially increasing, as in the well-known S-shaped curve for network effects, where a phase of increasing returns is then followed by diminishing returns. It only rules out a constant rate of change at some point.
\hfill $\square$
\end{example}
\setcounter{example}{\value{aux}}

Let $\mathbf{0}_{v}^{1}$ denote a vector obtained by replacing the $v$-th component of the zero vector $\mathbf{0}$ by $1$. The third assumption restricts the preference part of the decision process.
\begin{assumption}[Preferences]\label{ass: A2}
    For each $a\in\mathcal{A}$, $\mathbf{y}\in \mathcal{Y}^{A}$, $\mathcal{C}\subseteq\mathcal{Y}$, and $v\in \mathcal{C}\setminus\{0\}$,
\begin{enumerate}
    \item $\operatorname{R}_{a}\left( v \mid \mathbf{y},\mathcal{NR}_{a},\mathcal{C}^*\right) >0$ for some $\mathcal{C}^*$ such that $\operatorname{C}_a(\mathcal{C}^*|\mathbf{y},\mathcal{NC}_a,\mathcal{Y})>0$;
    \item $\operatorname{R}_{a}\left( v \mid \mathbf{y},\mathcal{NR}_{a},\mathcal{C}\right) \equiv \operatorname{R}_{a}\left( v \mid \operatorname{NR}^{\mathcal{C}}_{a}\left(\mathbf{y}\right),\mathcal{C}\right)$; and
    
    \item $\operatorname{R}_{a}\left( v \mid \mathbf{0}_{v}^{1},\mathcal{C}\right)-\operatorname{R}_{a}\left( v \mid \mathbf{0},\mathcal{C}\right)\neq 0$, and its sign does not depend on  $\mathcal{C}$.
\end{enumerate}
\end{assumption}

\noindent Assumption~\ref{ass: A2}(i) requires each alternative to be picked with a positive probability at least in one consideration set that can be realized. Together with Assumption~\ref{ass: A1}(i), it implies that every alternative can be picked with a positive probability. This assumption allows for both random and deterministic choice rules. Assumption~\ref{ass: A2}(ii) states that the choice rule depends on the number (but not the identity) of preference peers that selected each of the alternatives in the consideration set. Assumption~\ref{ass: A2}(iii) assumes that the change in the probability of selecting a given alternative due to an additional preference peer selecting this alternative is either positive or negative for all consideration sets that contain the alternative. The effect is required to be strict only around the origin, that is, when all other peers select the default. It rules out that the different changes induced by a preference peer changing her option cancel out. Though this condition can be relaxed, as written, it is easy to interpret and is weaker than assuming either positive or negative peer effects in preferences ---as in many existing models. As in Assumption~\ref{ass: A1}(iii), the direction of the peer effect in preferences does not need to be known and can be different for different agents and alternatives. 
\setcounter{aux}{\value{example}}
\setcounter{example}{\value{eg2}}
\begin{example}[continued]
    Suppose that the mean utility of Agent $1$ from alternative $v$ given consideration set $\mathcal{C}$  is 
    $u_{1,v,\mathcal{C}}\left(\mathbf{y},\left\{3\right\}\right)=
\bar{u}_{1,v,\mathcal{C}}\left(\operatorname{NR}^v_1\left(\mathbf{y}\right)\right)
    $ with $\operatorname{NR}_{1}^{v}\left(\mathbf{y}\right) = \Char{y_3 = v}.$
    Thus, given that only Agent $3$ affects the preferences of Agent $1$, the choice rule is expressed as
\[
    \operatorname{R}_{1}\left(v  \mid \operatorname{NR}^\mathcal{C}_1\left(\mathbf{y}\right),\mathcal{C}\right)=\dfrac{\exp\left(\bar{u}_{1,v,\mathcal{C}}\left(\Char{y_3 = v}\right)\right)}{\sum_{v'\in\mathcal{C}}\exp\left(\bar{u}_{1,v',\mathcal{C}}\left(\Char{y_3 = v'}\right)\right)}.
    \]
Assumption~\ref{ass: A2} holds if $\bar{u}_{1,v,\mathcal{C}}(1) > \bar{u}_{1,v,\mathcal{C}}(0)$ or $\bar{u}_{1,v,\mathcal{C}}(1) < \bar{u}_{1,v,\mathcal{C}}(0)$ for all $\mathcal{C}$. This requires peer effects in preferences to be either positive or negative for each $v$ and all $\mathcal{C}$. 
\hfill $\square$

\end{example}
\setcounter{example}{\value{aux}}

The fourth assumption imposes some restrictions on the network of each person. 
\begin{assumption}[Network]\label{ass: A3}
    For each $a$, 
    if $\abs{\mathcal{NCR}_a} \geq 1$ then $\abs{\mathcal{NC}_a\setminus\mathcal{NR}_a} + \abs{\mathcal{NR}_a\setminus\mathcal{NC}_a} \geq 1$.
\end{assumption}
\noindent Assumption~\ref{ass: A3} is an exclusion restriction. It states that if the agent has a peer that affects both consideration and preferences, then the agent also has at least another peer that affects either only consideration or only preferences. The choice made by such a peer provides an exclusion restriction. Without Assumption~\ref{ass: A3} the scenario where $\mathcal{NC}_a=\mathcal{NR}_a$ is observationally equivalent to the one in which $\mathcal{NC}_a=\emptyset$. Note that we do not need the two sets of agents to be nonempty. If only one of the peer-effect mechanisms operates in practice, our results will allow us to state whether the interdependencies in choices are due to peer effects in preferences or in consideration.

\setcounter{aux}{\value{example}}
\setcounter{example}{\value{eg2}}
\begin{example}[continued]
    Agent 1 is the only agent for whom $\abs{\mathcal{NCR}_1} = 1 \geq 1$. 
    But we also have that 
    $\abs{\mathcal{NC}_1\setminus\mathcal{NR}_1} + \abs{\mathcal{NR}_1\setminus\mathcal{NC}_1} = 1 \geq 1$. Thus, Assumption~\ref{ass: A3} is satisfied.
    \hfill $\square$
\end{example}
\setcounter{example}{\value{aux}}

\subsection{Equilibrium Behavior}\label{sec: equilibrium and mistakes}

\noindent In this subsection, we define a notion of equilibrium in the network system, i.e., the invariant distribution in the Markov process, and establish its existence and uniqueness.

The i.i.d. Poisson alarm clocks, which determine the revision process, guarantee that each time, at most, one agent revises her selection almost surely. Thus, the transition rates between choice configurations that differ in more than one agent changing the current selection are zero. This facilitates identification as there are fewer terms to recover \citep{blevins2017identifying,blevins2018identification}. It also rules out some mechanisms for multiple equilibria in the data-generating process. Formally, the transition rate from choice configuration $\mathbf{y}$ to any different one $\mathbf{y}'$ is
\begin{equation*}
\operatorname{w}\left( \mathbf{y}' \mid \mathbf{y}\right) =\left\{ 
\begin{array}{lcc}
0 & \text{if} & \sum\nolimits_{a\in \mathcal{A}}\mathds{1}\left( y_{a}'\neq
y_{a}\right) >1 \\ 
\sum\nolimits_{a\in \mathcal{A}}\lambda_{a}\operatorname{P}_{a}\left(y'_a \mid \mathbf{y}\right) \mathds{1}\left( y_{a}'\neq y_{a}\right) & 
\text{if} & \sum\nolimits_{a\in \mathcal{A}}\mathds{1}\left( y_{a}'\neq
y_{a}\right) =1
\end{array}
\right. .  \label{T}
\end{equation*}
In the statistical literature on continuous-time Markov processes, these transition rates are the off-diagonal terms of the \textit{transition rate matrix} (also known as the \textit{infinitesimal generator matrix}). The diagonal terms are simply given by $\operatorname{w}\left( \mathbf{y}\mid \mathbf{y}\right) =-\sum\nolimits_{\mathbf{y}%
'\in \mathcal{Y}^{A}\setminus \left\{ \mathbf{y}\right\}
}\operatorname{w}\left( \mathbf{y}'\mid \mathbf{y}\right)$.

We indicate by $\mathcal{W}$ the transition rate matrix. In our model, the number of possible choice configurations is $\left(Y+1\right)^{A}$. Thus, $\mathcal{W}$ is a $\left(Y+1\right)^{A}\times \left(Y+1\right)^{A}$ matrix. To avoid ambiguity in the exposition, we let the choice configurations be ordered according to the lexicographic order. Formally, let $\iota\left( \mathbf{y}\right) \in\left\{1,2,\dots,\left(Y+1\right)^{A}\right\} $ be the position of $\mathbf{y}$ according to the lexicographic order. Then, $\mathcal{W}_{\iota \left(\mathbf{y}\right) \iota \left(\mathbf{y}'\right) }=\operatorname{w}\left(\mathbf{y}'\mid \mathbf{y}\right)$.

The system in equilibrium is characterized by an invariant distribution $\mu: \mathcal{Y}^{A} \rightarrow\left(0,1\right)$, with $\sum\nolimits_{\mathbf{y}\in \mathcal{Y}^{A}}\mu \left( \mathbf{y}\right) =1$, of the dynamic process with transition rate matrix $\mathcal{W}$. This equilibrium behavior relates to the transition rate matrix in a linear fashion $\mu \mathcal{W}=\mathbf{0}$.

The next proposition shows equilibrium existence. It also states that the same conditions guarantee the equilibrium is unique and has full support on the choice configurations. The full support feature is important for us as we rely on choice variation to recover the model. 

\begin{proposition}\label{prop: unique equilibrium}
If Assumptions~\ref{ass: MM},~\ref{ass: A1}(i), and~\ref{ass: A2}(i) hold, then there exists a unique equilibrium $\mu$ with full support.
\end{proposition}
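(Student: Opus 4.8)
The plan is to reduce the statement to the classical fact that an irreducible continuous-time Markov chain on a finite state space has a unique stationary distribution, and that this distribution is automatically strictly positive. Since $\mathcal{Y}^{A}$ is finite and $\mathcal{M}$ is a bona fide generator (nonnegative off-diagonal entries, rows summing to zero), essentially the only thing that really needs an argument is irreducibility; the rest is textbook.

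First I would show that Assumptions~\ref{ass: MM},~\ref{ass: A1}(i), and~\ref{ass: A2}(i) jointly imply $\operatorname{P}_{a}(v\mid\mathbf{y})>0$ for every $a\in\mathcal{A}$, every $v\in\mathcal{Y}$, and every $\mathbf{y}\in\mathcal{Y}^{A}$. By Assumption~\ref{ass: A2}(i) there is a set $\mathcal{C}^{*}$ with $\operatorname{C}_{a}(\mathcal{C}^{*}\mid\mathbf{y},\mathcal{NC}_{a},\mathcal{Y})>0$ and $\operatorname{R}_{a}(v\mid\mathbf{y},\mathcal{NR}_{a},\mathcal{C}^{*})>0$ --- for $v=0$ this relies on the convention $\operatorname{Q}_{a}(0\mid\cdot)\equiv1$ together with Assumption~\ref{ass: A1}(i), which is precisely the remark after Assumption~\ref{ass: A2} that ``every alternative can be picked with a positive probability''. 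Keeping only the $\mathcal{C}=\mathcal{C}^{*}$ term in~\eqref{eq: main} and using Assumption~\ref{ass: MM} to identify $\operatorname{C}_{a}$ with the product displayed there, I get $\operatorname{P}_{a}(v\mid\mathbf{y})\geq\operatorname{R}_{a}(v\mid\mathbf{y},\mathcal{NR}_{a},\mathcal{C}^{*})\,\operatorname{C}_{a}(\mathcal{C}^{*}\mid\mathbf{y},\mathcal{NC}_{a},\mathcal{Y})>0$.

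Next I would use this to establish irreducibility. Fix configurations $\mathbf{y}$ and $\mathbf{y}'$ and list the coordinates on which they differ as $a_{1},\dots,a_{k}$. Switching $a_{1}$'s action to $y'_{a_{1}}$ changes a single coordinate, so the corresponding transition has rate $\lambda_{a_{1}}\operatorname{P}_{a_{1}}(y'_{a_{1}}\mid\cdot)>0$ (using $\lambda_{a_{1}}>0$ and the previous step); repeating for $a_{2},\dots,a_{k}$ produces a finite chain of positive-rate one-step transitions from $\mathbf{y}$ to $\mathbf{y}'$. Hence all states communicate and $\mathcal{M}$ is irreducible.

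To finish, I would uniformize: pick $\Lambda>\max_{\mathbf{y}}\abs{\operatorname{m}(\mathbf{y}\mid\mathbf{y})}$ and set $P=I+\Lambda^{-1}\mathcal{M}$, a stochastic matrix that is irreducible (same off-diagonal support as $\mathcal{M}$) and aperiodic (strictly positive diagonal). Perron--Frobenius then yields a unique stationary distribution $\mu$ of $P$, strictly positive on $\mathcal{Y}^{A}$, and $\mu P=\mu$ is equivalent to $\mu\mathcal{M}=\mathbf{0}$; so $\mu$ is the unique invariant distribution of the process and has full support, as claimed. The only genuinely delicate point is the positivity claim in the first step --- in particular verifying that the default alternative too is chosen with positive probability from every configuration; once that is secured, everything else is routine finite-state Markov chain theory.
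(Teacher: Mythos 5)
Your proposal is correct and follows essentially the same route as the paper: use Assumptions~\ref{ass: A1}(i) and~\ref{ass: A2}(i) to get $\operatorname{P}_a(v\mid\mathbf{y})>0$ for all $a,v,\mathbf{y}$, deduce irreducibility by flipping the differing coordinates one at a time, and invoke the standard existence/uniqueness result for irreducible finite-state continuous-time chains (which the paper simply cites and you prove via uniformization). The one point you rightly flag as delicate --- positivity for the default, which Assumption~\ref{ass: A2}(i) only addresses for $v\in\mathcal{C}\setminus\{0\}$ --- is handled no more carefully in the paper's own proof, so your treatment matches it.
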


\section{Empirical Content of the Model}\label{sec: model identification}
\noindent This section offers restrictions under which the researcher can uniquely recover the set of connections, $\mathcal{NC}_{a}$ and $\mathcal{NR}_{a}$, the consideration mechanism, $\operatorname{Q}_a$, the choice rule, $\operatorname{R}_a$, and the Poisson alarm clock, $\lambda_a$, for every Agent $a$. We separate the identification analysis in two parts. We first show how to recover the set of connections, choice rules, and consideration probabilities from the CCPs, $\operatorname{P}=\left( \operatorname{P}_{a}\right)_{a\in \mathcal{A}}$. We then show how to recover $\operatorname{P}$.

\begin{rem}
In Online Appendix B, we run simulations from our motivating example to show that we can estimate CCPs and parts of the model nonparametrically using our identification ideas step-by-step. We also show that estimation becomes harder with more agents and/or alternatives. In the application, we use a parametric estimator to circumvent these difficulties.     
\end{rem}  

\subsection{Identification of the Model From \texorpdfstring{$\operatorname{P}$}{the Conditional Choice Probabilities}}\label{PIP}

\noindent The identification strategy we offer builds on a sequence of steps. We initially recover the network structure in three stages: First, we recover the reference group of every agent. Second, we identify whether a given peer affects consideration only or preferences. Lastly, we show how to distinguish between a peer that affects preferences only (preference-only peer) and a peer that affects consideration and preferences (consideration-preference peer). We finally use this information to recover consideration probabilities and choice rules. 

Along the analysis we assume a mild ``regularity condition.'' We briefly discuss the role of this condition below and formalize it in Online Appendix A.
\bigskip

\noindent \textbf{Network} The key idea in recovering the network is that changes in the choices made by the peers of a given agent induce variation in her CCPs \emph{and} that different types of peers induce a different type of variation. To see this, note that $\operatorname{P}_{a}$ can be rewritten as%
\begin{align*}
\operatorname{P}_{a}\left( v \mid \mathbf{y}\right) &=\operatorname{Q}_{a}\left(v \mid \operatorname{NC}_{a}^{v}\left(\mathbf{y}\right)\right)\times \operatorname{D}_{a}\left(v \mid \operatorname{NR}_{a}^{\mathcal{Y}}\left(\mathbf{y}\right),\operatorname{NC}_{a}^{\mathcal{Y}\setminus\{v\}}\left(\mathbf{y}\right)   \right),
\end{align*}
\noindent where the second term in the right-hand side is given by
{\footnotesize
\begin{align*}
\operatorname{D}_{a}\left(v \mid \operatorname{NR}_{a}^{\mathcal{Y}}\left(\mathbf{y}\right),\operatorname{NC}_{a}^{\mathcal{Y}\setminus\{v\}}\left(\mathbf{y}\right)   \right)&=\sum\nolimits_{ 
\mathcal{C}\subseteq\mathcal{Y}\setminus\{v\}}\operatorname{R}_{a}\left( v \mid \operatorname{NR}^{\mathcal{C}\cup\{v\}}_{a}\left(\mathbf{y}\right),\mathcal{C}\cup\{v\}\right)\operatorname{C}_a\left(\mathcal{C}\mid \operatorname{NC}_a^{\mathcal{Y}\setminus\{v\}}(\mathbf{y}),\mathcal{Y}\setminus\{v\}\right).
\end{align*}}

\noindent In words, the observed probability that $v$ is picked equals the product of the probabilities that it is considered and that it is picked when considered. Note that consideration-only peers who select option $v$ enter only the first term. In addition, consideration and/or preference peers who select alternatives other than $v$ affect only the second term. These two observations allow us to use certain changes in $\ln{\operatorname{P}_a}$ to identify the network.

Define $\Delta_{a'}^{v}$ as a linear operator that indicates the increment of a function when the choice of Agent $a'$ in $\mathbf{y}$ changes to $v$. Formally, given $f:\mathcal{Y}^{A}\to\Real$, let $\Delta_{a'}^{v}f(\mathbf{y})=f(\mathbf{y}^{v}_{a'})-f(\mathbf{y})$,
where $\mathbf{y}^{v}_{a'}$ denotes the vector in which the $a'$-th component of $\mathbf{y}$ is replaced by $v$. 

We first identify the reference group of Agent $a$ by using changes in her CCPs. Intuitively, Agent $a'$ is in the reference group of Agent $a$ if changing her choice in the choice configuration affects the decision of Agent $a$. Specifically, the implied difference in $\ln{\operatorname{P}_a}$ is given by
\begin{align}\label{eq: delta log}
&\Delta_{a'}^{v}\ln{\operatorname{P}_{a}\left( v \mid  \mathbf{0}\right)} = \Delta_{a'}^{v}\ln{\operatorname{Q}_{a}\left(v \mid \operatorname{NC}_{a}^{v}\left(\mathbf{0}\right)\right)} + \Delta_{a'}^{v}\ln{\operatorname{D}_{a}\left(v \mid \operatorname{NR}_{a}^{\mathcal{Y}}\left(\mathbf{0}\right),\operatorname{NC}_{a}^{\mathcal{Y}\setminus\{v\}}\left(\mathbf{0}\right)   \right)},
\end{align}
where the zero vector $\mathbf{0}$ denotes the configuration where everyone picks the default. Each term in Equation~\eqref{eq: delta log} relates to one mechanism of peer effects: the first term reflects (if present) the peer effect in consideration. The second term captures the peer effect in preferences. 

When peer effects in consideration and preferences are of the same sign, then, under Assumptions~\ref{ass: MM} -~\ref{ass: A2}, $\Delta_{a'}^{v}\ln{\operatorname{P}_{a}\left( v \mid \mathbf{0}\right)} \neq 0$ if and only if Agent $a'$ is in the reference group of Agent $a$ (i.e., $a' \in \mathcal{N}_a$). When the interaction effects are of different signs, the ``if'' part requires a ``regularity condition'' that we discuss in detail in Online Appendix A. This extra condition rules out the possibility that peer effects in consideration and preferences be of opposite signs \emph{and} of equal magnitude. We next state that under all our restrictions the reference groups (even if they are empty) can be recovered from the CCPs.

\begin{proposition}\label{prop: identification of peers}
If Assumptions~\ref{ass: MM} - \ref{ass: A2} hold, then $\mathcal{N}_a$ is identified for each $a\in\mathcal{A}$.
\end{proposition}

\setcounter{example}{\value{eg2}}
\begin{example}[continued] The probability that Agent 1 selects option 1 satisfies
\begin{align*}
\ln{\operatorname{P}_{1}\left( 1 \mid \mathbf{y}\right)} =
&\ln{\operatorname{Q}_{1}\left( 1 \mid \operatorname{NC}^{1}_{1} = \Char{y_2 = 1}+\Char{y_3 = 1}\right)}   \\ 
&+\ln{\operatorname{D}_{1}\left(1 \mid \operatorname{NR}_{1}^{\left\{1,2\right\}} = \left(\Char{y_3 = 1}, \Char{y_3 = 2}\right), \operatorname{NC}_{1}^{2} = \Char{y_2 = 2}+\Char{y_3 = 2}\right)}. 
\end{align*}
Take $\mathbf{y} = \mathbf{0}$. The changes in $\ln{\operatorname{P}_{1}}$ when we change the other agents choices from 0 to 1 are
\begin{align*}
&\Delta_{2}^{1}\ln{\operatorname{P}_{1}\left( 1 \mid \mathbf{0}\right)} = \ln{\operatorname{Q}_{1}\left( 1 \mid \operatorname{NC}_{1}^{1} = 1\right)} - \ln{\operatorname{Q}_{1}\left( 1 \mid \operatorname{NC}_{1}^{1} = 0\right)} \neq 0,
\\
&\Delta_{3}^{1}\ln{\operatorname{P}_{1}\left( 1 \mid \mathbf{0}\right)} = \ln{\operatorname{Q}_{1}\left( 1 \mid \operatorname{NC}_{1}^{1} = 1\right)} - \ln{\operatorname{Q}_{1}\left( 1 \mid \operatorname{NC}_{1}^{1} = 0\right)}  \\
&\qquad \qquad \qquad \quad +\ln{\operatorname{D}_{1}\left(1 \mid \operatorname{NR}_{1}^{\left\{1,2\right\}} = \left(1, 0\right), \operatorname{NC}_{1}^{2} = 0\right)} - \ln{\operatorname{D}_{1}\left(1 \mid \operatorname{NR}_{1}^{\left\{1,2\right\}} = \left(0, 0\right), \operatorname{NC}_{1}^{2} = 0\right)} \neq 0,
\\
&\Delta_{4}^{1}\ln{\operatorname{P}_{1}\left( 1 \mid \mathbf{0}\right)} = 0.
\end{align*}
\noindent Proposition~\ref{prop: identification of peers} identifies the peers of Agent 1 as those for which these changes differ from 0. Following this idea, we correctly recover $\mathcal{N}_1 = \left\{2,3\right\}.$
\hfill $\square$
\end{example}
\setcounter{example}{\value{aux}}

Next, we identify whether Agent $a'$ in $\mathcal{N}_{a}$ affects Agent $a$'s preferences or consideration only. Note that differences in $\ln{\operatorname{P}_a}$ allow us to recover the reference groups, but these differences are silent about the mechanism by which the interactions happen. To see why, note that (for instance) a nonzero $\Delta_{a'}^{v}\ln{\operatorname{P}_{a}\left( v \mid \mathbf{0}\right)}$ could be generated from the first summand in Equation~\eqref{eq: delta log} via $\operatorname{Q}_{a}$ and/or from the second summand via $\operatorname{R}_{a}$. But these two terms differ in that the second summand varies with the number of peers that select alternatives that are \emph{different} from $v$, while the first term does not. Thus, the two mechanisms can be set apart by a second shift in $\ln{\operatorname{P}_{a}\left( v \mid \mathbf{0}\right)}$. Let $a',a'' \in \mathcal{N}_{a}$ and $w \in \mathcal{Y}\setminus\{0\}$ with $w \neq v$. Since $\Delta_{a}^v$ is a linear operator, we can define a double difference as follows
\begin{align*}
\Delta_{a''}^{w}\Delta_{a'}^{v}\ln{\operatorname{P}_{a}\left( v \mid \mathbf{0}\right)} &\equiv\Delta_{a''}^{w}\left[\ln{\operatorname{P}_{a}\left( v \mid \mathbf{0}^{v}_{a'}\right)} -\ln{\operatorname{P}_{a}\left( v \mid \mathbf{0}\right)}\right]=\Delta_{a''}^{w}\ln{\operatorname{P}_{a}\left( v \mid \mathbf{0}^{v}_{a'}\right)} -\Delta_{a''}^{w}\ln{\operatorname{P}_{a}\left( v \mid \mathbf{0}\right)}\\
&=\left[\ln{\operatorname{P}_{a}\left( v \mid \left(\mathbf{0}^{v}_{a'}\right)^{w}_{a''}\right)} -\ln{\operatorname{P}_{a}\left( v \mid \mathbf{0}^{v}_{a'}\right)}\right]-\left[\ln{\operatorname{P}_{a}\left( v \mid \mathbf{0}^{w}_{a''}\right)} -\ln{\operatorname{P}_{a}\left( v \mid \mathbf{0}\right)}\right].
\end{align*}
Specifically, we have that
\begin{align*}
\Delta_{a''}^{w}\Delta_{a'}^{v}\ln{\operatorname{P}_{a}\left( v \mid \mathbf{0}\right)} = \Delta_{a''}^{w}\Delta_{a'}^{v}\ln{\operatorname{Q}_{a}\left(v \mid \operatorname{NC}_{a}^{v}\left(\mathbf{0}\right)\right)} + \Delta_{a''}^{w}\Delta_{a'}^{v} \ln{\operatorname{D}_{a}\left(v \mid \operatorname{NR}_{a}^{\mathcal{Y}}\left(\mathbf{0}\right),\operatorname{NC}_{a}^{\mathcal{Y}\setminus\{v\}}\left(\mathbf{0}\right)   \right)}.
\end{align*}
Note that $\Delta_{a''}^{w}\Delta_{a'}^{v}\ln{\operatorname{Q}_{a}\left(v \mid \operatorname{NC}_{a}^{v}\left(\mathbf{0}\right)\right)} = 0$ since $\operatorname{Q}_{a}\left(v \mid \cdot \right)$ does not depend on the number of peers who picked $w$. Also, if $a'$ is a consideration-only peer ($a'\in\mathcal{NC}_a\setminus\mathcal{NR}_a$), then 
\[
\Delta_{a'}^{v}\ln{\operatorname{D}_{a}\left(v \mid \operatorname{NR}_{a}^{\mathcal{Y}}\left(\mathbf{0}\right),\operatorname{NC}_{a}^{\mathcal{Y}\setminus\{v\}}\left(\mathbf{0}\right)   \right)}=0.
\]
As a result, $\Delta_{a''}^{w}\Delta_{a'}^{v}\ln{\operatorname{P}_{a}\left( v \mid \mathbf{0}\right)} = 0$ if Agent $a'$ is a consideration-only peer.
A key observation is that if Agent $a'$ affects Agent $a$'s preferences, then the second summand in Equation~\eqref{eq: delta log} will not disappear after switching Agent $a''$ from $0$ to $w$. In summary, under Assumptions~\ref{ass: A1}-\ref{ass: A2} and the regularity condition, $a'\in \mathcal{NR}_{a}$ if and only if
\begin{align*}
\Delta_{a''}^{w}\Delta_{a'}^{v}\ln{\operatorname{P}_{a}\left( v \mid \mathbf{0}\right)} \neq 0 \text{ for some } a''\in \mathcal{N}_a.
\end{align*}
Thus, by checking the double difference for each agent in the reference group of Agent $a$, we can divide her reference group into consideration-only peers and preference peers (who may or may not affect consideration). This identification strategy requires $Y \geq 2$ and $\abs{\mathcal{N}_a} \geq 2$.

Note that we can also separate the preference peers in two sets by the magnitude of the changes in CCPs. Specifically, for $a'\in\mathcal{NCR}_a$ and $a''\in\mathcal{NR}_a\setminus\mathcal{NC}_a$, we have that
\[
\Delta_{a'}^{v}\ln{\operatorname{P}_{a}\left( v \mid  \mathbf{0}\right)}\neq \Delta_{a''}^{v}\ln{\operatorname{P}_{a}\left( v \mid  \mathbf{0}\right)}.
\]
This allows us to separate the preference peers into two groups that we define as $\mathcal{M}'$ and $\mathcal{M}''$. Although we know that one of these sets is $\mathcal{NCR}_a$ and the other is $\mathcal{NR}_a\setminus\mathcal{NC}_a$, without further restrictions, we cannot tell which is which. We address this issue in the end.

\begin{proposition}\label{prop: identification of pref group}
Suppose Assumptions~\ref{ass: MM} - \ref{ass: A2} hold. For any $a\in\mathcal{A}$, if $Y\geq 2$ and $\abs{\mathcal{N}_a} \geq 2$, then $\mathcal{NC}_a \setminus\mathcal{NR}_a$ and $\mathcal{NR}_a = \mathcal{M}' \cup \mathcal{M}''$ are identified. Also, $\mathcal{NCR}_a \in \{\mathcal{M}',\mathcal{M}''\}$.   
\end{proposition}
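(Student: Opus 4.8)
The plan is to build on two things already in hand: the rewriting $\operatorname{P}_{a}\left(v\mid\mathbf{y}\right)=\operatorname{Q}_{a}\left(v\mid\operatorname{NC}_{a}^{v}\left(\mathbf{y}\right)\right)\cdot S_{a}\left(v\mid\mathbf{y}\right)$, where $S_{a}\left(v\mid\mathbf{y}\right)$ is the second (``pick-when-considered'') factor displayed just above the statement; and the fact that $\mathcal{N}_{a}$ is already identified under the same hypotheses (Proposition~\ref{prop: identification of peers}). I would evaluate every difference at the origin $\mathbf{0}$ and its one-step neighbors, which keeps the count bookkeeping trivial: when a single agent $a'$ switches from $0$ to $v\neq 0$, the only reference-group counts that move are $\operatorname{NC}_{a}^{v}$ (if $a'\in\mathcal{NC}_{a}$) and $\operatorname{NR}_{a}^{v}$ (if $a'\in\mathcal{NR}_{a}$), while every count $\operatorname{NC}_{a}^{v'}$, $\operatorname{NR}_{a}^{v'}$ with $v'\notin\{0,v\}$ stays put.

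\emph{Separating consideration-only peers.} Fix $a'\in\mathcal{N}_{a}$, pick $v,w\in\mathcal{Y}\setminus\{0\}$ with $v\neq w$ (possible since $Y\geq 2$) and $a''\in\mathcal{N}_{a}\setminus\{a'\}$ (possible since $\abs{\mathcal{N}_{a}}\geq 2$), and decompose $\Delta_{a''}^{w}\Delta_{a'}^{v}\ln\operatorname{P}_{a}\left(v\mid\mathbf{0}\right)=\Delta_{a''}^{w}\Delta_{a'}^{v}\ln\operatorname{Q}_{a}\left(v\mid\operatorname{NC}_{a}^{v}\left(\mathbf{0}\right)\right)+\Delta_{a''}^{w}\Delta_{a'}^{v}\ln S_{a}\left(v\mid\mathbf{0}\right)$. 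The $\operatorname{Q}$-term is identically $0$ because $\operatorname{Q}_{a}\left(v\mid\operatorname{NC}_{a}^{v}\right)$ never sees peers who picked $w\neq v$. If $a'\in\mathcal{NC}_{a}\setminus\mathcal{NR}_{a}$, then switching $a'$ from $0$ to $v$ leaves every $\operatorname{R}_{a}$ term in $S_{a}$ untouched (since $a'\notin\mathcal{NR}_{a}$) and leaves every factor $\operatorname{Q}_{a}\left(v'\mid\operatorname{NC}_{a}^{v'}\right)$, $v'\in\mathcal{Y}\setminus\{v\}$, untouched (since only $\operatorname{NC}_{a}^{v}$ moves), so $\Delta_{a'}^{v}S_{a}$ vanishes at $\mathbf{0}$ and at $\mathbf{0}_{a''}^{w}$ and the whole cross difference is $0$, for every admissible $(a'',v,w)$. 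For the converse I would show that if $a'\in\mathcal{NR}_{a}$ there is a choice of $(a'',v,w)$ with nonzero cross difference: by Assumption~\ref{ass: A2}(iii) the increment $\Delta_{a'}^{v}S_{a}$ is, at both $\mathbf{0}$ and $\mathbf{0}_{a''}^{w}$, a nonnegatively-weighted sum of strictly and commonly signed $\operatorname{R}_{a}$-increments, so $\Delta_{a''}^{w}\Delta_{a'}^{v}\ln S_{a}$ is the logarithm of a ratio of two such sums; moving $a''$ from $0$ to $w$ perturbs either the $\operatorname{R}_{a}$-increments (if $a''\in\mathcal{NR}_{a}$, the baseline $w$-count shifts) or the consideration weights (if $a''\in\mathcal{NC}_{a}$), and Assumptions~\ref{ass: A1} and~\ref{ass: A2} together with the regularity condition~(Assumption~\ref{ass: separability}) prevent this ratio from equalling $1$ for every admissible choice. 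Hence $a'\in\mathcal{NC}_{a}\setminus\mathcal{NR}_{a}$ iff all its cross differences vanish, which pins down $\mathcal{NC}_{a}\setminus\mathcal{NR}_{a}$ and therefore $\mathcal{NR}_{a}=\mathcal{N}_{a}\setminus\left(\mathcal{NC}_{a}\setminus\mathcal{NR}_{a}\right)$.

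\emph{Splitting $\mathcal{NR}_{a}$.} Now that $\mathcal{NR}_{a}$ is known, fix any $v\neq 0$ and, for each $a'\in\mathcal{NR}_{a}$, read off $\Delta_{a'}^{v}\ln\operatorname{P}_{a}\left(v\mid\mathbf{0}\right)=\Delta_{a'}^{v}\ln\operatorname{Q}_{a}\left(v\mid\operatorname{NC}_{a}^{v}\left(\mathbf{0}\right)\right)+\Delta_{a'}^{v}\ln S_{a}\left(v\mid\mathbf{0}\right)$. The second summand is the \emph{same} number $\delta_{R}$ for all $a'\in\mathcal{NR}_{a}$: by Assumption~\ref{ass: A2}(ii) $\operatorname{R}_{a}$ depends only on the count vector, not on identities, and an individual switch from $\mathbf{0}$ leaves the consideration weights in $S_{a}$ fixed (only $\operatorname{NC}_{a}^{v}$ could move, and that index is excluded from $S_{a}$), so neither $S_{a}\left(v\mid\mathbf{0}_{a'}^{v}\right)$ nor $S_{a}\left(v\mid\mathbf{0}\right)$ depends on which preference peer $a'$ is. The first summand equals $0$ when $a'\notin\mathcal{NC}_{a}$ and equals $\ln\left[\operatorname{Q}_{a}\left(v\mid 1\right)/\operatorname{Q}_{a}\left(v\mid 0\right)\right]$ when $a'\in\mathcal{NCR}_{a}$, and the latter is nonzero by Assumption~\ref{ass: A1}(iii). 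Therefore $\Delta_{a'}^{v}\ln\operatorname{P}_{a}\left(v\mid\mathbf{0}\right)$ equals $\delta_{R}$ on $\mathcal{NR}_{a}\setminus\mathcal{NC}_{a}$ and the distinct value $\delta_{R}+\ln\left[\operatorname{Q}_{a}\left(v\mid 1\right)/\operatorname{Q}_{a}\left(v\mid 0\right)\right]$ on $\mathcal{NCR}_{a}$; grouping $\mathcal{NR}_{a}$ by this value gives $\mathcal{M}'$ and $\mathcal{M}''$ (either one possibly empty), a partition that does not depend on the chosen $v$ since $\mathcal{NCR}_{a}$ does not, and $\mathcal{NCR}_{a}$ is one of the two cells. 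Because neither $\delta_{R}$ nor the sign of $\ln\left[\operatorname{Q}_{a}\left(v\mid 1\right)/\operatorname{Q}_{a}\left(v\mid 0\right)\right]$ is known a priori, we cannot say which cell it is — exactly the conclusion claimed. The main obstacle is the converse in the middle step: ruling out that the consideration and preference channels cancel in the cross difference when $a'$ affects preferences, which is precisely the role of Assumption~\ref{ass: separability} and the nonexponentiality built into Assumption~\ref{ass: A1}(iii).
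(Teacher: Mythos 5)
Your overall strategy is the paper's: decompose $\ln\operatorname{P}_{a}(v\mid\mathbf{y})$ into $\ln\operatorname{Q}_{a}(v\mid\operatorname{NC}_{a}^{v}(\mathbf{y}))$ plus the log of the ``pick-when-considered'' factor, kill the $\operatorname{Q}$-term with a cross difference in a second alternative $w\neq v$, and then split $\mathcal{NR}_{a}$ by the magnitude of single differences. Your second step (the partition into $\mathcal{M}'$ and $\mathcal{M}''$ via the common preference increment $\delta_{R}$ plus the extra $\ln[\operatorname{Q}_{a}(v\mid1)/\operatorname{Q}_{a}(v\mid0)]$ for consideration-preference peers, nonzero by Assumption~\ref{ass: A1}(iii)) is correct and matches the paper.

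There is, however, a genuine gap in the converse of your first step, created by your decision to ``evaluate every difference at the origin $\mathbf{0}$ and its one-step neighbors.'' Assumption~\ref{ass: separability}(ii) only asserts the \emph{existence} of triples $(v_i,w_i,\mathbf{nr}_i,\mathbf{nc}_i)$ at which the relevant double log-differences are nonzero; the configurations $(\mathbf{nr}_i,\mathbf{nc}_i)$ are existentially quantified over $\operatorname{Nrc}_a$ and need not be the zero vector. So if you test only base configurations built from $\mathbf{0}$, a peer $a'\in\mathcal{NR}_a$ could have all of its origin-based cross differences vanish (the consideration and preference channels, or the curvature of $\operatorname{R}_a$, canceling exactly at that one point) while the nonvanishing guaranteed by the regularity condition occurs elsewhere; you would then misclassify $a'$ as consideration-only. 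The fix is exactly what the paper does: the identification criterion for $a'\in\mathcal{NC}_a\setminus\mathcal{NR}_a$ must quantify over \emph{all} base configurations $\mathbf{y}$ with $y_{a'}=y_{a''}=0$ (so that the counts of the remaining peers can realize $(\mathbf{nr}_i,\mathbf{nc}_i)$), and the converse is then closed by choosing $i\in\{1,2,3\}$ according to whether $a''$ is preference-only, consideration-only, or both, and picking $\mathbf{y}$ to match the corresponding $(\mathbf{nr}_i,\mathbf{nc}_i)$ from Assumption~\ref{ass: separability}(ii). Your appeal to ``every admissible choice'' of $(a'',v,w)$ alone does not deliver this. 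With that enlargement of the test set, your argument is complete and coincides with the paper's.
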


\setcounter{aux}{\value{example}}
\setcounter{example}{\value{eg2}}
\begin{example}[continued] Recall that we identified that $\mathcal{N}_1 = \left\{2,3\right\}$. We now want to know what type of peers Agents 2 and 3 are. Next we display two double differences
{\small
\begin{align*}
\Delta_{3}^{2}\Delta_{2}^{1}\ln{\operatorname{P}_{1}\left( 1 \mid \mathbf{0}\right)} = & 0, \\
\Delta_{2}^{2}\Delta_{3}^{1}\ln{\operatorname{P}_{1}\left( 1 \mid \mathbf{0}\right)} = &\left[\ln{\operatorname{D}_{1}\left(1 \mid \operatorname{NR}_{1}^{\left\{1,2\right\}} = \left(1, 0\right), \operatorname{NC}_{1}^{2} = 1\right)} - \ln{\operatorname{D}_{1}\left(1 \mid \operatorname{NR}_{1}^{\left\{1,2\right\}} = \left(0, 0\right), \operatorname{NC}_{1}^{2} = 1\right)}\right] \\
-&\left[\ln{\operatorname{D}_{1}\left(1 \mid \operatorname{NR}_{1}^{\left\{1,2\right\}} = \left(1, 0\right), \operatorname{NC}_{1}^{2} = 0\right)} - \ln{\operatorname{D}_{1}\left(1 \mid \operatorname{NR}_{1}^{\left\{1,2\right\}} = \left(0, 0\right), \operatorname{NC}_{1}^{2} = 0\right)}\right] \neq 0.
\end{align*}
}%
\noindent In the first line, we first change the choice of Agent 2 from 0 to 1. Since Agent 2 is a consideration-only peer, this difference does not depend on the number of other peers selecting options different from 1. Thus, when we further change the choice of Agent 3 from 0 to 2, the result is 0, and we conclude that Agent 2 is a consideration-only peer. 

In the second line, we first change the choice of Agent 3 from 0 to 1. Since Agent 3 is a consideration-preference peer of Agent 1, this difference depends on the choices made by other peers selecting other alternatives. Thus, when we further change the choice of Agent 2 from 0 to 2, the result differs from 0, and we conclude that Agent 3 is a preference peer. 
\hfill $\square$
\end{example}
\setcounter{example}{\value{aux}}

Finally, we identify the set of consideration-preference peers (i.e., $\mathcal{NCR}_a$) from the group of peers that affect preferences. We discuss identification with and without consideration-only peers separately. By Assumption~\ref{ass: A3}, if $\mathcal{NCR}_a$ is nonempty, then there exists a peer that is either a consideration-only or preference-only peer. Assume that we have already identified two peers such that Agent $a'$ is a consideration-only peer (i.e., $a'\in\mathcal{NC}_a\setminus\mathcal{NR}_a$) and Agent $a''$ affects preferences (i.e., $a''\in\mathcal{NR}_a$). Note that
\begin{align*}
\Delta_{a''}^{v}\Delta_{a'}^{v}\ln{\operatorname{P}_{a}\left( v \mid \mathbf{0}\right)} = \Delta_{a''}^{v}\Delta_{a'}^{v}\ln{\operatorname{Q}_{a}\left(v \mid \operatorname{NC}_{a}^{v}\left(\mathbf{0}\right)\right)}.
\end{align*}

\noindent This is so because, since Agent $a'$ only affects consideration, the second term in Equation~(\ref{eq: delta log}) is zero. Thus, if Assumption~\ref{ass: A1}(iii) holds, $a''\in\mathcal{NCR}_a$ if and only if $\Delta_{a''}^{v}\Delta_{a'}^{v}\ln{\operatorname{P}_{a}\left( v \mid \mathbf{0}\right)} \neq 0$. 

Suppose next that there is no consideration-only peer. We can implement a similar idea by \emph{replicating} the consideration-only peer behavior with a consideration-preference peer and a preference-only one. Note that these two peers can be identified by Proposition~\ref{prop: identification of pref group}. Pick some Agent  $a'\in\mathcal{M}'$ and Agent $a''\in\mathcal{M}''$. We have that
\begin{align*}
\ln{\operatorname{P}_{a}\left( v \mid \mathbf{0}_{a'}^{v}\right)} - \ln{\operatorname{P}_{a}\left( v \mid \mathbf{0}_{a''}^{v}\right)}=(-1)^{\Char{a'\not\in\mathcal{NCR}_a}} (\ln{\operatorname{Q}_{a}\left(v\mid  1\right)} - \ln{\operatorname{Q}_{a}\left(v\mid  0\right)}).
\end{align*}
That is, when we switch a consideration-preference peer from the default to alternative $v$,   $\ln{\operatorname{P}_a}\left( v \mid \mathbf{0}\right)$ changes via two effects, namely, preference and consideration. Importantly, the effect via preferences coincides with the one we get by switching a preference-only peer from the default to alternative $v$. Thus, by subtracting the two effects, we can recover the effect (up to sign) of switching a consideration-only peer from the default to option $v$. 

Finally, take another Agent $a'''$ from either $\mathcal{M}'$ or $\mathcal{M}''$ and implement a double difference to the previous expression by changing the alternative of Agent $a'''$ from the default to $v$. As before, we identify whether Agent $a'''$ is a consideration-preference or preference-only peer by checking whether this double change is different from zero:\footnote{This procedure requires at least three peers in $\mathcal{N}_a$.} 
\begin{align*}
\Delta_{a'''}^{v}\left[\ln{\operatorname{P}_{a}\left( v \mid \mathbf{0}_{a'}^{v}\right)} - \ln{\operatorname{P}_{a}\left( v \mid \mathbf{0}_{a''}^{v}\right)}\right]\neq 0 \iff a'''\in\mathcal{NCR}_a.
\end{align*}
This information allows us to know whether $\mathcal{NCR}_a = \mathcal{M}'$ or $\mathcal{NCR}_a = \mathcal{M}''$.

The last proposition offers final conditions for all the parts of the network to be identified. 

\begin{proposition}\label{prop: identification of groups}
Suppose Assumptions~\ref{ass: MM} - \ref{ass: A3} hold. Suppose also that $\mathcal{NC}_a\setminus\mathcal{NR}_a$ is identified (or known) and $\abs{\mathcal{N}_a} \geq 3 - \abs{\mathcal{NC}_a \setminus\mathcal{NR}_a}$. Then,
$\mathcal{NC}_{a}$ and $\mathcal{NR}_{a}$ are identified.
\end{proposition}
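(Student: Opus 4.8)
\emph{Proof proposal.} The plan is to combine Propositions~\ref{prop: identification of peers} and~\ref{prop: identification of pref group} to reduce the problem to deciding which of two already-identified blocks of preference peers equals $\mathcal{NCR}_a$, splitting into two cases according to whether Agent~$a$ has a consideration-only peer. Throughout I would fix some $v\neq0$ and use the factorization $\operatorname{P}_a(v\mid\mathbf{y})=\operatorname{Q}_a(v\mid\operatorname{NC}_a^v(\mathbf{y}))\,G_a(v\mid\mathbf{y})$, where $G_a(v\mid\mathbf{y})$ is the probability of choosing $v$ conditional on considering it; the key structural fact is that $G_a(v\mid\mathbf{y})$ depends on $\mathbf{y}$ only through the counts $\operatorname{NR}^w_a(\mathbf{y})$ for $w\neq0$ and $\operatorname{NC}^w_a(\mathbf{y})$ for $w\notin\{0,v\}$ (the count $\operatorname{NC}^0_a$ is irrelevant because $\operatorname{Q}_a(0\mid\cdot)\equiv1$). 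The engine is then a ``consideration toggle'': a change in $\mathbf{y}$ that raises $\operatorname{NC}_a^v(\mathbf{y})$ by exactly one while leaving every argument of $G_a$ fixed, so that a difference of $\ln\operatorname{P}_a(v\mid\cdot)$ collapses to a first difference of $\ln\operatorname{Q}_a(v\mid\cdot)$ and two such toggles produce a second difference that is nonzero precisely when Assumption~\ref{ass: A1}(iii) applies.

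\emph{Case 1: $\mathcal{NC}_a\setminus\mathcal{NR}_a\neq\emptyset$.} If $\mathcal{NR}_a=\emptyset$ the claim is immediate ($\mathcal{NCR}_a=\emptyset$, $\mathcal{NC}_a=\mathcal{N}_a$); otherwise I would fix $a'\in\mathcal{NC}_a\setminus\mathcal{NR}_a$ (known) and, for every $a''\in\mathcal{NR}_a=\mathcal{N}_a\setminus(\mathcal{NC}_a\setminus\mathcal{NR}_a)$, compute the double difference $\Delta_{a''}^{v}\Delta_{a'}^{v}\ln\operatorname{P}_a(v\mid\mathbf{0})$. Since switching $a'$ from $0$ to $v$ is a consideration toggle, $\Delta_{a'}^{v}\ln\operatorname{P}_a(v\mid\mathbf{y})=\ln\operatorname{Q}_a(v\mid\operatorname{NC}_a^v(\mathbf{y})+1)-\ln\operatorname{Q}_a(v\mid\operatorname{NC}_a^v(\mathbf{y}))$ whenever $\mathbf{y}$ places $a'$ at $0$; evaluating at $\mathbf{y}=\mathbf{0}$ and at $\mathbf{y}=\mathbf{0}^{v}_{a''}$ (where $\operatorname{NC}_a^v$ equals $0$ and $t:=\Char{a''\in\mathcal{NCR}_a}$, respectively) and differencing gives
\[
\Delta_{a''}^{v}\Delta_{a'}^{v}\ln\operatorname{P}_a(v\mid\mathbf{0})=\bigl[\ln\operatorname{Q}_a(v\mid1+t)-\ln\operatorname{Q}_a(v\mid t)\bigr]-\bigl[\ln\operatorname{Q}_a(v\mid1)-\ln\operatorname{Q}_a(v\mid0)\bigr],
\]
which vanishes if $t=0$ and equals $\bigl[\ln\operatorname{Q}_a(v\mid2)-\ln\operatorname{Q}_a(v\mid1)\bigr]-\bigl[\ln\operatorname{Q}_a(v\mid1)-\ln\operatorname{Q}_a(v\mid0)\bigr]\neq0$ (Assumption~\ref{ass: A1}(iii) at $n=0$) if $t=1$. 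So this test classifies each $a''$, recovers $\mathcal{NCR}_a$, and hence $\mathcal{NC}_a=(\mathcal{NC}_a\setminus\mathcal{NR}_a)\cup\mathcal{NCR}_a$; $\mathcal{NR}_a$ is already known.

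\emph{Case 2: $\mathcal{NC}_a\setminus\mathcal{NR}_a=\emptyset$,} so $\mathcal{NC}_a=\mathcal{NCR}_a\subseteq\mathcal{NR}_a=\mathcal{N}_a$ and $\abs{\mathcal{N}_a}\geq3$. Proposition~\ref{prop: identification of pref group} gives a partition $\mathcal{N}_a=\mathcal{M}'\cup\mathcal{M}''$ with $\{\mathcal{M}',\mathcal{M}''\}=\{\mathcal{NCR}_a,\,\mathcal{NR}_a\setminus\mathcal{NC}_a\}$ in unknown order. If one block is empty, Assumption~\ref{ass: A3} forces $\mathcal{NCR}_a$ to be that block (else $\abs{\mathcal{NCR}_a}\geq1$ while $\abs{\mathcal{NC}_a\setminus\mathcal{NR}_a}+\abs{\mathcal{NR}_a\setminus\mathcal{NC}_a}=0$), and we are done; otherwise I would pick $a'\in\mathcal{M}'$, $a''\in\mathcal{M}''$, and a third $a'''\in\mathcal{N}_a\setminus\{a',a''\}$. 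Here the toggle is supplied by the \emph{pair} $(a',a'')$: since $a',a''\in\mathcal{NR}_a$, replacing ``$a''$ picks $v$'' by ``$a'$ picks $v$'' (everyone else at $0$) keeps $\operatorname{NR}_a^v$ at $1$ and all other $\operatorname{NR}$- and $\operatorname{NC}$-counts (other than $v$) at $0$, while shifting $\operatorname{NC}_a^v$ by $\sigma:=\Char{a'\in\mathcal{NCR}_a}-\Char{a''\in\mathcal{NCR}_a}\in\{+1,-1\}$, i.e. by one consideration step up to an unknown sign. Running this swap with and without $a'''$ also placed at $v$ and forming
\[
D=\bigl[\ln\operatorname{P}_a(v\mid\mathbf{0}^{v}_{a'})-\ln\operatorname{P}_a(v\mid\mathbf{0}^{v}_{a''})\bigr]-\bigl[\ln\operatorname{P}_a\bigl(v\mid(\mathbf{0}^{v}_{a'})^{v}_{a'''}\bigr)-\ln\operatorname{P}_a\bigl(v\mid(\mathbf{0}^{v}_{a''})^{v}_{a'''}\bigr)\bigr]
\]
one gets $D=\sigma\bigl\{[\ln\operatorname{Q}_a(v\mid1)-\ln\operatorname{Q}_a(v\mid0)]-[\ln\operatorname{Q}_a(v\mid1+s)-\ln\operatorname{Q}_a(v\mid s)]\bigr\}$ with $s:=\Char{a'''\in\mathcal{NCR}_a}$, which is $0$ when $s=0$ and nonzero when $s=1$ (by Assumption~\ref{ass: A1}(iii) at $n=0$, and since $\sigma\neq0$), \emph{regardless of the unknown $\sigma$}. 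Hence we learn whether $a'''\in\mathcal{NCR}_a$; since $a'''$ lies in one of the known blocks $\mathcal{M}',\mathcal{M}''$, this identifies which block is $\mathcal{NCR}_a=\mathcal{NC}_a$, and $\mathcal{NR}_a$ is known.

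The hard part is Case~2. I would need to check carefully that the two-preference-peer relabelling is a genuine stand-in for a consideration-only switch---that it moves $\operatorname{NC}_a^v$ by exactly one unit and touches no other count that $\operatorname{P}_a(v\mid\cdot)$ sees---and, more importantly, that the resulting test is sign-free: the quantity whose sign one would a priori need, $\Char{a'\in\mathcal{NCR}_a}$ (equivalently $\sigma$), is exactly the unknown being sought, and it is the passage to the \emph{second} difference that makes $\sigma$ cancel out harmlessly. The remaining work is bookkeeping of which counts stay fixed across the four configurations, plus the boundary cases ($\mathcal{NR}_a=\emptyset$ or an empty $\mathcal{M}$) disposed of by Assumption~\ref{ass: A3}.
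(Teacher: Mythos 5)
Your proposal is correct and follows essentially the same route as the paper: in Case 1 it uses the double difference $\Delta_{a''}^{v}\Delta_{a'}^{v}\ln\operatorname{P}_a(v\mid\mathbf{0})$ anchored at a consideration-only peer together with Assumption~\ref{ass: A1}(iii), and in Case 2 it swaps which of $a'\in\mathcal{M}'$ and $a''\in\mathcal{M}''$ picks $v$ to synthesize a one-unit consideration shift (up to an unknown sign that cancels in the second difference) and then tests a third peer $a'''$, exactly as the paper does. The sign-cancellation point you flag as the "hard part" is indeed the crux, and your bookkeeping of the counts checks out.
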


\setcounter{aux}{\value{example}}
\setcounter{example}{\value{eg2}}
\begin{example}[continued] We learned earlier that Agent 2 is a consideration-only peer of Agent 1. We also learned that Agent 3 is either a preference-only peer or affects both preferences and consideration of Agent 1.  We next establish whether $\mathcal{NC}_1 = \left\{2\right\}$ or $\mathcal{NC}_1 = \left\{2, 3\right\}$. Since Agent 2 is a consideration-only peer, we have that
\[
\Delta_{2}^{1}\ln{\operatorname{P}_{1}\left( 1 \mid \mathbf{0}\right)} = \ln{\operatorname{Q}_{1}\left( 1 \mid \operatorname{NC}_{1}^{1} = 1\right)} - \ln{\operatorname{Q}_{1}\left( 1 \mid \operatorname{NC}_{1}^{1} = 0\right)}.
\]
Changing the alternative of Agent 3 from 0 to 1, we obtain that by Assumption~\ref{ass: A1}(iii)
\begin{align*}
\Delta_{3}^{1}\Delta_{2}^{1}\ln{\operatorname{P}_{1}\left( 1 \mid \mathbf{0}\right)} = &\left[\ln{\operatorname{Q}_{1}\left( 1 \mid \operatorname{NC}_{1}^{1} = 2\right)} - \ln{\operatorname{Q}_{1}\left( 1 \mid \operatorname{NC}_{1}^{1} = 1\right)}\right]  \\
-&\left[\ln{\operatorname{Q}_{1}\left( 1 \mid \operatorname{NC}_{1}^{1} = 1\right)} - \ln{\operatorname{Q}_{1}\left( 1 \mid \operatorname{NC}_{1}^{1} = 0\right)}\right] \neq 0.
\end{align*}
Thus, we identify that Agent 3 is also a consideration peer. That is, $\mathcal{NC}_1 = \left\{2, 3\right\}$. 
\hfill $\square$
\end{example}
\setcounter{example}{\value{aux}}

To sum up, the reference group of Agent $a$ is identified by checking the variation in $\ln{\operatorname{P}_a}$ as we switch other agents from the default alternative to a specific $v$. If, in doing so, we identify that the agent has two or more peers, we can recover the consideration-only peers by using the additive separability of $\ln{\operatorname{P}_a\left( v \mid \mathbf{y}\right)}$ in $\operatorname{Q}_{a}\left(v \mid \operatorname{NC}_{a}^{v}\left(\mathbf{y}\right)\right)$. Finally, if we identify at least one consideration-only peer, we can use her as a baseline to identify all other types of peers. Otherwise, we \emph{create} such a peer by mixing the behavior of a consideration-preference peer with the one of a preference-only peer and use the behavior of the \emph{constructed} peer as a baseline to complete the network identification. In this case, we need at least three peers. 

We finally remark that while no restriction on the number of options is needed to recover the reference group of a given agent, we assume $Y \geq 2$ to divide this set in consideration and preference peers. We next use the initial example to see why this requirement is needed and to state what can be done when it fails, i.e. $Y = 1$.

\setcounter{aux}{\value{example}}
\setcounter{example}{\value{eg2}}
\begin{example}[continued] Let us keep the network but assume $\mathcal{Y} = \left\{0,1\right\}$. Then
\begin{align*}
\ln{\operatorname{P}_{1}\left( 1 \mid \mathbf{y}\right)} =
\ln{\operatorname{Q}_{1}\left( 1 \mid \Char{y_2 = 1} + \Char{y_3 = 1}\right)} + \ln{\operatorname{R}_{1}\left( 1 \mid \Char{y_3 = 1},\left\{0,1\right\}\right)}. 
\end{align*}
\noindent In this set-up,
\begin{align*}
&\Delta_{2}^{1}\ln{\operatorname{P}_{1}\left( 1 \mid \mathbf{0}\right)} = \ln{\operatorname{Q}_{1}\left( 1 \mid 1\right)} - \ln{\operatorname{Q}_{1}\left( 1 \mid 0\right)},
\\
&\Delta_{3}^{1}\ln{\operatorname{P}_{1}\left( 1 \mid \mathbf{0}\right)} = \ln{\operatorname{Q}_{1}\left( 1 \mid 1\right)} - \ln{\operatorname{Q}_{1}\left( 1 \mid 0\right)} + \ln{\operatorname{R}_{1}\left( 1 \mid 1,\left\{0,1\right\}\right)} - \ln{\operatorname{R}_{1}\left( 1 \mid 0,\left\{0,1\right\}\right)},
\\
&\Delta_{4}^{1}\ln{\operatorname{P}_{1}\left( 1 \mid \mathbf{0}\right)} = 0.
\end{align*}
\noindent Thus, we can learn that $\mathcal{N}_1 = \left\{2, 3\right\}$. But now, double differences will not allow us to state whether each of these agents is a consideration and/or a preference peer. 
\hfill $\square$
\end{example}
\setcounter{example}{\value{aux}}

The above example shows that our identification strategy fails to separate the type of peers when $Y = 1$. In this case, there are other sets of assumptions that we could invoke to restore identification. Among them, partial knowledge of the network structure and knowledge of the sign of peer effects would allow identification, as we illustrate next. 
\setcounter{aux}{\value{example}}
\setcounter{example}{\value{eg2}}
\begin{example}[continued] Recall that $Y = 1$. Suppose we have partial knowledge of the network structure. In particular, suppose we know that $\mathcal{NR}_1 = \left\{3\right\}$. Since we can still learn that $\mathcal{N}_1 = \left\{2, 3\right\}$, we conclude that $\mathcal{NC}_1 \setminus \mathcal{NR}_1 = \left\{2\right\}$. Thus, to recover the complete network, we only need to learn whether Agent 3 is a preference-only peer or a consideration-preference peer of Agent 1. As before, the fact that 
\begin{align*}
\Delta_{3}^{1}\Delta_{2}^{1}\ln{\operatorname{P}_{1}\left( 1 \mid \mathbf{0}\right)} = &\left[\ln{\operatorname{Q}_{1}\left( 1 \mid \operatorname{NC}_{1}^{1} = 2\right)} - \ln{\operatorname{Q}_{1}\left( 1 \mid \operatorname{NC}_{1}^{1} = 1\right)}\right] \\
 -&\left[\ln{\operatorname{Q}_{1}\left( 1 \mid \operatorname{NC}_{1}^{1} = 1\right)} - \ln{\operatorname{Q}_{1}\left( 1 \mid \operatorname{NC}_{1}^{1} = 0\right)}\right] \neq 0
\end{align*}
allows us to conclude that Agent 3 is a consideration-preference peer.

When $Y=1$ we can also recover the network structure under sign restrictions, which we have not imposed so far. If we assume that peer effects in consideration and preferences are of opposite signs, then we could dispense with the assumption that either $\mathcal{NC}_a$ or $\mathcal{NR}_a$ is known. This situation might apply to vaccines. Arguably, a person becomes aware of a vaccine if more of her friends are getting shots. Also, if more friends get vaccinated, then the chances of getting sick reduce, and this reduces the incentives to get the vaccine. Thus, the peer effects in consideration and preferences are positive and negative, respectively.\footnote{In a different model, a similar idea has been used by \citet{agranov2021importance} to explain some data on COVID-19 vaccine uptake.}  
\hfill $\square$
\end{example}
\setcounter{example}{\value{aux}}

\bigskip
\noindent \textbf{Consideration Mechanisms and Choice Rules} We first state that if we know the network structure, and each agent has at least one consideration-only peer ---or such a peer can be constructed from consideration-preference and preference-only peers, as we do above--- then we can recover ratios of consideration probabilities. To show this claim, let $a'\in\mathcal{NC}_{a} \setminus\mathcal{NR}_{a}$. Since Agent $a'$ only affects consideration, we can shift Agent $a'$'s choice from the default to $v$ and recover some information about the peer effect in consideration. Specifically, we have  
\[
\Delta_{a'}^v\ln\operatorname{P}_a(v\mid\mathbf{0})=\ln{\operatorname{Q}_a(v\mid1)}-\ln{\operatorname{Q}_a(v\mid0)}.
\]
Thus, we can identify 
$
\operatorname{Q}_{a}\left(
v \mid  1 \right)/\operatorname{Q}_{a}\left(
v \mid  0 \right).
$
If $\mathcal{NC}_{a} \setminus\mathcal{NR}_{a}$ is empty, but $\mathcal{NR}_{a} \setminus\mathcal{NC}_{a}$ is not, we can use preference-only peers in a similar way. In particular, suppose $a'\in\mathcal{NCR}_{a} $ and $a''\in\mathcal{NR}_{a} \setminus\mathcal{NC}_{a}$. Then, $\ln\operatorname{P}_a\left(v\mid\mathbf{0}_{a'}^{v}\right)-\ln\operatorname{P}_a\left(v\mid\mathbf{0}_{a''}^{v}\right)=\ln{\operatorname{Q}_a(v\mid1)}-\ln{\operatorname{Q}_a(v\mid0)}.$ By applying the same ideas to different initial configurations, we can identify ratios of consideration probabilities as we formally state next.

\begin{proposition}\label{prop: identification of ratios of Q}
Let $\mathcal{NC}_a$ and $\mathcal{NR}_a$ be known and Assumptions~\ref{ass: MM} - \ref{ass: A3} hold. Then
\[
\operatorname{Q}_{a}\left(
v \mid n + 1 \right)/\operatorname{Q}_{a}\left(
v \mid n \right)
\]
is identified from $\operatorname{P}_a$ for each $n$ from 0 to $\abs{\mathcal{NC}_{a}} - 1$. (We use the convention that if $\abs{\mathcal{NC}_{a}}=0$, then the set ``from 0 to -1'' is empty.)
\end{proposition}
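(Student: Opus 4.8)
The plan is to build on the key factorization of $\operatorname{P}_a$ that the paper already exploited for network identification: for $v \neq 0$,
\[
\operatorname{P}_{a}\left( v \mid \mathbf{y}\right) = \operatorname{Q}_{a}\left(v \mid \operatorname{NC}_{a}^{v}\left(\mathbf{y}\right)\right)\times G_a^v(\mathbf{y}),
\]
where $G_a^v(\mathbf{y})$ is the ``picked-if-considered'' term that depends only on the choices of preference peers over alternatives $\neq v$ and on consideration probabilities over alternatives $\neq v$ — in particular $G_a^v$ does not vary with $\operatorname{NC}_{a}^{v}(\mathbf{y})$. Taking logarithms gives additive separability, so any difference operator that moves only consideration-relevant-for-$v$ counts while holding the $G_a^v$ part fixed will isolate a log-ratio of the $\operatorname{Q}_a(v\mid\cdot)$'s. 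The goal is to engineer, for each target $n \in \{0,1,\dots,\abs{\mathcal{NC}_a}-1\}$, a pair of choice configurations $\mathbf{y}, \mathbf{y}'$ that differ only in that $\operatorname{NC}_a^v$ changes from $n$ to $n+1$, while $G_a^v$ is unchanged between them; then $\ln\operatorname{P}_a(v\mid\mathbf{y}') - \ln\operatorname{P}_a(v\mid\mathbf{y}) = \ln\operatorname{Q}_a(v\mid n+1) - \ln\operatorname{Q}_a(v\mid n)$, which identifies the ratio.

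First I would handle the base step $n=0$, which is essentially done in the text just before the statement: if there is a consideration-only peer $a'\in\mathcal{NC}_a\setminus\mathcal{NR}_a$, switch $a'$ from $0$ to $v$ starting from $\mathbf{0}$; since $a'\notin\mathcal{NR}_a$, the $G_a^v$ term is untouched, and $\Delta_{a'}^v\ln\operatorname{P}_a(v\mid\mathbf{0}) = \ln\operatorname{Q}_a(v\mid 1) - \ln\operatorname{Q}_a(v\mid 0)$. If instead $\mathcal{NC}_a\setminus\mathcal{NR}_a=\emptyset$ but $\mathcal{NR}_a\setminus\mathcal{NC}_a\neq\emptyset$, I replicate a consideration-only switch by the paired move described in the excerpt: pick $a'\in\mathcal{NR}_a\setminus\mathcal{NC}_a$ currently at $v$ and $a''\in\mathcal{NCR}_a$ currently at $0$ (this requires $\abs{\mathcal{NCR}_a}\geq 1$, which holds in this sub-case by Assumption~\ref{ass: A3} together with $\mathcal{NC}_a\setminus\mathcal{NR}_a=\emptyset$), and move $a'\colon v\to 0$, $a''\colon 0\to v$ simultaneously; then $\operatorname{NR}_a^v$ is unchanged for the $G_a^v$ term while $\operatorname{NC}_a^v$ goes $0\to 1$, so again a single log-ratio drops out. (If both $\mathcal{NC}_a\setminus\mathcal{NR}_a$ and $\mathcal{NR}_a\setminus\mathcal{NC}_a$ are empty then Assumption~\ref{ass: A3} forces $\mathcal{NCR}_a=\emptyset$, hence $\abs{\mathcal{NC}_a}=0$ and there is nothing to prove.)

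Second, for the induction step I would iterate the same device along a chain of distinct consideration peers. The essential point is that $\operatorname{NC}_a^v(\mathbf{y})$ can be set to any value from $0$ to $\abs{\mathcal{NC}_a}$ by choosing how many consideration peers sit at $v$; and since $\operatorname{Q}_a(v\mid\cdot)$ depends only on this count (Assumption~\ref{ass: A1}(ii)) and is strictly positive (Assumption~\ref{ass: A1}(i), so the logs are well-defined), moving one more consideration peer onto $v$ increments the count by exactly one. When there is a consideration-only peer available I use it repeatedly — but note I only have one such peer in general, so to go from count $n$ to $n+1$ I instead start from a configuration where $n$ \emph{arbitrary} consideration peers (consideration-only or consideration-preference, it does not matter for $\operatorname{Q}_a$) already sit at $v$, and then move the distinguished consideration-only peer $a'$ (or the constructed $(a',a'')$ pair) from $0$ to $v$. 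The subtlety is that the already-placed peers might be consideration-preference peers, so placing them at $v$ also changed $\operatorname{NR}_a^v$ — but that is fine: the incremental move of $a'$ itself does not change any preference count (it is consideration-only, or the pair move cancels the preference count), so $G_a^v$ is identical across the two configurations in the final difference, and the telescoped differences over $n=0,\dots,\abs{\mathcal{NC}_a}-1$ each yield the corresponding consecutive ratio. In the no-consideration-only-peer sub-case the same telescoping works using the $(a',a'')$ pair move at each step, noting we can always keep one $\mathcal{NCR}_a$ peer and one $\mathcal{NR}_a\setminus\mathcal{NC}_a$ peer ``in reserve'' at $(0,v)$ respectively while filling the remaining $n$ consideration-peer slots from the rest of $\mathcal{NC}_a$.

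The main obstacle is bookkeeping rather than a conceptual gap: one must check that at each step there genuinely are enough consideration peers to realize count $n$ while still having the distinguished peer (or the reserved pair) available to perform the increment, and that the pair move in the no-consideration-only case really leaves $\operatorname{NR}_a^v$ — and hence $G_a^v$ — invariant while incrementing $\operatorname{NC}_a^v$ by one. Both follow from $\abs{\mathcal{NC}_a\setminus\mathcal{NR}_a}+\abs{\mathcal{NCR}_a}=\abs{\mathcal{NC}_a}$, the disjointness of the three peer groups, Assumption~\ref{ass: A3}, and the full-support equilibrium from Proposition~\ref{prop: unique equilibrium}, which guarantees every configuration used is observed with positive probability so all CCPs in the logs are strictly between $0$ and $1$. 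I would finally remark that these arguments use only Assumptions~\ref{ass: MM}, \ref{ass: A1}, \ref{ass: A2}, and~\ref{ass: A3} — Assumption~\ref{ass: A1}(iii) is not needed here, since we are identifying the ratios, not yet pinning down $\operatorname{Q}_a$ itself from them.
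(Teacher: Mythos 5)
Your proposal is correct and follows essentially the same strategy as the paper's proof: exploit the factorization $\operatorname{P}_a(v\mid\mathbf{y})=\operatorname{Q}_a(v\mid\operatorname{NC}_a^v(\mathbf{y}))\times G_a^v(\mathbf{y})$, increment $\operatorname{NC}_a^v$ by one via a consideration-only peer (or, when none exists, the compensating pair move with one $\mathcal{NCR}_a$ peer and one preference-only peer, whose existence is guaranteed by Assumption~\ref{ass: A3}), and sweep the baseline count by repositioning the remaining consideration peers. The paper runs the telescoping downward from the configuration where all of $\mathcal{NC}_a$ picks $v$ rather than upward from $\mathbf{0}$, but this is immaterial; your observations that Assumption~\ref{ass: A1}(iii) is not needed and that the degenerate case reduces to $\abs{\mathcal{NC}_a}=0$ are both consistent with the paper's argument.
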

\begin{rem}
    Proposition~\ref{prop: identification of ratios of Q} is valid for a substantially more general consideration set model. For example, the assumption that each alternative is added to the consideration sets independently from other alternatives (Assumption~\ref{ass: MM}) can be dropped. Indeed, by definition, $\operatorname{P}_a(v\mid\mathbf{y})=\operatorname{Q}_a(v\mid\operatorname{NC}^v_{a}(\mathbf{y}))\operatorname{Pr}_a(v\mid\mathbf{y},v\text{ is considered}),$
    where the second term is the conditional probability that $v$ is picked conditional on being considered. Thus, variation in the choices made by $a'\in\mathcal{NC}_a\setminus\mathcal{NR}_a$ would identify $\operatorname{Q}_a$ up to scale. Note, however, that in this case, knowing $\operatorname{Q}_a$ is not enough to identify $\operatorname{C}_a$ since $\operatorname{Q}_a$ does not convey information about the probability of several items being considered simultaneously.
\end{rem}

We next show that we can also recover some counterfactual objects of interest. Adding some restrictions, these counterfactuals will allow us to recover the choice rules. Define 
\begin{equation*}
\operatorname{P}^*_{a}\left( v \mid \mathbf{y}, \mathcal{Y}\setminus\mathcal{Z}\right) =\sum\nolimits_{ 
\mathcal{C}\subseteq \mathcal{Y}\setminus\mathcal{Z}}\operatorname{R}_{a}\left( v \mid \operatorname{NR}^{\mathcal{C}}_{a}\left(\mathbf{y}\right), \mathcal{C} \right)\operatorname{C}_a\left(\mathcal{C}\mid \operatorname{NC}_a^{\mathcal{Y}\setminus\mathcal{Z}}(\mathbf{y}),\mathcal{Y}\setminus\mathcal{Z}\right)
\end{equation*}
for each $\mathcal{Z} \subseteq \mathcal{Y}\setminus\{0\}$. That is, $\operatorname{P}^{*}_{a}\left( v \mid \mathbf{y}, \mathcal{Y}\setminus\mathcal{Z}\right)$ is the counterfactual probability of selecting alternative $v$ under choice configuration $\mathbf{y}$ when we restrict the set of available options or the menu from $\mathcal{Y}$ to $\mathcal{Y}\setminus\mathcal{Z}$. It tells us what happens to the CCPs when we remove set $\mathcal{Z}$ from the original menu. Note that, by definition, $\operatorname{P}^{*}_{a}\left( v \mid \mathbf{y}, \mathcal{Y}\right) = \operatorname{P}_{a}\left( v \mid \mathbf{y}\right)$. 

To fix the ideas behind the next result, consider the setting with $\mathcal{A} = \{a,a'\}$, $\mathcal{Y}=\{0,v,v'\}$, $\mathcal{NR}_a=\emptyset$, and $\mathcal{NC}_a=\{a'\}$. Take $\mathbf{y}$ such that $y_{a'}=0$ ($y_{a}$ can be arbitrary). Recall that $\mathbf{y}^{v'}_{a'}$ denotes a configuration where the $a'$-th component of $\mathbf{y}$ is replaced by $v'$. Since
\[
\operatorname{P}^{*}_a(v \mid \mathbf{y}, \mathcal{Y}\setminus\{v'\})=\operatorname{Q}_a\left(v\mid 0\right)\operatorname{R}_a\left(v\mid 0,\{0,v\}\right),
\]
we have that
\begin{align*}
\operatorname{P}_a\left(v\mid\mathbf{y}\right)=
&\operatorname{Q}_a\left(v'\mid0\right)\operatorname{Q}_a\left(v\mid0\right)\operatorname{R}_a\left(v\mid (0,0),\{0,v,v'\}\right)+
\Big[1-\operatorname{Q}_a\left(v'\mid0\right)\Big]\operatorname{P}^{*}_a(v \mid \mathbf{y}, \mathcal{Y}\setminus\{v'\}).
\end{align*}
This is the observed probability of Agent $a$ choosing option $v$ given that her peer $a'$ previously chose the default. Moreover, by switching $a'$'s choice from the default to $v'$, we have 
\begin{align*}
\operatorname{P}_a\left(v\mid\mathbf{y}^{v'}_{a'}\right)=
&\operatorname{Q}_a\left(v'\mid1\right)\operatorname{Q}_a\left(v\mid0\right)\operatorname{R}_a\left(v\mid (0,0),\{0,v,v'\}\right)+
\left(1-\operatorname{Q}_a\left(v'\mid1\right)\right)\operatorname{P}^{*}_a(v \mid \mathbf{y}, \mathcal{Y}\setminus\{v'\}).
\end{align*}
Note that we used the fact that since Agent $a'$ only affects Agent $a$'s consideration probability, but not the preference, the variation of Agent $a'$'s choice in the choice configuration provides variation in the consideration probability but not in the choice rule. That is, $\operatorname{R}_a\left(v\mid (0,0),\{0,v,v'\}\right)$ does not vary when $a'$ switches from the default to a different alternative. Moreover, we also used the fact that
$
\operatorname{P}^{*}_a(v \mid \mathbf{y}, \mathcal{Y}\setminus\{v'\})=\operatorname{P}^{*}_a(v \mid \mathbf{y}_{a'}^{v'}, \mathcal{Y}\setminus\{v'\}),
$
which follows from $v'$ being excluded from the menu and, thus, switching to it does not change the probability of picking $v$.

Solving this system of two equations with respect to $\operatorname{P}^{*}_a(v \mid \mathbf{y}, \mathcal{Y}\setminus\{v'\})$, we obtain that
\[
\operatorname{P}^{*}_a(v \mid \mathbf{y}, \mathcal{Y}\setminus\{v'\})=\dfrac{\operatorname{P}_a\left(v\mid\mathbf{y}^{v'}_{a'}\right)-t_{v'}\operatorname{P}_a\left(v\mid\mathbf{y}\right)}{1-t_{v'}},
\]
where $t_{v'}=\operatorname{Q}_a\left(v'\mid1\right)/\operatorname{Q}_a\left(v'\mid0\right)\neq 1$ can be identified using Proposition~\ref{prop: identification of ratios of Q}. It follows that we can recover the counterfactual CCP $\operatorname{P}^{*}_a(v \mid \mathbf{y}, \mathcal{Y}\setminus\{v'\})$ for any $\mathbf{y}$ for which the alternative corresponding to one of the consideration-only peers is equal to $0$ (i.e., $y_{a'}=0$). Essentially, we just used a consideration-only peer to exclude one alternative from the menu. Applying the same argument to these new counterfactual CCPs, we can exclude two alternatives as long as we have two consideration-only peers. Again, we can use any initial $\mathbf{y}$ as long as the components that correspond to any two consideration-only peers are set to $0$. That is, we can exclude any set of nondefault alternatives if its cardinality is smaller than $\abs{\mathcal{NC}_a\setminus\mathcal{NR}_a}$. 

The next result formalizes and extends this argument.

\begin{proposition}\label{prop: counterfactual CCP}
Suppose $\mathcal{NC}_a$ and $\mathcal{NR}_a$ are known, and Assumptions~\ref{ass: MM} - \ref{ass: A2} are satisfied. Then $\operatorname{P}^{*}_{a}\left( v \mid \mathbf{y}, \mathcal{Y}\setminus\mathcal{Z}\right)$ is identified from $\operatorname{P}_a$ for every $\mathcal{Z}\subseteq\mathcal{Y}\setminus\{0\}$ such that $\abs{\mathcal{Z}}\leq \abs{\mathcal{NC}_a \setminus \mathcal{NR}_a }$ and each $\mathbf{y}$ for which at least $\abs{\mathcal{Z}}$ of its components corresponding to any peers in $\mathcal{NC}_a \setminus \mathcal{NR}_a$ are $0$.
\end{proposition}

Proposition~\ref{prop: counterfactual CCP} addresses an important counterfactual prediction: What would happen if some alternatives were removed or become unavailable? Note the identification of these counterfactual CCPs does not require knowledge of either $\operatorname{Q}_a$ or $\operatorname{R}_a$. We only use ratios of $\operatorname{Q}_a$s. It follows from these ideas that (in our setting) variation in the choices of consideration-only peers is equivalent to menu variation in the stochastic choice literature \citep{aguiar2023random}. In particular, if one has enough consideration-only peers, we can identify the counterfactual CCPs for binary menus $\operatorname{P}^{*}_a(v \mid \mathbf{y}, \{0,v\})=\operatorname{Q}_a\left(v\mid\operatorname{NC}^v_{a}\left(\mathbf{y}\right)\right)\operatorname{R}_a\left(v\mid \operatorname{NR}_a^v\left(\mathbf{y}\right),\{0,v\}\right)$. Hence, if either $\operatorname{Q}_a\left(v\mid\operatorname{NC}^v_{a}\left(\mathbf{y}\right)\right)$ or $\operatorname{R}_a\left(v\mid \operatorname{NR}_a^v\left(\mathbf{y}\right),\{0,v\}\right)$ is known, we can recover $\operatorname{Q}_a\left(v\mid\cdot\right)$ (by Proposition~\ref{prop: identification of ratios of Q}) and then $\operatorname{R}_a\left(v\mid \operatorname{NR}_a^v\left(\mathbf{y}\right),\{0,v\}\right)$ from our recent ideas. Applying the same argument to menus of size three, we can identify $\operatorname{R}_a$ for sets of size three, and so on. 

\begin{proposition} \label{prop: identification of all}
Suppose the assumptions of Proposition~\ref{prop: counterfactual CCP} are satisfied. If, in addition, we have that $\abs{\mathcal{NC}_a \setminus \mathcal{NR}_a } \geq Y-1$ and, for each $v\neq0$, either $\operatorname{Q}_{a}\left( v \mid n_1\right)$ or $\operatorname{R}_{a}\left(v \mid n_2, \{0,v\} \right)$ is known for some $n_1$ and $n_2$ in the support, then $\operatorname{Q}_a$ and $\operatorname{R}_a$ are identified from $\operatorname{P}_a$.
\end{proposition}

The assumption that either $\operatorname{Q}_{a}\left( v \mid n_1\right)$ or $\operatorname{R}_{a}\left(v \mid n_2, \{0,v\} \right)$ is known for some $n_1$ and $n_2$ in the support can be satisfied in different settings. For example, it is satisfied if the default is never picked when it is part of a binary menu with some alternative $v$ and when all preference peers pick $v$ (i.e., $\operatorname{R}_{a}\left(v \mid \abs{\mathcal{NR}_a}, \{0,v\} \right)=1$). This may happen when one decides whether to use a particular social media with the default choice being not to use any. It would be reasonable to think that if all the friends of a given person are using this social media, then the person will use it for sure. Another example when the assumption is satisfied is when the alternative is considered with probability 1 if enough (or all) consideration peers pick the alternative (i.e., $\operatorname{Q}_{a}\left(v \mid \abs{\mathcal{NC}_a} \right)=1$). In the case of online games, this is the same as to say that a player considers for sure a game when all her peers have just played it. 

\subsection{Identification of \texorpdfstring{$\operatorname{P}$}{the Conditional Choice Probabilities}}\label{sec: identification of CCPs}
\noindent This section studies identification of the CCPs, $\operatorname{P}$, and the rates of the Poisson alarm clocks from two different datasets. In Dataset 1, the researcher observes the precise moment at which an agent revises her strategy and the configuration of choices at that time. In Dataset 2, the researcher observes the configuration of choices at fixed time intervals.

Assume the researcher observes agents' choices at time intervals of length $\Delta$ and can consistently estimate $\Pr\left(\mathbf{y}^{t+\Delta }=\mathbf{y}'\mid\mathbf{y}^{t}=\mathbf{y}\right)$ for each pair $\mathbf{y}',\mathbf{y}\in{\mathcal{Y}}^{A}$. We capture these transition probabilities by a matrix $\mathcal{P}\left( \Delta \right)$.\footnote{Here again, we assume that the choice configurations are ordered according to the lexicographic order when we construct $\mathcal{P}\left(\Delta \right)$.} Let $e^{\left( \Delta \mathcal{W}\right)}$ be the matrix exponential of $\Delta \mathcal{W}$. Then $\mathcal{P}\left( \Delta \right)$ relates to transition rate matrix $\mathcal{W}$ in Section~\ref{sec: equilibrium and mistakes} by $\mathcal{P}\left( \Delta \right) =e^{\left( \Delta \mathcal{W}\right) }$.

The two datasets we consider differ regarding $\Delta$: In Dataset 1, the time interval is very small, i.e., the researcher knows $\lim_{\Delta
\rightarrow 0}\mathcal{P}\left( \Delta \right)$. This ideal dataset registers agents' choices at the exact time at which any given agent revises her choice. With the proliferation of online platforms and scanners, this kind of data is often available. In Dataset 2, the time interval is of arbitrary size, i.e., the researcher knows $\mathcal{P}\left(
\Delta \right)$. In both cases, the identification question is whether (or under what extra restrictions) we can recover $\mathcal{W}$ from the transition probabilities in $\mathcal{P}\left( \Delta \right)$ which are identified and estimated from the data directly.

\begin{proposition}[Dataset 1]\label{ID2} If Assumptions~\ref{ass: MM},~\ref{ass: A1}(i), and~\ref{ass: A2}(i) hold, then the CCPs $\operatorname{P}$ and the rates of the Poisson alarm clocks $(\lambda_a)_{a\in\mathcal{A}}$ are identified from Dataset 1.
\end{proposition}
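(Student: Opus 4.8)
The plan is to pass from Dataset~1 to the infinitesimal generator $\mathcal{M}$, and then to disentangle $\mathcal{M}$ into the clock rates $(\lambda_a)_{a\in\mathcal{A}}$ and the CCPs $\operatorname{P}$. For the first step, note that Assumptions~\ref{ass: MM},~\ref{ass: A1}(i), and~\ref{ass: A2}(i) are exactly the hypotheses of Proposition~\ref{prop: unique equilibrium}, so the process has a unique, full-support stationary distribution and is positive recurrent and irreducible on the finite state space $\mathcal{Y}^A$. Consequently, along a long real-time path every configuration $\mathbf{y}$ is occupied for a positive fraction of time and every feasible one-coordinate jump $\mathbf{y}\to\mathbf{y}'$ occurs infinitely often, so each off-diagonal rate $\operatorname{m}(\mathbf{y}'\mid\mathbf{y})$ is identified (e.g., as the limiting ratio of the number of direct $\mathbf{y}\to\mathbf{y}'$ transitions to the total sojourn time in $\mathbf{y}$). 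The diagonal entries are then pinned down by $\operatorname{m}(\mathbf{y}\mid\mathbf{y})=-\sum\nolimits_{\mathbf{y}'\neq\mathbf{y}}\operatorname{m}(\mathbf{y}'\mid\mathbf{y})$, and the Poisson ``alarm clock'' structure forces $\operatorname{m}(\mathbf{y}'\mid\mathbf{y})=0$ whenever $\mathbf{y}$ and $\mathbf{y}'$ differ in more than one coordinate; equivalently, $\mathcal{M}=\lim_{\Delta\to 0}(\mathcal{P}(\Delta)-I)/\Delta$ is known.

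Next I would read products off $\mathcal{M}$. If $\mathbf{y}$ and $\mathbf{y}^{v}_{a}$ differ only in coordinate $a$ (so $v\neq y_a$), then only the $a$-th term survives in the defining sum, giving $\operatorname{m}(\mathbf{y}^{v}_{a}\mid\mathbf{y})=\lambda_a\operatorname{P}_a(v\mid\mathbf{y})$. So $\mathcal{M}$ directly reveals $\lambda_a\operatorname{P}_a(v\mid\mathbf{y})$ for every $a$, every $\mathbf{y}$, and every $v\neq y_a$. The one genuine difficulty is separating the scalar $\lambda_a$ from the vector $\operatorname{P}_a(\blank\mid\mathbf{y})$: summing these products over $v\neq y_a$ yields $\lambda_a(1-\operatorname{P}_a(y_a\mid\mathbf{y}))$ rather than $\lambda_a$, because the ``self-renewal'' mass $\lambda_a\operatorname{P}_a(y_a\mid\mathbf{y})$ corresponds to a revision that leaves the configuration unchanged and is therefore invisible in $\mathcal{M}$. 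I would close this gap using the baseline convention $a\notin\mathcal{N}_a$: since $\operatorname{P}_a(v\mid\mathbf{y})$ depends on $\mathbf{y}$ only through $\operatorname{NC}^{\blank}_a(\mathbf{y})$ and $\operatorname{NR}^{\blank}_a(\mathbf{y})$, which count choices of agents in $\mathcal{NC}_a$ and $\mathcal{NR}_a$ and ignore $y_a$, we have $\operatorname{P}_a(v\mid\mathbf{y})=\operatorname{P}_a(v\mid\mathbf{y}^{w}_{a})$ for every $w\in\mathcal{Y}$. Fixing $\mathbf{y}$ and, for each $v\in\mathcal{Y}$, choosing some $w\neq v$ (possible since $Y\geq 1$), the previous display applied at the configuration $\mathbf{y}^{w}_{a}$ gives $\operatorname{m}(\mathbf{y}^{v}_{a}\mid\mathbf{y}^{w}_{a})=\lambda_a\operatorname{P}_a(v\mid\mathbf{y}^{w}_{a})=\lambda_a\operatorname{P}_a(v\mid\mathbf{y})$. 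Hence $\lambda_a\operatorname{P}_a(v\mid\mathbf{y})$ is identified for \emph{all} $v\in\mathcal{Y}$; summing over $v$ and using $\sum\nolimits_{v\in\mathcal{Y}}\operatorname{P}_a(v\mid\mathbf{y})=1$ identifies $\lambda_a$, and then $\operatorname{P}_a(v\mid\mathbf{y})$ follows by division.

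I expect this last separation to be the crux of the argument: Steps concerning the recovery of $\mathcal{M}$ are routine continuous-time Markov-chain bookkeeping that mainly invoke the full-support conclusion of Proposition~\ref{prop: unique equilibrium}, whereas the clock/CCP split genuinely relies on the modeling input $a\notin\mathcal{N}_a$ (the reason it carries over to the history-dependent extensions of Section~\ref{sec: extensionslong} will need a separate remark, but that is outside the present statement). A minor bonus worth noting is that any two configurations $\mathbf{y}$ must return the same value of $\lambda_a$ in the final summation, which furnishes an overidentifying restriction on the model.
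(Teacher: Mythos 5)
Your proof is correct and follows essentially the same route as the paper's: recover $\mathcal{M}=\lim_{\Delta\to 0}\mathcal{P}(\Delta)$, read off $\lambda_a\operatorname{P}_a(v\mid\mathbf{y})$ from the one-coordinate off-diagonal entries, and sum over $v$ to isolate $\lambda_a$. Your extra step handling the ``invisible'' self-transition term $\lambda_a\operatorname{P}_a(y_a\mid\mathbf{y})$ --- using that $\operatorname{P}_a(\cdot\mid\mathbf{y})$ does not depend on $y_a$ because $a\notin\mathcal{N}_a$ --- is a point the paper's proof passes over silently, and it is exactly the right way to close that gap.
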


The proof of Proposition~\ref{ID2} follows because when the time interval between the observations goes to zero, we can recover $\mathcal{W}$. At least two known cases produce the same result without assuming $\Delta \rightarrow 0$. One of them requires the length of the interval $\Delta$ to be below a threshold $\overline{\Delta }$. The issue of this approach is that the value of the threshold depends on details of the model that are unknown to the researcher. The second case requires the researcher to observe the dynamic system at two different intervals $\Delta_{1}$ and $\Delta_{2}$ that are not multiples of each other (see, for example, \citealp{blevins2017identifying} and the literature therein). The following proposition, based on Theorem~1 in \citet{blevins2018identification}, offers a third case in which the transition rate matrix can be identified from Dataset 2. 

\begin{proposition}[Dataset 2]\label{ID3}
If Assumptions~\ref{ass: MM},~\ref{ass: A1}(i), and~\ref{ass: A2}(i) hold, and $\mathcal{W}$ has distinct eigenvalues that do not differ by an integer multiple of $2\pi i/\Delta $, where $i$ denotes the imaginary unit, then $\operatorname{P}$ and $(\lambda_a)_{a\in\mathcal{A}}$ are generically identified from Dataset 2.
\end{proposition}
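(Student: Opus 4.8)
The plan is to establish identification of $\mathcal{M}$ from $\mathcal{P}(\Delta) = e^{\Delta\mathcal{M}}$ by inverting the matrix exponential, and then to read off $\operatorname{P}$ and $(\lambda_a)_{a\in\mathcal{A}}$ from $\mathcal{M}$ exactly as in the proof of Proposition~\ref{ID2}. The difficulty relative to Dataset~1 is that the matrix logarithm is not globally single-valued: distinct real matrices can have the same exponential. The standard fix, which is the content of Theorem~1 in \citet{blevins2018identification}, is that if $\mathcal{M}$ has distinct eigenvalues and no two of them differ by an integer multiple of $2\pi i/\Delta$, then $\mathcal{M}$ is the unique matrix with the given spectral structure satisfying $e^{\Delta\mathcal{M}} = \mathcal{P}(\Delta)$. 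So the first step is to invoke Proposition~\ref{prop: unique equilibrium} to guarantee that the chain is irreducible with full-support stationary distribution, and to record the structural zero pattern of $\mathcal{M}$ (off-diagonal entries vanish whenever configurations differ in more than one coordinate, and the diagonal is the negative row sum).

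Second, I would argue that $\mathcal{P}(\Delta)$ is nonsingular (it is a transition matrix of a continuous-time chain, hence $e^{\Delta\mathcal{M}}$ is invertible for every $\Delta$), so its eigenvalues are nonzero, and each eigenvalue of $\mathcal{P}(\Delta)$ is of the form $e^{\Delta\theta}$ for some eigenvalue $\theta$ of $\mathcal{M}$. Under the stated hypothesis that the eigenvalues of $\mathcal{M}$ are distinct and no two differ by an integer multiple of $2\pi i/\Delta$, the map $\theta \mapsto e^{\Delta\theta}$ is injective on the spectrum of $\mathcal{M}$; combined with diagonalizability (distinct eigenvalues), this pins down the eigenprojections of $\mathcal{M}$ from those of $\mathcal{P}(\Delta)$ and hence determines $\mathcal{M} = \frac{1}{\Delta}\sum_j \theta_j \Pi_j$ uniquely, where the $\Pi_j$ are the spectral projections shared by $\mathcal{M}$ and $\mathcal{P}(\Delta)$. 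The ``generically'' qualifier is needed because the eigenvalue conditions on $\mathcal{M}$ — simplicity of the spectrum and the non-resonance condition — hold on a set whose complement is of measure zero (lower-dimensional) in the space of model primitives satisfying Assumptions~\ref{ass: MM},~\ref{ass: A1}(i),~\ref{ass: A2}(i); I would state this as the precise meaning of ``generically identified'' and cite \citet{blevins2018identification} for the argument that the exceptional set is non-generic.

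Third, once $\mathcal{M}$ is identified, the remaining steps are identical to Proposition~\ref{ID2}: the off-diagonal entry $\mathcal{M}_{\iota(\mathbf{y})\iota(\mathbf{y}^{v}_{a})} = \lambda_a \operatorname{P}_a(v\mid\mathbf{y})$ for any $\mathbf{y}$ and $v\neq y_a$, and summing these over $v\neq y_a$ for fixed $a$ — using $\sum_{v\in\mathcal{Y}}\operatorname{P}_a(v\mid\mathbf{y}) = 1$ — recovers $\lambda_a(1 - \operatorname{P}_a(y_a\mid\mathbf{y}))$. To disentangle $\lambda_a$ from $\operatorname{P}_a(y_a\mid\mathbf{y})$ one picks a configuration: e.g. from $\mathbf{y}$ with $y_a = 0$ and any other coordinate, the entries $\mathcal{M}_{\iota(\mathbf{y})\iota(\mathbf{y}^v_a)}$ over all $v\neq 0$ sum to $\lambda_a(1 - \operatorname{P}_a(0\mid\mathbf{y}))$, while full support (Proposition~\ref{prop: unique equilibrium}) gives $\operatorname{P}_a(0\mid\mathbf{y}) \in (0,1)$; combining the row-sum identity across two configurations that share the same $\operatorname{P}_a(\cdot\mid\cdot)$ value is not available in general, so instead one uses that $\lambda_a$ is constant across $\mathbf{y}$ and solves the resulting system — concretely, for each $a$, $\lambda_a = \sum_{v\neq y_a}\mathcal{M}_{\iota(\mathbf{y})\iota(\mathbf{y}^v_a)} + \mathcal{M}_{\iota(\mathbf{y})\iota(\mathbf{y}^{y_a}_{a})}$-type bookkeeping actually reduces to reading $-\mathcal{M}_{\iota(\mathbf{y})\iota(\mathbf{y})} = \sum_a \lambda_a(1-\operatorname{P}_a(y_a\mid\mathbf{y}))$ together with the individual off-diagonal blocks, which separate by $a$ because only one coordinate changes. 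The cleanest route I would take: fix $a$, fix $\mathbf{y}$, then $\{\mathcal{M}_{\iota(\mathbf{y})\iota(\mathbf{y}^v_a)}\}_{v\neq y_a}$ are exactly $\{\lambda_a\operatorname{P}_a(v\mid\mathbf{y})\}_{v\neq y_a}$; their sum is $\lambda_a(1-\operatorname{P}_a(y_a\mid\mathbf{y}))$, and doing this for $\mathbf{y}$ and for $\mathbf{y}$ with $y_a$ changed to another value $w$ gives two equations whose ratio structure, together with $\operatorname{P}_a$ summing to one in each, determines $\lambda_a$ and then all $\operatorname{P}_a(\cdot\mid\cdot)$.

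The main obstacle is the second step — converting non-invertibility of the matrix logarithm into a clean uniqueness statement. The eigenvalue-simplicity plus non-resonance hypotheses are exactly what rule out the two failure modes (non-diagonalizable $\mathcal{M}$, and distinct $\mathcal{M}$ with coinciding exponentials because eigenvalues differ by $2\pi i k/\Delta$), but verifying that these conditions hold for a dense open full-measure set of primitives — the precise meaning of ``generically'' — is where the real work lies, and here I would lean entirely on Theorem~1 of \citet{blevins2018identification} rather than reprove it. The rest of the argument (nonsingularity of $\mathcal{P}(\Delta)$, the spectral decomposition bookkeeping, and the extraction of $\lambda_a$ and $\operatorname{P}_a$ from $\mathcal{M}$) is routine.
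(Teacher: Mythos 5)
Your final step --- reading $\lambda_a$ and $\operatorname{P}_a$ off the entries of $\mathcal{M}$ --- matches the paper (it simply reuses the proof of Proposition~\ref{ID2}), but the crucial middle step, identifying $\mathcal{M}$ from $\mathcal{P}(\Delta)=e^{\Delta\mathcal{M}}$, has a genuine gap. The hypothesis that $\mathcal{M}$ has distinct eigenvalues, no two differing by an integer multiple of $2\pi i/\Delta$, guarantees that $\mathcal{P}(\Delta)$ has distinct eigenvalues and that any rate matrix $\mathcal{M}'$ with $e^{\Delta\mathcal{M}'}=\mathcal{P}(\Delta)$ shares the eigenprojections $\Pi_j$ of $\mathcal{P}(\Delta)$. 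It does \emph{not} pin down the eigenvalues of $\mathcal{M}$: the complex logarithm is multivalued, so every choice $\theta_j=\frac{1}{\Delta}\log\mu_j+\frac{2\pi i k_j}{\Delta}$, with the integers $k_j$ chosen symmetrically across conjugate pairs, yields another \emph{real} matrix $\mathcal{M}'=\sum_j\theta_j\Pi_j$ with the same matrix exponential. Your assertion that the spectral conditions make $\mathcal{M}$ ``the unique matrix with the given spectral structure'' is circular as an identification claim, because the spectral structure of $\mathcal{M}$ is exactly what the researcher does not observe. Indeed, if the eigenvalue hypotheses alone sufficed, the conclusion would be exact rather than generic identification.

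What actually selects the correct branch --- and what the paper's proof consists of --- is the sparsity of $\mathcal{M}$: since at most one agent revises her choice at a time, every row of $\mathcal{M}$ has $(Y+1)^{A}-AY-1$ zeros in known locations, and Theorem~1 of \citet{blevins2017identifying} together with Theorem~3 of \citet{blevins2018identification} shows that these linear restrictions rule out all but one branch of the logarithm for generic parameter values provided $(Y+1)^{A}-AY-1\geq\tfrac{1}{2}$, which holds because $A\geq2$. This counting argument is both the substance of the proof and the source of the ``generically'' in the statement. You record the zero pattern in your first step but never use it, and you misattribute the genericity to the eigenvalue conditions, which are hypotheses of the proposition rather than part of its conclusion.
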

The restriction on eigenvalues of $\mathcal{W}$ is a regularity condition that is generically satisfied.\footnote{See \citet{blevins2017identifying} for a discussion of this assumption.} The key element in proving Proposition~\ref{ID3} is that the transition rate matrix in our model is rather parsimonious since, at any given time, only one agent revises her selection with a nonzero probability. Thus, the transition rate matrix $\mathcal{W}$ has many zeros in known locations. 

\section{Extensions}\label{sec: extensions}

\subsection{History Dependence and Own Past Choices}\label{sec: extensionslong}
\noindent We have assumed that the choices made by a given agent are only affected by the current aggregate choices made by her peers and ignore her own past choices. We next extend the model by allowing that both the consideration and preferences of a given agent depend on the history of her own choices and those of her peers. As consideration probabilities can be 1, the dependence on past choices allows nontrivial dynamics in consideration sets. For instance, they may not change much over long periods. Thus, this extension allows us to accommodate, among others, persistence in consideration sets and choices ---see the discussion after Assumption~\ref{ass: A1} in Section~\ref{sec: model}. We use these ideas in our empirical application.

There are many ways in which history can be embedded into the model. We propose here a possibility that allows us to model an interesting situation (described below) and requires minimal extra notation. Let $\{t_{k}\}_{k=1}^{+\infty}$ be an (increasing) sequence of random time periods in which the clocks of different agents went off. Let $\mathbf{y}_{t_k}$ denote the configuration of choices in the network at the $k$-th time period (at this moment the alarm clock of some agent went off). As a result, we can encode the whole history of choice configurations before moment $t$ as $h_t=(\mathbf{y}_{t_{k}})_{t_k<t}$. Next, assume that the choice rules and consideration probabilities depend not only on choices made by peers at the moment at which the choice is revised but also on the whole history of choices $h_t$. Hence, given the history of choice configurations $h_t$, the probability that alternative $v$ is picked by Agent $a$ at time $t$ would be
\begin{align*}
\operatorname{P}_{a}\left( v \mid \mathbf{y}_t, h_t\right) =&\sum\nolimits_{ 
\mathcal{C}\subseteq \mathcal{Y}}\operatorname{R}_{a}\left( v \mid \mathbf{y}_t, h_t,\mathcal{NR}_{a},\mathcal{C}\right)\\
&\prod\nolimits_{v'\in \mathcal{C}}\operatorname{Q}_{a}\left( v' \mid \mathbf{y}_t, h_t,\mathcal{NC}_{a}\right) \prod\nolimits_{v'\in \mathcal{Y}\setminus\mathcal{C}}\left( 1-\operatorname{Q}_{a}\left(
v' \mid \mathbf{y}_t, h_t,\mathcal{NC}_{a}\right) \right).
\end{align*}
None of our previous results use variation beyond the choices made by connected agents at the moment of making a decision. Hence, if we condition on the choice made by Agent $a$, $y_{at}$, and the history $h_t$ of choices, then we can establish the identification of all parts of the model from $\operatorname{P}_a$ using our previous ideas ---thus, we omit the proof of the next result. 

\begin{proposition}\label{prop: history from P}
    Suppose Assumptions~\ref{ass: MM} - \ref{ass: A3} are satisfied conditional on $y_{at}$ and the history $h_t$ for all possible $y_{at}$ and $h_t$. Also, let us extend the definition of $\operatorname{P}^*_a$ to allow for dependence on $y_{at}$ and $h_t$. Then, all propositions from Section~\ref{PIP} are still valid. 
\end{proposition}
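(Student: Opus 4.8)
The strategy is to observe that the argument is entirely a matter of bookkeeping: all of the identification results in Section~\ref{PIP} were proved by manipulating the CCPs $\operatorname{P}_a(v\mid\mathbf{y})$ through the difference operators $\Delta_{a'}^v$ (or products/compositions thereof) applied to fixed configurations, and \emph{none} of those manipulations ever touched the exact timing of past events. The only new feature here is that $\operatorname{Q}_a$ and $\operatorname{R}_a$ carry two additional arguments, $y_{at}$ and $h_t$, which are held fixed throughout. So the plan is to show that, conditional on $(y_{at},h_t)$, the conditional CCP $\operatorname{P}_a(v\mid\mathbf{y}_t,h_t)$ has exactly the same structural form (Equation~\eqref{eq: main}, and hence the factorized form with $\operatorname{Q}_a$ pulled out) as in the baseline, and that the four maintained assumptions, being assumed to hold conditional on every $(y_{at},h_t)$, supply precisely the inputs the earlier proofs used.

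First I would fix an arbitrary Agent $a$, an arbitrary realized own-choice $y_{at}$, and an arbitrary history $h_t$, and condition on this pair throughout; everything below is ``conditional on $(y_{at},h_t)$.'' Then I would write out the conditional analogue of the key decomposition used for the network step,
\[
\operatorname{P}_{a}\!\left( v \mid \mathbf{y}_t,h_t\right)=\operatorname{Q}_{a}\!\left(v \mid \operatorname{NC}_{a}^{v}(\mathbf{y}_t),y_{at},h_t\right)\times\!\!\sum_{\mathcal{C}\subseteq\mathcal{Y}\setminus\{v\}}\!\!\operatorname{R}_{a}\!\left( v \mid \operatorname{NR}^{\mathcal{C}\cup\{v\}}_{a}(\mathbf{y}_t),h_t,\mathcal{C}\cup\{v\}\right)\operatorname{C}_a\!\left(\mathcal{C}\mid \operatorname{NC}_a^{\mathcal{Y}\setminus\{v\}}(\mathbf{y}_t),y_{at},h_t,\mathcal{Y}\setminus\{v\}\right),
\]
which follows from Assumption~\ref{ass: MM} (conditional version) exactly as in Section~\ref{PIP}. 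The two properties actually exploited there carry over verbatim: the first factor depends on the configuration only through $\operatorname{NC}^v_a(\mathbf{y}_t)$ (by the conditional Assumption~\ref{ass: A1}(ii)), and $\operatorname{NC}^v_a(\mathbf{y}_t)$ does not enter the second factor. Consequently $\Delta^v_{a'}\ln\operatorname{P}_a(v\mid\mathbf 0,h_t)$ decomposes additively into a consideration piece and a preference piece exactly as in Equation~\eqref{eq: delta log}, the double-difference $\Delta^w_{a''}\Delta^v_{a'}$ kills the consideration piece as before, the sign/magnitude separations go through by the conditional Assumptions~\ref{ass: A1}(iii), \ref{ass: A2}(iii) and \ref{ass: separability}, and Assumption~\ref{ass: A3} supplies the exclusion restriction needed to peel off $\mathcal{NCR}_a$. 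This reproduces Propositions~\ref{prop: identification of peers}--\ref{prop: identification of groups} conditionally. For the consideration and preference parts, I would note that the definition of the extended counterfactual $\operatorname{P}^*_a(v\mid\mathbf{y}_t,h_t,\mathcal{Y}\setminus\mathcal{Z})$ is the obvious one (sum over $\mathcal{C}\subseteq\mathcal{Y}\setminus\mathcal{Z}$, with $\operatorname{Q}_a,\operatorname{R}_a$ carrying $h_t$), and that the ``consideration-only peer mimics menu removal'' computation and the two-equation linear solve behind Proposition~\ref{prop: counterfactual CCP}, together with the cascade in Proposition~\ref{prop: identifcation of all}, use nothing but the two structural facts above plus the ratios $\operatorname{Q}_a(v\mid n+1,y_{at},h_t)/\operatorname{Q}_a(v\mid n,y_{at},h_t)$ identified as in Proposition~\ref{prop: identification of ratios of Q}.

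The one genuine subtlety — and the place I would be most careful — is that the object being identified now is an \emph{entire family} of conditional CCPs indexed by $(y_{at},h_t)$, so I must confirm two things: (i) that each conditional CCP $\operatorname{P}_a(\,\cdot\mid\cdot\,,h_t)$ is itself recoverable from data (this is outside the scope of the proposition, which by hypothesis starts from $\operatorname{P}_a$, so I would simply say we treat the conditional CCPs as given, exactly as Section~\ref{PIP} treats $\operatorname{P}_a$); and (ii) that the difference operators $\Delta^v_{a'}$ act on the configuration-of-peers argument only, leaving $(y_{at},h_t)$ untouched, so that the conditioning set is genuinely held fixed across every expression appearing in every earlier proof — in particular that switching a peer $a'$ from $0$ to $v$ does not perturb $h_t$ (true, since $h_t$ records only choices strictly before $t$) and does not perturb $y_{at}$ (true, since $a'\neq a$ and $a\notin\mathcal N_a$). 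Once these are stated, the conclusion is immediate: apply each proposition of Section~\ref{PIP} to the conditional CCP for each fixed $(y_{at},h_t)$. Because no step of those proofs ever referenced the absence of the extra arguments, the conclusions hold; hence $\mathcal{NC}_a,\mathcal{NR}_a$ and the conditional primitives $\operatorname{Q}_a(\cdot\mid\cdot,y_{at},h_t)$, $\operatorname{R}_a(\cdot\mid\cdot,h_t,\cdot)$ are identified for every $(y_{at},h_t)$, which is the assertion. This is exactly why the authors say ``we omit the proof of the next result'' — the only content is verifying that the earlier arguments are invariant under adding inert conditioning arguments, which I would state in one or two sentences rather than re-deriving.
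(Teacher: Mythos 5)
Your proposal is correct and matches the paper's (omitted) argument exactly: the authors state just before the proposition that conditioning on $y_{at}$ and $h_t$ leaves every manipulation in Section~\ref{PIP} unchanged, since the difference operators act only on the contemporaneous peer-choice configuration, and they therefore omit the proof. Your two explicit checks---that identification of the conditional CCPs themselves is outside the proposition's scope, and that switching a peer's choice perturbs neither $y_{at}$ nor $h_t$---are precisely the bookkeeping the authors leave implicit.
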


Proposition~\ref{prop: history from P} takes as an input the CCPs that (now) depend on the histories of choices made by everyone in the network, i.e., it is implicitly assumed that $\operatorname{P}_{a}$ is identified. Since we only observe choices made by agents from one network, it would be impossible to identify the CCPs conditional on \emph{all} histories without further assumptions. To address this difficulty, we could restrict the length of the history that affects $\operatorname{P}_a$.\footnote{CCPs could also be identified even if they depend on the whole choice history if one requires the impact of the remote past to decay sufficiently fast with time (see, \citealp{hardle1997review,bierens1996topics}, and \citealp{truquet2023strong} for examples).} This can be done by assuming that there exists finite $K\geq1$ such that $\operatorname{Q}_a$ and $\operatorname{R}_a$ depend only on the first $K$ components of $h_t$. Under this restriction, we would get that for any $k>K$, $(\mathbf{y}_{t_{k'}})_{k'=1,\dots,k}$, $v$, and $a$
    \[
    \operatorname{P}_{a}\left( v \mid \mathbf{y}_{t_k}, (\mathbf{y}_{t_{k'}})_{k'=1,\dots, k-1}\right)=\operatorname{P}_{a}\left( v \mid \mathbf{y}_{t_k}, (\mathbf{y}_{t_{k'}})_{k'=k-K,\dots, k-1}\right).
    \]
Hence, $\operatorname{P}$ can be recovered from Dataset 1.\footnote{Alternatively, one can also restrict the history in terms of the length of the time period rather than the number of actions. We could assume that there exists $\bar{t}>0$ such that 
$
\operatorname{P}_{a}\left( v \mid \mathbf{y}_{t_k}, (\mathbf{y}_{t_{k'}})_{k'=1,\dots, k-1}\right)=\operatorname{P}_{a}\left( v \mid \mathbf{y}_{t_k}, (\mathbf{y}_{t_{k'}})_{\{k'\::\:k'<k,\:t_{k}-t_{k'}<\bar{t}\}}\right).
$}

To motivate the details of this extension let us go back to the example of the online platform that offers video games to a set of players. As we mentioned in the introduction, these platforms often allow agents to form social networks and make the last purchased or played by peers game visible to the agent. Based on the history of acquired games, the platforms could also share with their subscribers the last few games that were acquired, and the identity of the players that acquired them. One could argue that recently acquired games could receive further attention by the subscribers of the platform. Recent games played by a subscriber could also have a special effect on her consideration set. 

\setcounter{example}{\value{eg2}}
\begin{example}[continued] Let $a^*(t_{k-1})$ be the agent that made a choice at the $t_{k-1}$ moment. Hence, $\mathbf{y}_{a^*(t_{k-1}),t_k}$ is the choice that Agent $a^*(t_{k-1})$ made. Assume that consideration of an alternative (in addition to the previous arguments) depends on whether that alternative is the most recent choice made by a consideration peer and the alternative of the agent in the current choice configuration. That is, 
\[
\operatorname{Q}_{a}\left( v \mid \mathbf{y}_{t_{k}}, h_{t_{k}},\mathcal{NC}_{a}\right)=\operatorname{Q}_{a}\left( v \mid y_{a,t_k}, \operatorname{NC}_{a}^v(\mathbf{y}_{t_{k}}), \Char{\mathbf{y}_{a^*(t_{k-1}),t_k}=v}\Char{a^*(t_{k-1})\in\mathcal{NC}_a}\right).
\]
Thus, histories of length $K=1$ affect the CCPs. For $\mathbf{y}_{t_{k}} = \left(1, 1, 2, 0\right)$ and $\mathbf{y}_{t_{k-1}} = \left(1, 0, 2, 0\right)$, we have that $a^*(t_{k-1})=2$ and $\mathbf{y}_{a^*(t_{k-1}),t_k}=1$. Thus, the consideration probabilities of Agent 1 are 
$\operatorname{Q}_{1}\left( 1 \mid 1, 1, 1\right)$ and $\operatorname{Q}_{1}\left( 2 \mid 0, 1, 0\right)$ since option 1 (and not 2) was chosen at $t_{k-1}$.
\hfill $\square$
\end{example}
\setcounter{example}{\value{aux}}

\subsection{Nonobservable Default}\label{sec: nodefault}
\noindent In many settings, the decision to choose the default alternative is often not observed. For example, if the default is ``do nothing,'' then at any point in time that there is no change in the behavior of a given agent, we do not know whether she woke up and decided to do nothing or she did not have an opportunity to make a new decision. When this happens, even in a continuous-time data setting (Dataset 1), there is no hope to separately identify $\lambda_a$ and $\operatorname{P}_a$. Therefore, some form of normalization is required. In our empirical application, we find it convenient to assume that $\lambda_a=1$. This implies that, on average, agents have an opportunity to make a choice once per unit of time. Once $\lambda_a$ is normalized, we can identify the CCPs $\operatorname{P}_a$ from the data directly, with which we can follow the identification results for network structure, consideration probabilities, and choice rules. 

\section{Application}\label{sec: model application}

\noindent We investigate peer effects in consideration and payoffs on the expansion decisions of the two dominant tea chains in the high-end tea industry in China. We have three goals: First, we showcase our identification strategy and provide a practical estimation procedure. Second, we show that ignoring the presence of limited consideration might mislead our estimates of profitability of different markets. Third, we quantify the direct effect of limited consideration and peer effect in consideration on the dynamics of market structure. 

The tea beverage industry in China has seen a rapid expansion with its overall revenue increasing from 42.2 to 83.1 billion yuan from 2017 to 2020. This industry is divided into three segments: high-end, middle, and low-end. We study the two leading tea firms in the high-end segment ---Heytea and Nayuki--- which did not accept franchising before 2022. We acquired city-level store registration data from a commercial provider that sources records from the National Enterprise Credit Information Publicity System \citep{registrationdata_2021}. This dataset allows us to determine when the enterprise enters and exits the market, enabling us to construct the cumulative number of stores in each city. We supplement the registration data with regional information from the China City Statistical Yearbook \citep{NBSC}.\footnote{Online Appendix C.1 offers more details on the dataset we use.} To avoid any changes in demand caused by the COVID-19 pandemic, we restrict our sample until the end of 2020. By then, Nayuki had $485$ stores in $57$ cities, while Heytea had $729$ stores in $46$ cities.\footnote{The tea chain dataset we acquired for the empirical exercise draws on aspects of the city-level store registration data from a larger coffee chain dataset collected by Ping Xiao (with funding support from the Melbourne Business School Internal Competitive Grant 2022) to jointly study with Ruli Xiao the impact of digital transformation of two major coffee chains in China.}

\subsection{Empirical Model}
\noindent We first describe the model of firm expansion decisions and then introduce the specifications for consideration and payoffs. We define a market at the level of a prefecture-level city, which ranks below a province and above a county in China's administrative structure. Thus markets are geographically isolated from each other. We collect all unknown primitives by $\theta$.\bigskip

\noindent \textbf{Choice Set, Agents, and Peers} There are finite sets of firms, $\mathcal{F}$, and markets to expand to, $\mathcal{M}$. Every firm $f$ decides whether to open a store ($v=1$) in market $m$ or not ($v=0$). We call a pair $a=(f,m)\in\mathcal{F}\times\mathcal{M}$ an agent ---knowing the firm and the market identifies the agent and vice versa. Thus, $\mathcal{A}=\mathcal{F}\times\mathcal{M}$ and $\mathcal{Y}=\{0,1\}$.\footnote{We abuse notation a bit since $\mathcal{A}$ was previously defined as a set of the form $\{1,2,\dots, A\}$. To be consistent with the initial notation, we can take any one-to-one mapping $\tilde{a}:\mathcal{F}\times\mathcal{M}\to\{1,2,\dots, \abs{\mathcal{F}\times\mathcal{M}}\}$ and define an agent as $a=\tilde{a}(f,m)$.} The set of markets in which these firms can open a new store is quite large, and the data correspond to a time when these firms were relatively new in the industry. We argue that managers might circumscribe the set of markets they consider at a given time to facilitate the decision process. At the moment of deciding whether to open a store, the attention that firm $f$ pays to market $m$ depends on its own and competitor's past choices in market $m$; it also depends on past openings at ``neighboring'' markets. Formally, $\mathcal{NC}_a$ and $\mathcal{NR}_a$ are the sets of pairs of firms and markets that affect consideration and payoffs, respectively, of firm $f$ in market $m$. 

As we just described, each firm decides whether to open a new store in each market. At the end of the analysis of network identification in Section~\ref{PIP}, we modified Example 1 to show that with only one non-default option we cannot rely on double differences to recover all parts of the network structure, and thereby need to assume partial knowledge of it. We follow the literature \citep[e.g.,][]{arcidiacono2016estimation} and assume the marginal profit of firm $f$ in market $m$ from opening a new store is only affected by both its own and its competitors' openings in market $m$. Formally, $(f',m')\in \mathcal{NR}_{(f,m)}$ if and only if $m=m'$. After constructing $\mathcal{NR}_a$ in this way, we estimate the consideration network from the data. A potential concern with this assumption is the possibility of spillover effects across markets on profits. For example, the profits in different markets may be correlated through shipping cost savings from the distribution chain (see, for instance, \citealp{jia2008happens,holmes2011diffusion,zheng2016spatial,Houde2023}). We believe this is unlikely in our application as Nayuki relies on third-party logistics for shipping instead of building its own distribution centers. Moreover, its financial report states that storage and shipping accounted for approximately only $1.9$ percent of total revenues in 2020.\footnote{\url{http://www.cn156.com/cms/scm/105854.html}, Assessed August 2025.}\footnote{We had not found any reliable sources for Heytea. But, being similar in other respects, we suspect the same argument applies to Heytea.} 
Another potential source of spillover effects is information aggregation: the more stores a firm has in the area, the more information it has about the profitability of a particular market. In Online Appendix C.3, we re-estimate the model allowing the number of own stores in the nearby markets to affect firm's profitability in the focal market ---keeping the assumption that rival's stores in the nearby markets only affect consideration. We show the main results are robust to these potential spillover effects.

\bigskip 
\noindent\textbf{Observable Characteristics} Every market $m$ at every moment of time $t$ is characterized by observed market characteristics $S_{mt}$ (e.g., GDP and population density) that include a constant. Let $N_{at}$ denote the number of stores of Agent $a$ (i.e., the number of stores of firm $f$ in market $m$). Also define $S_t=(S_{mt})_{m\in\mathcal{M}}$ and $N_{t}=(N_{at})_{a\in\mathcal{A}=\mathcal{F}\times\mathcal{M}}$.

\bigskip 
\noindent\textbf{Market Consideration} In our application, there are 71 markets where at least one firm opened a store by the end of the measurement. As we already justified, we allow firms to consider only a subset of markets when making a decision. Given the numbers of stores each firm had in the market at time $t$, $N_t$, the probability that firm $f$ considers opening a new store in market $m$ at time $t$ is
\[
\operatorname{Q}_a(1\mid N_t,S_t,\mathcal{NC}_a)=\operatorname{F}_{\tilde\varepsilon}\left(\bar{\tilde{\pi}}_{at}(S_{t},N_{t};\theta)\right),
\]
where $\operatorname{F}_{\tilde\varepsilon}$ is a known c.d.f.; $\theta$ is the vector of unknown parameters; and $\bar{\tilde{\pi}}_{at}(S_{t},N_{t};\theta)$ is the mean attention index, which is known up to $\theta$. We allow the current market features of market $m$ (including the market characteristics and all firms' number of stores) to affect the attention index of Agent $a$. Moreover, we allow the market structure of Agent $a$'s neighborhood markets to affect her attention to market $m$.

\bigskip
\noindent\textbf{Payoff from a New Store} Conditional on a market being considered, the firm decides whether to open at least one new store in that market based on its marginal profit $\pi_{at}$. This marginal profit captures not just the instantaneous (one period) profitability of an extra store, but the expected profitability of the store in the long run.\footnote{We do not explicitly model forward-looking behavior to focus on the peer effect in limited consideration on firms' decisions. Incorporating forward-looking behavior can be computationally intensive and requires additional assumptions about players' expectations and beliefs, which is beyond the scope of this paper.} The probability of opening a new store in market $m$ by firm $f$ at time $t$ conditional on it being considered is
\[
\operatorname{R}_a(1\mid N_t,S_t,\mathcal{NR}_a,\{0,1\})=\operatorname{F}_{\varepsilon}\left(\bar{\pi}_{at}(S_{t}, N_{t};\theta)\right),
\]
where $\operatorname{F}_{\varepsilon}$ is a known c.d.f. and $\bar{{\pi}}_{at}(S_{t},N_{t};\theta)$ is the mean marginal profit ---known up to $\theta$. 

\bigskip 
\noindent\textbf{Model Implied CCP} Altogether, the probability that firm $f$ opens a new store in market $m$ (with $a=(f,m)$) is
\begin{equation*}
\operatorname{Pr}_a(1\mid N_t,S_t;\theta)=\operatorname{F}_{\tilde\varepsilon}\left(\bar{\tilde{\pi}}_{at}(S_{t},N_{t};\theta)\right)\operatorname{F}_{\varepsilon}\left(\bar{\pi}_{at}(S_{t}, N_{t};\theta)\right),
\end{equation*}
which completely characterizes the probability of observing a new store in a given market by a given firm conditional on the history and the market characteristics. When evaluated at the true parameter value $\theta_0$, $\operatorname{Pr}_a$ matches the CCP $\operatorname{P}_a$, i.e., $\operatorname{P}_a(1\mid N_t,S_t)=\operatorname{Pr}_a(1\mid N_t,S_t;\theta_0)$.

The vector of parameters $\theta$ contains the parameters entering $\bar{\tilde{\pi}}_{at}$ and $\bar{{\pi}}_{at}$ that relate to both the covariates and the consideration network structure $\mathcal{NC}_{a}$, $a\in\mathcal{A}$. Note that $\mathcal{NR}_a=\{(f',m')\::\: f'\neq f, m=m'\}$ is assumed to be known and, thus, is not a part of $\theta$.

\subsection{Estimation} 
\noindent \textbf{Data} The data we have consist of three objects: (i) the exact date of store openings $\{t_k\}_{k=1}^K$; (ii) the state of the market structure $\{N_{at_k}\}_{a\in\mathcal{A},k=1,\dots,K}$ sampled from a continuous time over interval $[0,t_K]$, where $N_{at_k}$ is the number of stores owned by firm $f$ in market $m$ immediately prior to $k$-th change at time $t_k$ ---the last date of measurements coincides with the last day in which any action was observed; and (iii) observable market characteristics $\{S_{m,t_k}\}_{a\in\mathcal{A},k=1,\dots,K}$.

\bigskip
\noindent \textbf{Likelihood Function} Our identification argument is constructive and can be used to estimate the model nonparametrically. However, for small and moderate-sized samples, we suggest using the parametric maximum likelihood estimator of CCPs $\operatorname{P}_a$, as it is reasonably flexible in allowing market-specific consideration network links and efficiently uses all variations across markets (see Online Appendix B for details on the nonparametric estimator). We also add the network links in the parametrization of CCPs to estimate the CCPs, these network links, and the other consideration and payoff parameters in one step ---instead of first estimating the CCPs and then applying our identification argument to estimate the rest of the model. In particular, we construct from the data a state vector $r_{t_k}=(r_{at_k})_{a\in\mathcal{A}}$, where $r_{at_k}$ indicates whether there is a change in the number of stores of firm $f$ in market $m$ at time $t_k$, i.e.,
$
r_{at_k}=\Char{N_{at_{k+1}}>N_{at_k}}.
$ The probability of observing $r_{t_k}$, given the data and model parameters $\theta$ conditional on an alarm clock going off, is 
\begin{align*}
p(r_{t_k},S_{t_k},N_{t_k};\theta) &=\prod_{a:r_{at_k}=1} \operatorname{Pr}_a(1\mid N_{t_k},S_{t_k};\theta) \times\prod_{a:r_{at_k}=0}\Big[1- \operatorname{Pr}_a(1\mid N_{t_k},S_{t_k};\theta) \Big].
\end{align*}
Hence, the probability that no new stores are opened in any market by any firm, given market characteristics and number of stores already opened (probability of picking the default), is
\begin{align*}
p_{0}(S_{t_k},N_{t_k};\theta) &=
\prod_{a\in\mathcal{A}}\Big[1-\operatorname{Pr}_a(1\mid N_{t_k},S_{t_k};\theta)\Big].
\end{align*}
Finally, given that the arrival process is exponential, the log-likelihood of observing the data given $\theta$ and normalizing $\lambda_a = 1$ (the choice of ``doing nothing'' is not observed; see Section~\ref{sec: nodefault} for details on this normalization) is
\begin{align*}
    \hat{L}(\theta)=\sum_{k=1}^{K} -(t_{k+1}-t_{k})\lambda (1-p_{0}(S_{t_k},N_{t_k};\theta))+\ln (\lambda p(r_{t_{k}},S_{t_k},N_{t_k};\theta)).
\end{align*}
The maximum likelihood estimator of $\theta$, $\hat{\theta}$, is defined as the maximizer of $\hat{L}$ over a parameter space $\Theta$.\footnote{Checking all possible network structures is not feasible in our application. We use a variation of a greedy algorithm. See Online Appendix C.2 for further details.} The estimator $\hat{\theta}$ leads to an estimator of CCPs
\[
\hat{\operatorname{P}}_a(1\mid N_t,S_t)=\operatorname{Pr}_a(1\mid N_t,S_t;\hat{\theta}).
\]
The construction of confidence sets for the parameters and estimated CCPs would require taking into account the estimation error in the estimated network. We leave this difficult problem for future research. An important feature stemming from Assumptions~\ref{ass: MM},~\ref{ass: A1}(i), and~\ref{ass: A2}(i), as shown by Proposition~\ref{prop: unique equilibrium}, is that our model has a single equilibrium or invariant distribution, i.e., the multiplicity of equilibria in the data-generating process is not an issue in our estimation.

\bigskip
\noindent\textbf{Parameterization} We assume that $\operatorname{F}_{\tilde\varepsilon}$ and $\operatorname{F}_{\varepsilon}$ are Logistic c.d.f. Given that our sample size is small relative to the number of agents ---there are $598$ different dates in which we observed firms opening a store\footnote{For some days, we observe multiple agents opening a store. As we mentioned earlier, the identification results in Section~\ref{PIP} are still valid if agents have perfectly synchronized clocks.} for $71\times2$ agents--- we use the following second-degree polynomial parameterization to flexibly approximate mean marginal profits and mean attention index:\footnote{We have estimated the model under several alternative specifications. The results are qualitatively the same and are available upon request.} 
\begin{align*}
    \bar{\pi}_{at}(S_{t}, N_{t};\theta)=&S_{mt}\tr \beta_{f}+\sum_{f'}\Big[N_{(f',m)t}\alpha_{f,f'}+N^2_{(f',m)t}\gamma_{f,f'}\Big], \\
    \bar{\tilde{\pi}}_{at}(S_{t}, N_{t};\theta)=&S_{mt}\tr \tilde\beta_{f} +\sum_{f'}\Big[N_{(f',m)t}\tilde\alpha_{f,f'}+N^2_{(f',m)t}\tilde\gamma_{f,f'}\Big]+\\
    +&\sum_{f'}\left[\tilde\delta_{f,f'}\sum_{a'':f''=f'}\delta_{m,m''}N_{a''t}+\tilde\eta_{f,f'}    \left(\sum_{a'':f''=f'}\delta_{m,m''}N_{a''t}\right)^2\right].  
\end{align*}
The parameter $\delta_{m,m''}\in\{0,1\}$ captures the consideration network structure. It is equal to $1$ if stores in market $m''$ affect consideration in market $m$ and it is $0$ otherwise. Note that estimation of the consideration network and probabilities crucially relies on variation of the number of openings by the two firms in nearby markets. For example, $\hat \delta_{m,m''}=1$ if allowing the variation of openings in market $m''$ to affect the CCPs of market $m$ increases the value of the likelihood function. That is, as in our identification strategy, the estimates add market $m''$ to the consideration network of market $m$ if changing the number of stores in market $m''$ affects firms' CCPs in market $m$. Therefore, it is the variation in openings that allows us to pin down the values of the parameters that define the different parts of the model in the maximum likelihood estimation.

The mean marginal profit has two parts: the first one captures the impact of the observable market characteristics, and the second part captures the impact of the number of stores all firms have in market $m$.\footnote{The fully structural model of marginal profits should contain information on fixed and marginal costs and prices among many other things. We specify the marginal profit function in the reduced form because of the availability of the data and to simplify the analysis.} The mean attention has an additional part capturing the peer effect in consideration from markets different from $m$, i.e., we allow firm-specific peer effects.\footnote{We use a second-degree polynomial in our approximation to capture potential nonmonotonicities. For instance, in markets with few stores, firms may act as complements, while, in more mature markets, firms may become substitutes. This would result in marginal profits that are not monotonic in competitors' stores.} 

The parameterization we use imposes two restrictions that are not needed for identification but reduce the computational burden: First, the payoff and consideration parameters are firm-specific and do not change across markets. Second, the consideration network link parameters $\delta_{m,m''}$ (that capture $\mathcal{NC}_a$) vary across markets but not across firms. 

\begin{rem}
    Since we use the total number of stores opened by each firm in every market as the determinant of consideration and expansion probabilities, formally we have a model with infinite history dependence. This, however, does not constitute any issues in our application, since the mean attention and mean marginal profits take known parametric forms. 
\end{rem}

\subsection{Estimation Results}
\noindent\textbf{Network Structure} The estimated consideration network has $266$ directed links (i.e., the adjacency matrix is not symmetric and has $266$ nonzero elements). The initial network we used in optimization uses spatial information about markets (see Online Appendix C.2) and allows up to $563$ links, so we use the likelihood value to close down almost $300$. 

\bigskip
\noindent\textbf{Consideration} We calculated the consideration probabilities for every market and each firm using the numbers of stores and covariate values observed at the beginning (i.e., when Heytea started operating) and at the end of the measurements. Figure~\ref{fig: con hist 12} shows the fraction of markets as a function of consideration probabilities for Heytea and Nayuki, respectively. Both firms display substantial limited consideration at the beginning: the averages (standard deviations) across markets of the consideration probabilities are $0.005$ ($0.006$) and $0.029$ ($0.03$) for Heytea and Nayuki, respectively. By the end of 2020, as a consequence of increases in the number of stores that the firms have in different markets, the consideration probabilities became substantially larger, with Nayuki becoming an almost full consideration firm: the averages (standard deviations) across markets of the consideration probabilities are $0.027$ ($0.04$) and $0.82$ ($0.38$) for Heytea and Nayuki, respectively.\footnote{As anecdotal evidence for our findings, the news reported that Nayuki implemented a nationwide city expansion plan. Beginning in late 2017, expansion moved beyond the province of Guangdong, rapidly extending into South China, Central China, East China, and other regions (\url{https://news.qq.com/rain/a/20230109A03FOE00?utm.com}, Accessed August 2025). In contrast, Heytea announced in 2020 its plan to continue focusing primarily on first-tier and provincial capital cities in China (\url{https://news.sina.cn/gn/2020-07-14/detail-iivhvpwx5274609.d.html}, Accessed August 2025).} We next show that incorporating limited attention is important for obtaining accurate estimates of profitability of markets and that it affects market structure ---which impacts consumer welfare. 

\begin{figure}[h!]
\centering
  \includegraphics[width=0.75\textwidth]{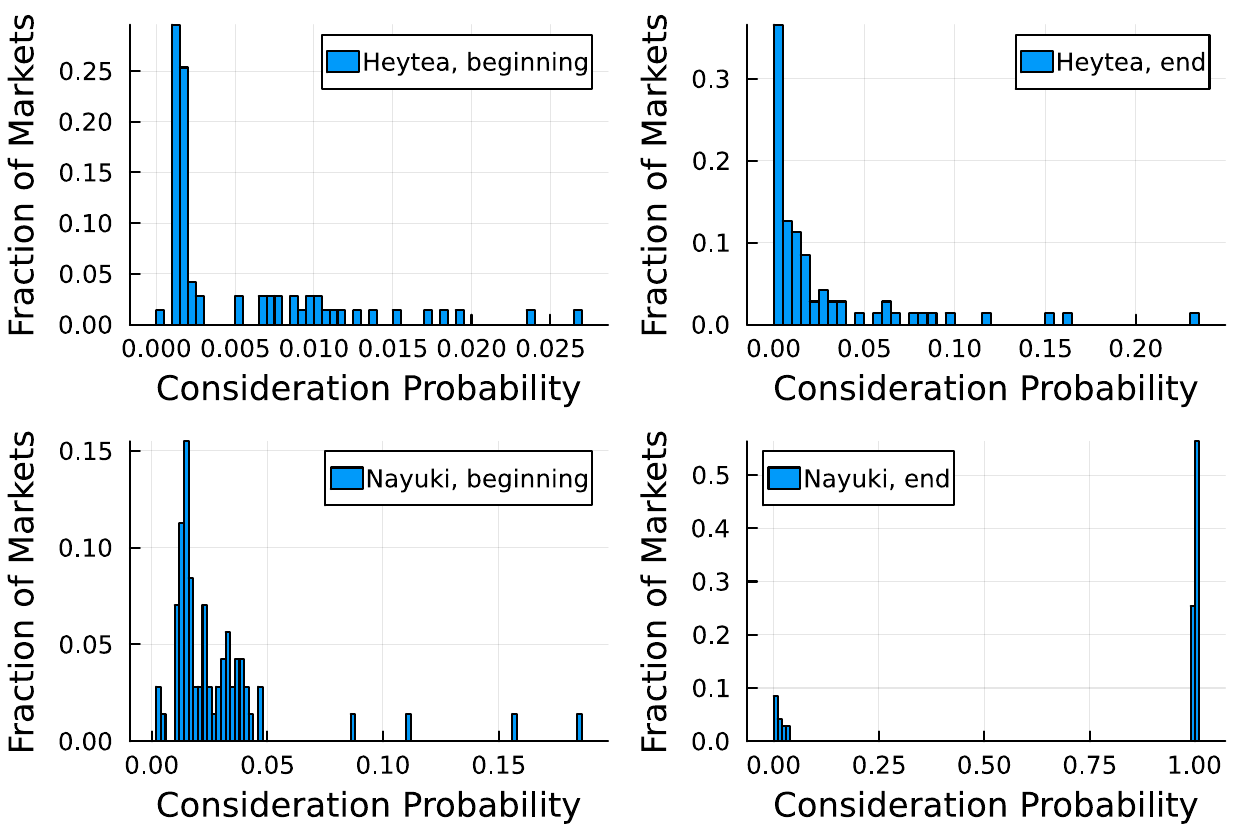}
\caption{\textbf{Normalized histogram of consideration probabilities for both firms at the data's beginning and end.} }
\label{fig: con hist 12}
\end{figure}

\bigskip
\noindent\textbf{Marginal Profits Estimates} We analyze the probabilities of opening a new store across markets (conditional on being considered). To quantify the effect of adding limited consideration to the expansion decision, we also estimated the marginal profit parameters assuming that all markets are considered. We refer to the former as limited consideration estimates and to the latter as full consideration estimates. The results of the estimation are presented in Figure~\ref{fig: exp hist 12}. The difference between the limited and full consideration expansion probabilities is striking. In the beginning, the full consideration model would substantially underestimate the expansion probabilities (all points are below the $45^{\circ}$-line). By the end of 2020, while the discrepancy almost disappeared for Nayuki, which is not surprising given that it became an almost full consideration firm, the expansion probabilities for Heytea are still heavily underestimated. Thus, ignoring limited consideration leads to completely misleading estimates about the profitability of different markets. Qualitatively, this difference is explained by the fact that the full consideration model attributes ``not-opening'' a new store to negative marginal profits instead of limited consideration.

\begin{figure}[h!]
\centering
  \includegraphics[width=0.75\textwidth]{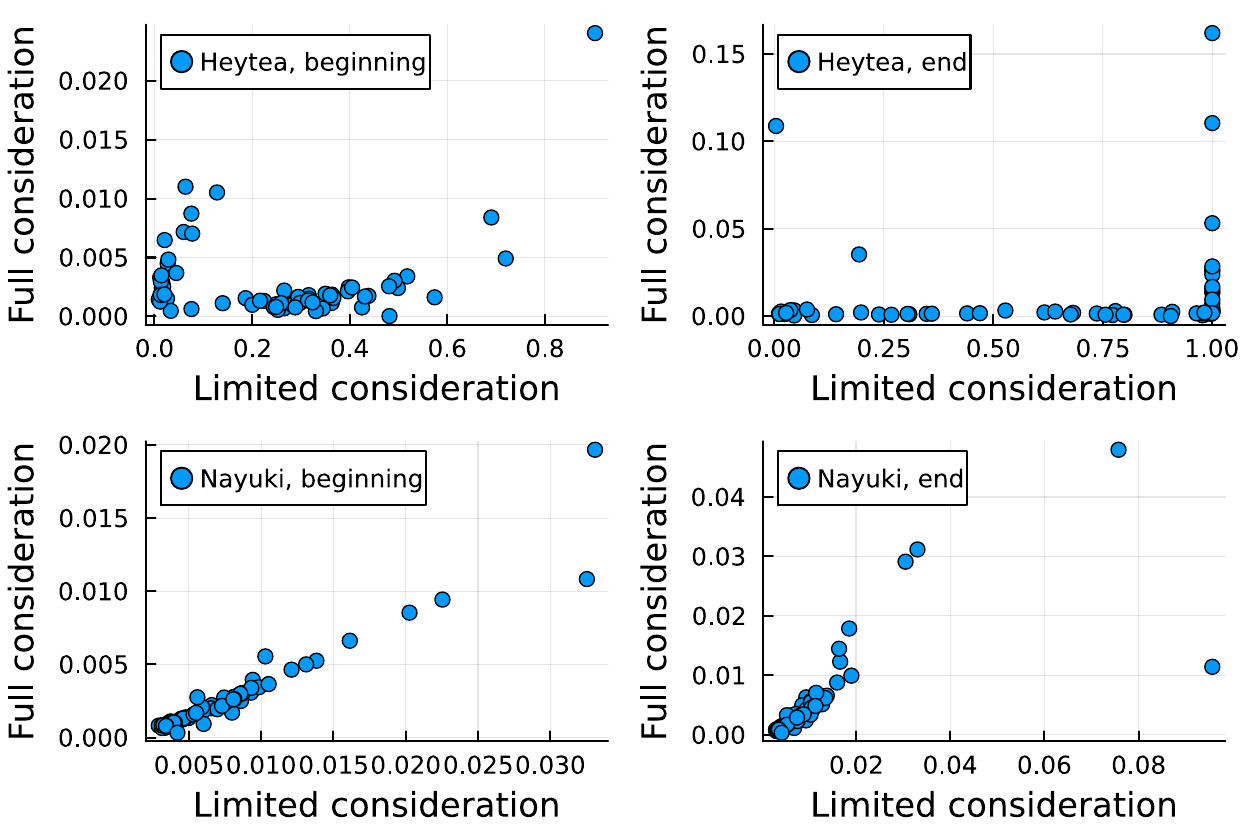}
\caption{\textbf{Limited consideration vs. full consideration expansion probabilities.} }
\label{fig: exp hist 12}
\end{figure}

\subsection{Counterfactuals}
\noindent We evaluate the effect of limited consideration on market structure by comparing the fraction of monopolistic, duopolistic, and markets that are not served across time between our limited consideration model and a situation in which firms were fully attentive, i.e., the firms consider all 71 markets but marginal profit functions are kept the same as in our model estimates. 

The simulation starts with zero stores and then creates expansion decisions for about $500$ days. Figure~\ref{fig: C1} depicts the fraction of monopolistic, duopolistic, and markets not served by any firm as a function of time in the estimated limited consideration setting and the one that imposes full consideration. Limited consideration has a large effect on the dynamics of market structure. With full consideration, almost all markets would be served rather faster. For instance, in less than a year both firms would be present in about half of the markets. 

\begin{figure}[h!]
\centering
\begin{subfigure}{.33\textwidth}
  \centering
  \includegraphics[width=0.9\textwidth]{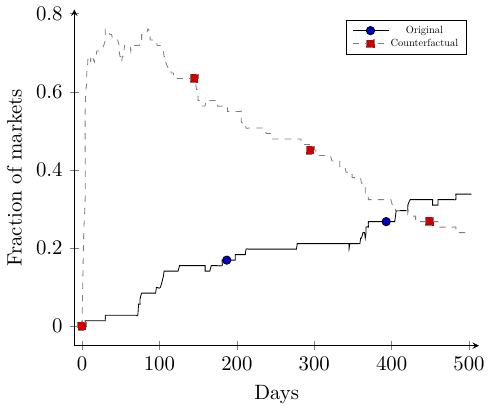}
    \caption{Monopolies.}
    \label{fig: C1 mon}
\end{subfigure}
\begin{subfigure}{.33\textwidth}
  \centering
  \includegraphics[width=0.9\textwidth]{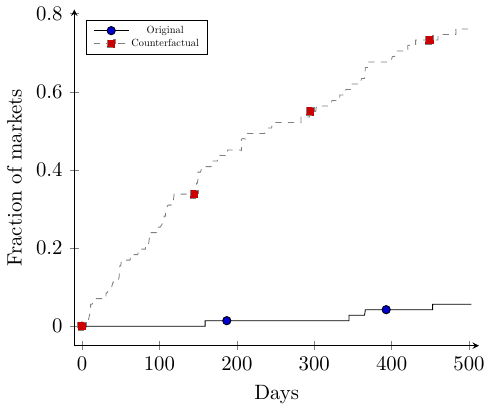}
    \caption{Duopolies.}
    \label{fig: C1 duo}
\end{subfigure}%
\begin{subfigure}{.33\textwidth}
  \centering
  \includegraphics[width=0.9\textwidth]{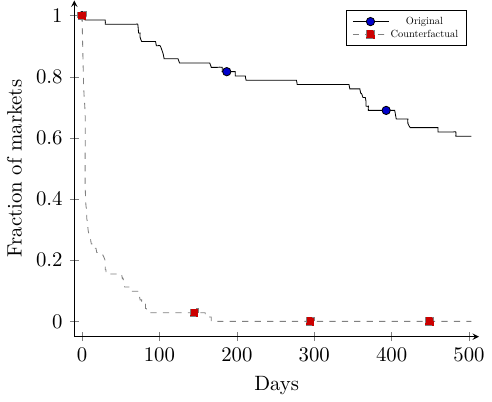}
    \caption{Not served.}
    \label{fig: C1 ns}
\end{subfigure}
\caption{\textbf{Counterfactual Scenario 1. Fraction of monopolistic, duopolistic, and markets that are not served over time.}}
\label{fig: C1}
\end{figure}

In the second counterfactual, we remove connections across markets in consideration. Specifically, we shut down the effect of own and opponent stores in the \emph{neighboring} markets on consideration.\footnote{We abstract away from any potential dependence between the network formation and the expansion decision processes and assume that the network is an exogenously given fixed parameter. This assumption is important for any form of counterfactual analyses that involves changes in the network structure.} The market penetration is not affected much. As part (b) of Figure~\ref{fig: C2} shows peer effects on consideration from the neighboring markets speed up competition.

\begin{figure}[h!]
\centering
\begin{subfigure}{.33\textwidth}
  \centering
  \includegraphics[width=0.9\textwidth]{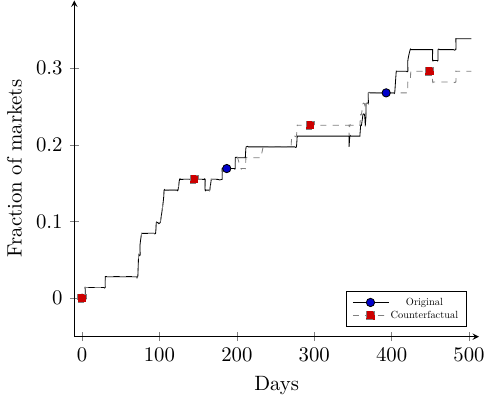}
    \caption{Monopolies.}
    \label{fig: C2 mon}
\end{subfigure}
\begin{subfigure}{.33\textwidth}
  \centering
  \includegraphics[width=0.9\textwidth]{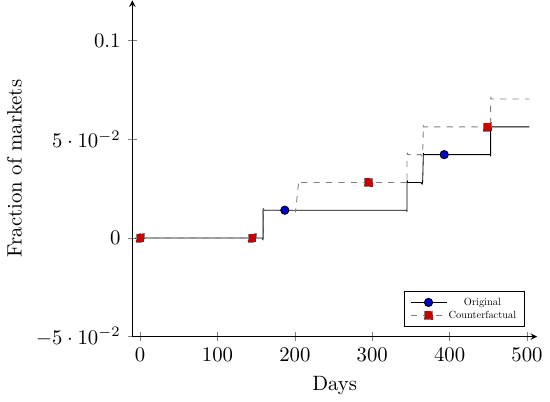}
    \caption{Duopolies.}
    \label{fig: C2 duo}
\end{subfigure}%
\begin{subfigure}{.33\textwidth}
  \centering
  \includegraphics[width=0.9\textwidth]{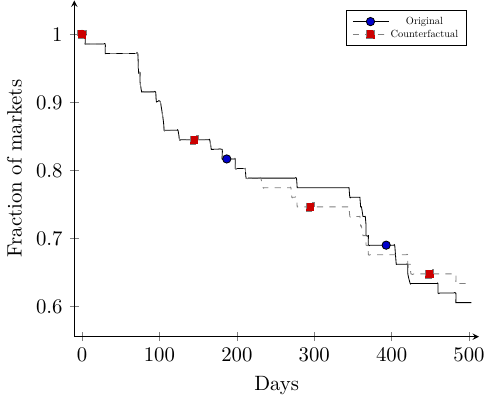}
    \caption{Not served.}
    \label{fig: C2 ns}
\end{subfigure}
\caption{\textbf{Counterfactual Scenario 2. Fraction of monopolistic, duopolistic, and markets that are not served over time.}}
\label{fig: C2}
\end{figure}

\section{Conclusion}\label{sec: conclusion}

\noindent This paper offers a model of interactions in which different types of peers affect the choices of a given agent via different mechanisms. We show that these peer effect mechanisms have different behavioral implications in the data. This allows us to recover the set of connections between the agents and the type of interactions between them. The choices of different types of peers act as dual exclusion restrictions and allow us to recover the consideration probabilities and the random preferences. We apply the model to data on tea chains expansions in China. The empirical application adds to the literature on boundedly rational firms. While studying a rather general model, we leave a few interesting variants for future research, such as forward-looking behavior or the  possibility that each agent ends up making decisions with the purpose of affecting the consideration set of others. We believe this setup could lead to a new model of endogenous social norms.

\section{Data Availability}
\noindent The data sets and replication codes underlying this article are available in Zenodo, at
\url{https://doi.org/10.5281/zenodo.18155977}.
\setstretch{0.9}
\bibliography{references}

\setstretch{1.5}
\newpage
\section*{Online Appendix for ``Peer Effects in Consideration and Preferences''}

\noindent The main paper contains detailed sketches of the proofs of the identification results. This Online Appendix further formalizes these arguments and presents the regularity condition. The Appendix also offers a simulation and an estimation analysis for the running example (Example 1) in the main paper. It finally adds additional details and results to the empirical application.
\bigskip \bigskip

\appendix

\section{Proofs}\label{sec: proof}

\subsection{Proof of Proposition~\ref{prop: unique equilibrium}}
\noindent For an irreducible, finite-state, continuous Markov chain, the equilibrium $\mu$ exists, is unique and has full support. Thus, we only need to prove that Assumptions~\ref{ass: A1}(i) and~\ref{ass: A2}(i) imply that the Markov chain induced by our model is irreducible. Note that
\[
\operatorname{P}_{a}\left( v \mid \mathbf{y}\right) =\sum\nolimits_{ 
\mathcal{C}\subseteq \mathcal{Y}}\operatorname{R}_{a}\left( v \mid \mathbf{y},\mathcal{NR}_{a},\mathcal{C}\right)\operatorname{C}_{a}\left( \mathcal{C}  \mid \mathbf{y},\mathcal{NC}_{a},\mathcal{Y}\right).
\]
Assumption~\ref{ass: A1}(i) implies that given any $\mathbf{y}$, any $v$ is always considered with a positive probability by any Agent $a$. Assumption~\ref{ass: A2}(i) then implies that any option is picked with a positive probability if considered. Thus, $0<\operatorname{P}_{a}\left( v \mid \mathbf{y}\right)<1$ for all $a$ and $\mathbf{y}$, and we can go from one configuration to the other one in less than $A$ steps with a positive probability.

\subsection{The Regularity Condition}

\noindent We formally state and discuss the regularity condition needed for the identification of the network in Section~\ref{sec: model identification} of the main text of the paper. The aim of this condition is to eliminate some ties that would only arise under very unlikely situations ---that we describe in the paper. 

For a given $a\in\mathcal{A}$, define the set of all possible values that $\operatorname{NR}^{\mathcal{Y}}_a(\mathbf{y})$ and $ \operatorname{NC}^{\mathcal{Y}}_a(\mathbf{y})$ can take:
\[
\operatorname{Nrc}_a=\left\{\left(\operatorname{NR}^{\mathcal{Y}}_a(\mathbf{y}), \operatorname{NC}^{\mathcal{Y}}_a(\mathbf{y})\right)\::\: \mathbf{y}\in\mathcal{Y}^A\right\}.
\]
In addition, define $\bar{\operatorname{P}}_a\left(v\mid \mathbf{nr},\mathbf{nc}\right)$ as the probability that Agent $a$ picks option $v\neq 0$ conditional on $(\mathbf{nr},\mathbf{nc})\in\operatorname{Nrc}_a$, where $nr^{v'}$ and $nc^{v'}$ denote the number of peers that affect preference and consideration, respectively, picking alternative $v'$, with $v'\in\mathcal{Y}\setminus\{0\}$. That is,
\[
\bar{\operatorname{P}}_a\left(v\mid \mathbf{nr},\mathbf{nc}\right)=\sum_{\mathcal{C}\subseteq\mathcal{Y}}\operatorname{R}_a\left(v\mid nr^{\mathcal{C}},\mathcal{C}\right)\operatorname{C}_a\left(\mathcal{C}\mid \mathbf{nc},\mathcal{Y}\right),
\]
where $nr^{\mathcal{C}}=\left(nr^{v'}\right)_{v'\in\mathcal{C}\setminus\{0\}}$ and
\[
\operatorname{C}_a\left(\mathcal{C}\mid \mathbf{nc},\mathcal{Y}\right)=\prod_{v'\in\mathcal{C}}\operatorname{Q}_a(v'\mid nc^{v'})\prod_{v'\in\mathcal{Y}\setminus\mathcal{C}}(1-\operatorname{Q}_a(v'\mid nc^{v'})).
\]

Let $\Delta_{v,v'} f(\mathbf{x},\mathbf{y})$ denote an operator that computes the increment of a given function when the $v$-th component of $\mathbf{x}$ and the $v'$-th component of $\mathbf{y}$ are increased by $1$, respectively. We use the convention that if $v=0$, then $\mathbf{x}$ remains unchanged. Similarly, if $v'=0$, then $\mathbf{y}$ remains unchanged.

The next assumption describes the "regularity condition".
\begin{assumption}[Regularity Condition]\label{ass: separability}
    For any $a\in\mathcal{A}$, 
    \begin{enumerate}
        \item there exist $v\in\mathcal{Y}\setminus\{0\}$ and a vector of aggregate peers' choices $(\mathbf{nr},\mathbf{nc})\in \operatorname{Nrc}_a$ such that
    \[
    \Delta_{v,v}\ln\bar{\operatorname{P}}_a(v\mid \mathbf{nr},\mathbf{nc})\neq 0;
    \]
    \item  there exist three sets of alternative pairs and aggregate peers' choices, i.e., $\{v_i, w_i,\mathbf{nr}_i,\mathbf{nc}_i\}$, with $v_i,w_i\in\mathcal{Y}\setminus\{0\}$,  $v_i\neq w_i$,  $(\mathbf{nr}_i,\mathbf{nc}_i)\in \operatorname{Nrc}_a$, and $i=1,2,3$, such that
    \begin{align*}
    \Delta_{w_1,0}\Delta_{v_1,0}\ln\bar{\operatorname{P}}_a(v_1\mid \mathbf{nr}_1,\mathbf{nc}_1)\neq 0,\\
        \Delta_{0,w_2}\Delta_{v_2,0}\ln\bar{\operatorname{P}}_a(v_2\mid \mathbf{nr}_2,\mathbf{nc}_2)\neq 0, and\\
        \Delta_{w_3,w_3}\Delta_{v_3,0}\ln\bar{\operatorname{P}}_a(v_3\mid \mathbf{nr}_3,\mathbf{nc}_3)\neq 0.
    \end{align*}
    \end{enumerate}    
\end{assumption}
Assumption~\ref{ass: separability}(i) ensures that peer effects in consideration and preferences do not cancel out. This guarantees that peers that affect both consideration and preferences can be distinguished from those who are not in the reference group of Agent $a$. Assumption~\ref{ass: separability}(ii) is needed to distinguish peers who affect consideration only from those who affect preference. Specifically, for consideration-only peers, the double shifts described above are always zero. The conditions in Assumption~\ref{ass: separability}(ii) ensure that the double shift in the observed CCPs is nonzero for peers who affect preference in any of the three key scenarios contemplated by Assumption~\ref{ass: separability}(ii). 

It is worth emphasizing that the inequality in Assumption~\ref{ass: separability}(i) is only required to hold for one configuration of actions and peers. Additionally, Assumption~\ref{ass: separability}(ii) allows $v_1=v_2=v_3$, $w_1=w_2=w_3$, and $(\mathbf{nr}_1,\mathbf{nc}_1)=(\mathbf{nr}_2,\mathbf{nc}_2)=(\mathbf{nr}_3,\mathbf{nc}_3)$. Furthermore, as the number of peers  and/or the size of the menu grow, it gets harder to violate Assumption~\ref{ass: separability}. Therefore, Assumption~\ref{ass: separability} is a mild functional form restriction that is usually generically satisfied. 

The following example clarifies the scope of Assumption~\ref{ass: separability}. 

\begin{example}
Suppose that
\[
\operatorname{R}_{a}\left( v \mid nr^{\mathcal{C}},\mathcal{C}\right)=\dfrac{u_v(nr_v)}{\sum_{v'\in\mathcal{C}}u_{v'}(nr_{v'})},
\]
where $u_0(nr_0)=1$ and $u_v(\cdot)$, $v\in\mathcal{Y}\setminus\{0\}$, are strictly monotone positive functions. That is, after the consideration set is formed, Agent $a$ picks alternatives according to a logit-type rule.

Then, for the binary choice case, i.e., $Y=1$ and $v=1$, we have that
\begin{align*}
\bar{\operatorname{P}}_{a}(v\mid nr^{v},nc^{v})&=\operatorname{Q}_{a}\left( v \mid nc^{v}\right)\dfrac{u_1(nr^{v})}{1+u_1(nr^{v})}.
\end{align*}
In this case, Assumption~\ref{ass: separability}(i) is violated if, for all admissible values of $nr^{v}$ and $nc^{v}$, the following equality holds:
\begin{align*}
&\dfrac{\operatorname{Q}_{a}\left( v \mid nc^{v}+1\right)}{\operatorname{Q}_{a}\left( v \mid nc^{v}\right)}
=\dfrac{u_1(nr^{v})}{1+u_1(nr^{v})}\dfrac{1+u_1(nr^{v}+1)}{u_1(nr^{v}+1)}.
\end{align*}
A larger peer group makes it harder for this equality to hold for all values. 

We also illustrate that the same conclusion holds with a larger menu of choices. Specifically, if we add one more alternative $v'=2$, then
\begin{align*}
\bar{\operatorname{P}}_{a}(v\mid nr^{v},nr^{v'},nc^{v},nc^{v'})&=\operatorname{Q}_{a}\left( v \mid nc^{v}\right)\operatorname{Q}_{a}\left( v' \mid nc^{v'}\right)\dfrac{u_1(nr^{v})}{1+u_1(nr^{v})+u_2(nr^{v'})}\\
&+\operatorname{Q}_{a}\left( v \mid nc^{v}\right)(1-\operatorname{Q}_{a}\left( v' \mid nc^{v'}\right))\dfrac{u_1(nr^{v})}{1+u_1(nr^{v})}\\
&=\operatorname{Q}_{a}\left( v \mid nc^{v}\right)\dfrac{u_1(nr^{v})}{1+u_1(nr^{v})}\dfrac{1+u_1(nr^{v})+\left[1-\operatorname{Q}_{a}\left( v' \mid nc^{v'}\right)\right]u_2(nr^{v'})}{1+u_1(nr^{v})+u_2(nr^{v'})}.
\end{align*}
So Assumption~\ref{ass: separability}(i) is violated only if, for $v\in\{1,2\}$,     $\Delta_{v,v}\ln\bar{\operatorname{P}}_a(v\mid \mathbf{nr},\mathbf{nc})= 0$ for all admissible values of $\mathbf{nr}$ and $\mathbf{nc}$. To illustrate how strong this condition is, note that a violation at $\mathbf{nr}=\mathbf{nc}=\mathbf{0}$ would indicate that 
\begin{align*}
&\ln \dfrac{\operatorname{Q}_{a}\left( v \mid 1\right)}{\operatorname{Q}_{a}\left( v \mid 0\right)}
+\ln\left[\dfrac{u_1(1)}{1+u_1(1)}\dfrac{1+u_1(0)}{u_1(0)}\right]\\
&+\ln\dfrac{1+u_1(0)+u_2(0)}{1+u_1(1)+u_2(0)}-
\ln\dfrac{1+u_1(0)+\left[1-\operatorname{Q}_{a}\left( v' \mid 0\right)\right]u_2(0)}{1+u_1(1)+\left[1-\operatorname{Q}_{a}\left( v' \mid 0\right)\right]u_2(0)}=0
\end{align*}
for $v\in\{1,2\}$. Also, if $\operatorname{Nrc}_a$ is rich enough so that it contains $\mathbf{nr}=(0,0)$, $\mathbf{nc}=(0,1)$, and $\mathbf{nc}'=(1,1)$, then we can switch from $\operatorname{Q}_{a}\left( v' \mid 0\right)$ to $\operatorname{Q}_{a}\left( v' \mid 1\right)$ without changing other parameters. Hence, we should also have that
\begin{align*}
&\ln \dfrac{\operatorname{Q}_{a}\left( v \mid 1\right)}{\operatorname{Q}_{a}\left( v \mid 0\right)}
+\ln\left[\dfrac{u_1(1)}{1+u_1(1)}\dfrac{1+u_1(0)}{u_1(0)}\right]\\
&+\ln\dfrac{1+u_1(0)+u_2(0)}{1+u_1(1)+u_2(0)}-
\ln\dfrac{1+u_1(0)+\left[1-\operatorname{Q}_{a}\left( v' \mid 1\right)\right]u_2(0)}{1+u_1(1)+\left[1-\operatorname{Q}_{a}\left( v' \mid 1\right)\right]u_2(0)}=0.
\end{align*}
The last two equalities imply that
\[
\dfrac{1+u_1(0)+\left[1-\operatorname{Q}_{a}\left( v' \mid 1\right)\right]u_2(0)}{1+u_1(1)+\left[1-\operatorname{Q}_{a}\left( v' \mid 1\right)\right]u_2(0)}=\dfrac{1+u_1(0)+\left[1-\operatorname{Q}_{a}\left( v' \mid 0\right)\right]u_2(0)}{1+u_1(1)+\left[1-\operatorname{Q}_{a}\left( v' \mid 0\right)\right]u_2(0)}.
\]
The latter is possible if and only if $\operatorname{Q}_{a}\left( v' \mid 0\right)=\operatorname{Q}_{a}\left( v' \mid 1\right)$, which violates Assumption~\ref{ass: A1}(iii).

Note that Assumption~\ref{ass: A1}(iii) also establishes that
\[
\Delta_{0,v'}\Delta_{v,0}\ln\bar{\operatorname{P}}_a(v\mid \mathbf{0},\mathbf{0})\neq 0,
\]
which guarantees Assumption~\ref{ass: separability}(ii) for $i=2$. Assumption~\ref{ass: separability}(ii) is a bit more general. Specifically, violations of Assumption~\ref{ass: separability}(ii) for case $i=1$ mean that the following equations hold 
\begin{eqnarray*}
\Delta_{v',0}\Delta_{v,0}\ln\bar{\operatorname{P}}_a(v\mid \mathbf{nr},\mathbf{nc})= 0,\\
\Delta_{v,0}\Delta_{v',0}\ln\bar{\operatorname{P}}_a(v'\mid \mathbf{nr},\mathbf{nc})= 0,
\end{eqnarray*}
for all combinations of the choice configuration $\mathbf{nr}$ and $\mathbf{nc}$, including $\mathbf{nr}=\mathbf{0}$ and $\mathbf{nc}=\mathbf{0}$. The set of equalities increases with the size of the peer group and/or the set of alternatives, making violations harder to arise. A similar logic carries to case $i=3$. 

\end{example}

\subsection{Proof of Proposition~\ref{prop: identification of peers}}
\noindent Fix some $a\in\mathcal{A}$. We will prove that
\[
a'\not\in\mathcal{N}_a \:\iff\: \dfrac{\operatorname{P}_a\left(v\mid \mathbf{y}\right)}{\operatorname{P}_a\left(v\mid \mathbf{y}'\right)}=1 \text{ for all } v,\text{ and }\mathbf{y},\mathbf{y'} \text{ that are different in the $a'$th component only}.
\]
The ``only if'' part is straightforward. To show the ``if'' part, assume, towards a contradiction, that 
\[    
\dfrac{\operatorname{P}_a\left(v\mid \mathbf{y}\right)}{\operatorname{P}_a\left(v\mid \mathbf{y}'\right)}=1 \text{ for all } \mathbf{y},\mathbf{y'} \text{ that are different in the $a'$th component only},
\]
and $a'\in\mathcal{N}_a$. Let $\mathbf{y}^{v}_{z}$ denote the vector in which the $z$-th component of $\mathbf{y}$ is replaced by $v$. 

Note that the observed CCP can be expressed as
\[
{\tiny\operatorname{P}_{a}\left( v \mid \mathbf{y}\right) =
\operatorname{Q}_{a}\left(v \mid \operatorname{NC}_{a}^{v}\left(\mathbf{y}\right)\right)\times \sum\nolimits_{ 
\mathcal{C}\subseteq\mathcal{Y}\setminus\{v\}}\operatorname{R}_{a}\left( v \mid \operatorname{NR}^{\mathcal{C}\cup\{v\}}_{a}\left(\mathbf{y}\right),\mathcal{C}\cup\{v\}\right)\operatorname{C}_a\left(\mathcal{C}\mid \operatorname{NC}_a^{\mathcal{Y}\setminus\{v\}}(\mathbf{y}),\mathcal{Y}\setminus\{v\}\right),}
\]
where the first component only depends on the number of consideration peers selecting alternative $v$, and the second component depends on the whole vector of the number of preference peers' choices. 

If $a'\in \mathcal{NC}_a\setminus\mathcal{NR}_a$, then 
\[
\dfrac{\operatorname{P}_a\left(v\mid \mathbf{0}^{v}_{a'}\right)}{\operatorname{P}_a\left(v\mid \mathbf{0}\right)}=\dfrac{\operatorname{Q}_a\left(v\mid 1\right)}{\operatorname{Q}_a\left(v\mid 0\right)}\neq 1,
\]
where the first equality holds by Assumption~\ref{ass: A1}(ii) and the fact that $\operatorname{NC}_{a}^{v}\left(\mathbf{0}\right)=0$ and $\operatorname{NC}_{a}^{v}\left(\mathbf{0}^{v}_{a'}\right)=1$. (It also follows as the change of Agent $a'$'s choice does not affect Agent $a$'s preference towards any alternative.) The last inequality follows from Assumption~\ref{ass: A1}(iii). 

Similarly, if $a'\in \mathcal{NR}_a\setminus\mathcal{NC}_a$, then
\begin{eqnarray*}
\dfrac{\operatorname{P}_a\left(v\mid \mathbf{0}^{v}_{a'}\right)}{\operatorname{P}_a\left(v\mid \mathbf{0}\right)} &=&\dfrac{\sum\nolimits_{ 
\mathcal{C}\subseteq\mathcal{Y}\setminus\{v\}}\operatorname{R}_{a}\left( v \mid \operatorname{NR}^{\mathcal{C}\cup\{v\}}_{a}\left( \mathbf{0}^{v}_{a'}\right),\mathcal{C}\cup\{v\}\right)\operatorname{C}_a\left(\mathcal{C}\mid \operatorname{NC}_a^{\mathcal{Y}\setminus\{v\}}( \mathbf{0}^{v}_{a'}),\mathcal{Y}\setminus\{v\}\right)} {\sum\nolimits_{ 
\mathcal{C}\subseteq\mathcal{Y}\setminus\{v\}}\operatorname{R}_{a}\left( v \mid \operatorname{NR}^{\mathcal{C}\cup\{v\}}_{a}\left(\mathbf{0}\right),\mathcal{C}\cup\{v\}\right)\operatorname{C}_a\left(\mathcal{C}\mid \operatorname{NC}_a^{\mathcal{Y}\setminus\{v\}}(\mathbf{0}),\mathcal{Y}\setminus\{v\}\right)} \\
&=& \dfrac{\sum\nolimits_{ 
\mathcal{C}\subseteq\mathcal{Y}\setminus\{v\}}\operatorname{R}_{a}\left( v \mid \mathbf{0}_{v}^1,\mathcal{C}\cup\{v\}\right)\operatorname{C}_a\left(\mathcal{C}\mid \mathbf{0},\mathcal{Y}\setminus\{v\}\right)} {\sum\nolimits_{ 
\mathcal{C}\subseteq\mathcal{Y}\setminus\{v\}}\operatorname{R}_{a}\left( v \mid \mathbf{0},\mathcal{C}\cup\{v\}\right)\operatorname{C}_a\left(\mathcal{C}\mid \mathbf{0},\mathcal{Y}\setminus\{v\}\right)}\\
&=& \dfrac{\sum\nolimits_{ 
\mathcal{C}\subseteq\mathcal{Y}\setminus\{v\}}\Big[\operatorname{R}_{a}\left( v \mid \mathbf{0}_{v}^1,\mathcal{C}\cup\{v\}\right)-\operatorname{R}_{a}\left( v \mid \mathbf{0},\mathcal{C}\cup\{v\}\right)\Big]\operatorname{C}_a\left(\mathcal{C}\mid \mathbf{0},\mathcal{Y}\setminus\{v\}\right)} {\sum\nolimits_{ 
\mathcal{C}\subseteq\mathcal{Y}\setminus\{v\}}\operatorname{R}_{a}\left( v \mid \mathbf{0},\mathcal{C}\cup\{v\}\right)\operatorname{C}_a\left(\mathcal{C}\mid \mathbf{0},\mathcal{Y}\setminus\{v\}\right)}+1\\
&\neq& 1,
\end{eqnarray*}
where the first equality holds because the probability of considering $v$ does not change when switching Agent $a'$'s choice from $0$ to $v$. The second equality holds by Assumption~\ref{ass: A1}(ii) and~\ref{ass: A2}(ii). The last inequality holds since  
\[
\sum\nolimits_{ 
\mathcal{C}\subseteq\mathcal{Y}\setminus\{v\}}\Big[\operatorname{R}_{a}\left( v \mid \mathbf{0}_{v}^1,\mathcal{C}\cup\{v\}\right)-\operatorname{R}_{a}\left( v \mid \mathbf{0},\mathcal{C}\cup\{v\}\right)\Big]\operatorname{C}_a\left(\mathcal{C}\mid \mathbf{0},\mathcal{Y}\setminus\{v\}\right)\neq 0
\]
by Assumption~\ref{ass: A2}(iii).
Hence, the only remaining possibility is $a'\in\mathcal{NCR}_a$. But the latter contradicts Assumption~\ref{ass: separability}(i), since $a'\in\mathcal{NCR}_a$ would imply that the consideration peer effect offsets the preference peer effect \emph{everywhere} over the support. The contradiction completes the proof. 

\subsection{Proof of Proposition~\ref{prop: identification of pref group}}
\noindent Note that $\mathcal{N}_a$ is identified by Proposition~\ref{prop: identification of peers}. Take any two distinct agents $a',a''\in \mathcal{N}_a$. We will show that $a'\in\mathcal{NC}_a\setminus\mathcal{NR}_a$ if and only if
\begin{equation}\label{eq: proof NC-NR 1}
\dfrac{\operatorname{P}_a\left(v\mid \mathbf{y}^{w}_{a''}\right)}{\operatorname{P}_a\left(v\mid \mathbf{y}\right)}=\dfrac{\operatorname{P}_a\left(v\mid \left(\mathbf{y}^{v}_{a'}\right)_{a''}^{w}\right)}{\operatorname{P}_a\left(v\mid \mathbf{y}^{v}_{a'}\right)},
\end{equation}
for all $v \in \mathcal{Y}\setminus\{0\}$,  all $w \not\in\{0,v\}$, and all $\mathbf{y}$ with $y_{a'}=y_{a''}=0$. Thus, $\mathcal{NC}_a\setminus\mathcal{NR}_a$ is identified from $\operatorname{P}_a$.

To prove the ``only if'' part note that if $a'\in\mathcal{NC}_a\setminus\mathcal{NR}_a$ and $\mathbf{y}$ is such that $y_{a'}=y_{a''}=0$, then
\[
\dfrac{\operatorname{Q}_a\left(v\mid\operatorname{NC}^v_{a}\left(\mathbf{y}^{w}_{a''}\right) + 1\right)}{\operatorname{Q}_a\left(v\mid\operatorname{NC}^v_{a}\left(\mathbf{y}\right) + 1\right)}=\dfrac{\operatorname{Q}_a\left(v\mid\operatorname{NC}^v_{a}\left(\mathbf{y}\right) + 1\right)}{\operatorname{Q}_a\left(v\mid\operatorname{NC}^v_{a}\left(\mathbf{y}\right) + 1\right)}=1=\dfrac{\operatorname{Q}_a\left(v\mid\operatorname{NC}^v_{a}\left(\mathbf{y}\right)\right)}{\operatorname{Q}_a\left(v\mid\operatorname{NC}^v_{a}\left(\mathbf{y}\right)\right)}=\dfrac{\operatorname{Q}_a\left(v\mid\operatorname{NC}^v_{a}\left(\mathbf{y}^{w}_{a''}\right)\right)}{\operatorname{Q}_a\left(v\mid\operatorname{NC}^v_{a}\left(\mathbf{y}\right)\right)},
\]
where the first and the last equalities follow from the fact that $w\neq v$. Hence, since $\left(\mathbf{y}^{v}_{a'}\right)_{a''}^{w}=\left(\mathbf{y}^{w}_{a''}\right)_{a'}^{v}$ we have that 
\begin{align*}
\dfrac{\operatorname{P}_a\left(v\mid \left(\mathbf{y}^{v}_{a'}\right)_{a''}^{w}\right)}{\operatorname{P}_a\left(v\mid \mathbf{y}^{v}_{a'}\right)} &= \dfrac{\operatorname{P}_a\left(v\mid \left(\mathbf{y}^{w}_{a''}\right)_{a'}^{v}\right)}{\operatorname{P}_a\left(v\mid \mathbf{y}^{v}_{a'}\right)}=\dfrac{\operatorname{Q}_a\left(v\mid\operatorname{NC}^v_{a}\left(\mathbf{y}^{w}_{a''}\right) + 1\right)}{\operatorname{Q}_a\left(v\mid\operatorname{NC}^v_{a}\left(\mathbf{y}\right) + 1\right)}\\
&\times \dfrac{\sum\nolimits_{ 
\mathcal{C}\subseteq\mathcal{Y}\setminus\{v\}}\operatorname{R}_{a}\left( v \mid \operatorname{NR}^{\mathcal{C}\cup\{v\}}_{a}\left(\mathbf{y}^{w}_{a''}\right),\mathcal{C}\cup\{v\}\right)\operatorname{C}_a\left(\mathcal{C}\mid \operatorname{NC}_a^{\mathcal{Y}\setminus\{v\}}(\mathbf{y}^{w}_{a''}),\mathcal{Y}\setminus\{v\}\right)}{\sum\nolimits_{ 
\mathcal{C}\subseteq\mathcal{Y}\setminus\{v\}}\operatorname{R}_{a}\left( v \mid \operatorname{NR}^{\mathcal{C}\cup\{v\}}_{a}\left(\mathbf{y}\right),\mathcal{C}\cup\{v\}\right)\operatorname{C}_a\left(\mathcal{C}\mid \operatorname{NC}_a^{\mathcal{Y}\setminus\{v\}}(\mathbf{y}),\mathcal{Y}\setminus\{v\}\right)} \\
&=\dfrac{\operatorname{Q}_a\left(v\mid\operatorname{NC}^v_{a}\left(\mathbf{y}^{w}_{a''}\right) \right)}{\operatorname{Q}_a\left(v\mid\operatorname{NC}^v_{a}\left(\mathbf{y}\right)\right)}\\
&\times \dfrac{\sum\nolimits_{ 
\mathcal{C}\subseteq\mathcal{Y}\setminus\{v\}}\operatorname{R}_{a}\left( v \mid \operatorname{NR}^{\mathcal{C}\cup\{v\}}_{a}\left(\mathbf{y}^{w}_{a''}\right),\mathcal{C}\cup\{v\}\right)\operatorname{C}_a\left(\mathcal{C}\mid \operatorname{NC}_a^{\mathcal{Y}\setminus\{v\}}(\mathbf{y}^{w}_{a''}),\mathcal{Y}\setminus\{v\}\right)}{\sum\nolimits_{ 
\mathcal{C}\subseteq\mathcal{Y}\setminus\{v\}}\operatorname{R}_{a}\left( v \mid \operatorname{NR}^{\mathcal{C}\cup\{v\}}_{a}\left(\mathbf{y}\right),\mathcal{C}\cup\{v\}\right)\operatorname{C}_a\left(\mathcal{C}\mid \operatorname{NC}_a^{\mathcal{Y}\setminus\{v\}}(\mathbf{y}),\mathcal{Y}\setminus\{v\}\right)} \\
&= \dfrac{\operatorname{P}_a\left(v\mid \mathbf{y}^{w}_{a''}\right)}{\operatorname{P}_a\left(v\mid \mathbf{y}\right)}.
\end{align*}

To prove the ``if'' part, note that it is equivalent to the statement that if $a'\in\mathcal{NR}_a$, then there exist $a'' \in \mathcal{N}_a$, $v$, $w$, and $\mathbf{y}$ with $y_{a'}\neq v$ and $y_{a''}\not\in\{v,w\}$ such that 
\[
\dfrac{\operatorname{P}_a\left(v\mid \mathbf{y}^{w}_{a''}\right)}{\operatorname{P}_a\left(v\mid \mathbf{y}\right)}\neq\dfrac{\operatorname{P}_a\left(v\mid \left(\mathbf{y}^{v}_{a'}\right)_{a''}^{w}\right)}{\operatorname{P}_a\left(v\mid \mathbf{y}^{v}_{a'}\right)}.
\]
or equivalently
\[
\Delta_{a'}^{v}\Delta_{a''}^{w}\ln \operatorname{P}_a\left(v\mid \mathbf{y}\right)\neq 0.
\]
If $a''\in\mathcal{NR}_a\setminus\mathcal{NC}_a$, then let $i=1$. If $a''\in\mathcal{NC}_a\setminus\mathcal{NR}_a$, then let $i=2$. Finally, if $a''\in\mathcal{NCR}_a$, then let $i=3$. Take $v=v_i$, $w=w_i$, and $\mathbf{y}$ such that $y_{a'}=y_{a''}=0$, $\operatorname{NR}^{\mathcal{Y}}_{a}(\mathbf{y})=\mathbf{nr}_i$, and $\operatorname{NC}^{\mathcal{Y}}_{a}(\mathbf{y})=\mathbf{nc}_i$ from Assumption~\ref{ass: separability}(ii). Then
\[
\Delta_{a'}^{v}\Delta_{a''}^{w}\ln \operatorname{P}_a\left(v\mid \mathbf{y}\right)=\Delta_{a''}^{w}\Delta_{a'}^{v}\ln \operatorname{P}_a\left(v\mid \mathbf{y}\right)=\begin{cases}
    \Delta_{w_1,0}\Delta_{v_1,0}\ln\bar{\operatorname{P}}_a(v_1\mid \mathbf{nr}_1,\mathbf{nc}_1)\neq 0 \text{ if } i=1,\\
    \Delta_{0,w_2}\Delta_{v_2,0}\ln\bar{\operatorname{P}}_a(v_2\mid \mathbf{nr}_2,\mathbf{nc}_2)\neq 0 \text{ if } i=2,\\
    \Delta_{w_3,w_3}\Delta_{v_3,0}\ln\bar{\operatorname{P}}_a(v_3\mid \mathbf{nr}_3,\mathbf{nc}_3)\neq 0 \text{ if } i=3,
\end{cases}
\]
where the first equality follows from the exchangeability of the difference operator, the second equality follows from the definition of $\bar{\operatorname{P}}_a$, and the last inequality follows from Assumption~\ref{ass: separability}(ii). So in all possible cases, Assumption~\ref{ass: separability}(ii) implies that if $a'\in\mathcal{NR}_a$, then there exist $v$, $w$, and $\mathbf{y}$ with $y_{a'}=y_{a''}=0$ such that 
\[
\dfrac{\operatorname{P}_a\left(v\mid \mathbf{y}^{w}_{a''}\right)}{\operatorname{P}_a\left(v\mid \mathbf{y}\right)}\neq\dfrac{\operatorname{P}_a\left(v\mid \left(\mathbf{y}^{v}_{a'}\right)_{a''}^{w}\right)}{\operatorname{P}_a\left(v\mid \mathbf{y}^{v}_{a'}\right)}.
\]

\subsection{Proof of Proposition~\ref{prop: identification of groups}}

\noindent Note that we know $\mathcal{N}_a$ and $\mathcal{NR}_a$ (or $\mathcal{NC}_a\setminus\mathcal{NR}_a$). To identify the rest of the network structure ($\mathcal{NR}_a\setminus\mathcal{NC}_a$ and $\mathcal{NCR}_a$), suppose that $\mathcal{NC}_a\setminus\mathcal{NR}_a\neq\emptyset$. Take $a'\in\mathcal{NC}_a\setminus\mathcal{NR}_a$. First, note that
\[
\Delta_{a'}^v \ln \operatorname{P}_a(v\mid \mathbf{0})
= \ln\operatorname{Q}_a(v\mid 1) - \ln\operatorname{Q}_a(v\mid 0).
\]

Thus, for any $a''\in\mathcal{NR}_a$, by Assumption~\ref{ass: A1},
\[
\Delta_{a''}^v\Delta_{a'}^v\ln \operatorname{P}_a(v\mid\mathbf{0})\neq0\:\iff\: a''\in\mathcal{NCR}_a.
\]
Hence, $\mathcal{NCR}_a$ is identified from $\operatorname{P}_a$. 

Next, suppose that $\mathcal{NC}_a\setminus\mathcal{NR}_a=\emptyset$. Then, by Assumption~\ref{ass: A3}, either $\mathcal{N}_a=\mathcal{NR}_a\setminus\mathcal{NC}_a$ or both $\mathcal{NR}_a\setminus\mathcal{NC}_a$ and $\mathcal{NCR}_a$ are nonempty. Since the consideration effect is nonzero, the effects of preference-only peers and consideration-preference peers have to be different. As a result, we can identify the partition of $\mathcal{NR}_a$, $\mathcal{M}'$ and $\mathcal{M}''$, such that one of its elements is $\mathcal{NCR}_a$. Since $\abs{\mathcal{N}_a}\geq 3-\abs{\mathcal{NC}_a\setminus\mathcal{NR}_a}=3$, we can take $a'\in\mathcal{M}'$ and $a''\in\mathcal{M}''$. Next, take $\mathbf{y}$ such that $y_a=0$ for all $a\neq a'$ and $y_{a'}=v$. Next note that
\begin{align*}
\ln{\operatorname{P}_{a}\left( v \mid \mathbf{y}\right)} - \ln{\operatorname{P}_{a}\left( v \mid \left(\mathbf{y}^{0}_{a'}\right)^{v}_{a''}\right)}=(-1)^{\Char{a'\not\in\mathcal{NCR}_a}} (\ln{\operatorname{Q}_{a}\left(v\mid  1\right)} - \ln{\operatorname{Q}_{a}\left(v\mid  0\right)}).
\end{align*}
Finally, take another $a'''\not\in\{a',a''\}$ in either $\mathcal{M}'$ or $\mathcal{M}''$. Without loss of generality, assume that $a'''\in\mathcal{M}'$. Note that, by Assumption~\ref{ass: A1},
\begin{align*}
\Delta_{a'''}^v\ln{\operatorname{P}_{a}\left( v \mid \mathbf{y}\right)} - \Delta_{a'''}^v\ln{\operatorname{P}_{a}\left( v \mid \left(\mathbf{y}^{0}_{a'}\right)^{v}_{a''}\right)}=0 \iff a'''\in \mathcal{NR}_a\setminus\mathcal{NC}_a.
\end{align*}
Thus, we identify $\mathcal{NR}_a\setminus\mathcal{NC}_a$ and $\mathcal{NCR}_a$.

\subsection{Proof of Proposition~\ref{prop: identification of ratios of Q}}
\noindent Fix $a\in\mathcal{A}$ and $v\in\mathcal{Y}\setminus\{0\}$. Assume first that $\abs{\mathcal{NC}_a\setminus\mathcal{NR}_a}\geq 1$. Under this situation, the relative consideration probability is identified via switching the choice of just one consideration-only peer from alternative $v$ to the default while keeping the configuration of others fixed. Specifically, take $a'\in\mathcal{NC}_a\setminus\mathcal{NR}_a$ and $\mathbf{y}$ such that every peer in $\mathcal{NC}_a$ picks $v$. Then
\[
    \dfrac{\operatorname{P}_a(v|\mathbf{y})}{\operatorname{P}_a(v|\mathbf{y}^{0}_{a'})}=\dfrac{\operatorname{Q}_a(v|\abs{\mathcal{NC}_a})}{\operatorname{Q}_a(v|\abs{\mathcal{NC}_a}-1)}.
\]
Next, redefine $\mathbf{y}$ as before except that we let one of the peers from $\mathcal{NCR}_a$ to pick $0$. As a result,
    \[
    \dfrac{\operatorname{P}_a(v|\mathbf{y})}{\operatorname{P}_a(v|\mathbf{y}^{0}_{a'})}=\dfrac{\operatorname{Q}_a(v|\abs{\mathcal{NC}_a}-1)}{\operatorname{Q}_a(v|\abs{\mathcal{NC}_a}-2)}.
    \]
Repeating this procedure, we identify 
\[
\operatorname{Q}_a(v\mid n_1)/\operatorname{Q}_a(v\mid n_1-1) \text{ for all } n_1\in\{\abs{\mathcal{NC}_a}-\abs{\mathcal{NCR}_a},\dots,\abs{\mathcal{NC}_a}\}.
\]

Next, we take $\mathbf{y}$ such that all peers in $\mathcal{NCR}_a$ and one of the peers in $\mathcal{NC}_a\setminus\mathcal{NR}_a$ different from $a'$ are picking $0$ and the rest of peers in $\mathcal{NC}_a\setminus\mathcal{NR}_a$ are picking $v$. Switching one by one all peers in $\mathcal{NC}_a\setminus\mathcal{NR}_a$ we identify $\operatorname{Q}_a(v\mid n_1)/\operatorname{Q}_a(v\mid n_1-1)$ for all $n_1$.

We next show that the relative consideration probability can be identified even if the consideration-only group is empty. Specifically, assume that $\abs{\mathcal{NC}_a\setminus\mathcal{NR}_a}=0$, so we have $\abs{\mathcal{NR}_a\setminus\mathcal{NC}_a}\geq 1$ by Assumption \ref{ass: A3}. Then the relative consideration probability can be identified by switching one preference-only peer from $v$ to the default and one consideration-preference peer from the default to alternative $v$. Specifically, take $a'\in\mathcal{NR}_a\setminus\mathcal{NC}_a$, $a''\in\mathcal{NCR}_a$, and $\mathbf{y}$ such that every peer in $\mathcal{NCR}_a$ picks $v$ and $a'$ picks $0$. Then, comparing Agent $a$'s probability of choosing alternative $v$ between configuration $\mathbf{y}$ and a configuration of switching Agent $a'$ from 0 to alternative $v$ and Agent $a''$ from alternative $v$ to $0$, which does not change the choice probability given consideration because the number of peers affecting preference is the same in both scenario, we have
\[
    \dfrac{\operatorname{P}_a(v|\mathbf{y})}{\operatorname{P}_a\left(v|\left(\mathbf{y}^{v}_{a'}\right)_{a''}^{0}\right)}=\dfrac{\operatorname{Q}_a(v|\abs{\mathcal{NC}_a})}{\operatorname{Q}_a(v|\abs{\mathcal{NC}_a}-1)}.
\]
Next, redefine $\mathbf{y}$ as before except that we let one of the peers from $\mathcal{NCR}_a$ different from $a''$ to pick $0$. As a result,
    \[
    \dfrac{\operatorname{P}_a(v|\mathbf{y})}{\operatorname{P}_a\left(v|\left(\mathbf{y}^{v}_{a'}\right)_{a''}^{0}\right)}=\dfrac{\operatorname{Q}_a(v|\abs{\mathcal{NC}_a}-1)}{\operatorname{Q}_a(v|\abs{\mathcal{NC}_a}-2)}.
    \]
Repeating this procedure finitely many times we identify $\operatorname{Q}_a(v\mid n_1)/\operatorname{Q}_a(v\mid n_1-1)$ for all $n_1\in\{1,\dots,\abs{\mathcal{NC}_a}\}$. 

\subsection{Proof of Proposition~\ref{prop: counterfactual CCP}}
\noindent Fix some $a\in\mathcal{A}$ and $a'\in\mathcal{NC}_a\setminus\mathcal{NR}_a$. Moreover, take any distinct $v,v'\in\mathcal{Y}\setminus\{0\}$. Take any $\mathbf{y}$ such that no one picks $v'$. Since we will only use the variation in choices of Agent $a'$, we drop the choices of everyone else from the notation. For example, $\operatorname{P}_a(v|v')$ is equal to $\operatorname{P}_a(v|\mathbf{y})$, where $y_{a'}=v'$. We use $t_{v'}$ to denote the ratio between the probability that Agent $a$ picks $v'$ conditional on Agent $a'$ choosing $v'$ and the default $0$:
\[
t_{v'}\equiv \dfrac{\operatorname{P}_a(v'|v')}{\operatorname{P}_a(v'|0)}=\dfrac{\operatorname{Q}_a(v'|1)}{\operatorname{Q}_a(v'|0)}\neq 1,
\]
where the second equality holds because we can cancel out the choice probability conditional on considering $v'$, and the inequality follows by Assumption \ref{ass: A1}(iii). Note that $t_{v'}$ is identified from the data. 

Moreover,
\begin{align*}
\operatorname{P}_a(v|0)&=\operatorname{Q}_a(v'|0)\left\{\operatorname{R}_a^*(v|v')-\operatorname{P}_a^*\left(v\mid\mathcal{Y}\setminus\{v'\}\right) \right\}+\operatorname{P}_a^*(v\mid\mathcal{Y}\setminus\{v'\}),\\
\operatorname{P}_a(v|v')&=\operatorname{Q}_a(v'|1)\left\{\operatorname{R}_a^*(v|v')-\operatorname{P}_a^*(v\mid\mathcal{Y}\setminus\{v'\}) \right\}+\operatorname{P}_a^*(v\mid\mathcal{Y}\setminus\{v'\}),
\end{align*}
where $\operatorname{R}_a^*(v|v')$ denotes the probabilities that Agent $a$ picks $v$ conditional on considering $v'$. Since, $\operatorname{Q}_a(v'|0)t_{v'}=\operatorname{Q}_a(v'|1)$, we obtain from the above two equations that 
\begin{align*}
\operatorname{P}_a^*(v\mid\mathcal{Y}\setminus\{v'\})&=\dfrac{\operatorname{P}_a(v|v')-t_{v'}\operatorname{P}_a(v|0)}{1-t_{v'}}.
\end{align*}
Since the choice of $v$, $v'$, $a$, $a'$, and choices of everyone else was arbitrary, we can identify $\operatorname{P}_a^*(v\mid\mathbf{y},\mathcal{Y}\setminus\{v'\})$ for all $a\in\mathcal{A}$, $v'\neq v$, $v'\neq 0$, and $\mathbf{y}$ such that (i) $y_{a'}\neq v'$ for all $a'\in\mathcal{N}_a$ and (ii) $y_{a'}=0$ for some $a'\in\mathcal{NC}_a\setminus\mathcal{NR}_a$. 

Applying the above argument to $\operatorname{P}_a^*(\cdot|\cdot,\mathcal{Y}\setminus\{v'\})$, we can identify $\operatorname{P}_a^*(v\mid\mathbf{y},\mathcal{Y}\setminus\{v',v''\})$ for all $a\in\mathcal{A}$, $v''\neq v$, $v'' \neq v'$, $v''\neq 0$, and $\mathbf{y}$ such that (i) $y_{a'}\not\in\{v',v''\}$ for all $a'\in\mathcal{N}_a$ and (ii) $y_{a'}=y_{a''}=0$ for some $a',a''\in\mathcal{NC}_a\setminus\mathcal{NR}_a$, $a'\neq a''$. 

Repeating the above argument $\abs{\mathcal{NC}_a\setminus\mathcal{NR}_a}$ times, we can identify $\operatorname{P}_a^*(\cdot\mid\mathbf{y},\mathcal{Y}\setminus\mathcal{Z})$ for all $\mathcal{Z}\subseteq \mathcal{Y}\setminus\{0\}$ and $\mathbf{y}$ with the following two properties.  First, $y_{a'}\not\in\mathcal{Z}$ for all $a'\in\mathcal{N}_a$. Second, if we take any different $\abs{\mathcal{Z}}$ components of $\mathbf{y}$ that correspond to peers from $\mathcal{NC}_a\setminus\mathcal{NR}_a$, then these components have to be equal to $0$ since we switched these $\abs{\mathcal{Z}}$ peers to $0$.  

\subsection{Proof of Proposition~\ref{prop: identification of all}}
\noindent Fix some $v\neq 0$. If $\operatorname{Q}_a\left(v\mid n_1\right)$ is known for some $n_1$ in the support, by Proposition~\ref{prop: identification of ratios of Q}, we identify $\operatorname{Q}_a(v\mid\cdot)$. If, instead, we know $\operatorname{R}_a\left(v\mid n_2,\{0,v\}\right)$, then, since $\abs{\mathcal{NC}_a\setminus\mathcal{NR}_a}\geq Y$, by Proposition~\ref{prop: counterfactual CCP}, we identify 
\[
\operatorname{P}^{*}_a(v \mid \mathbf{y}, \{0,v\})=\operatorname{Q}_a\left(v\mid\mathcal{NC}^v_{a}\left(\mathbf{y}\right)\right)\operatorname{R}_a\left(v\mid \operatorname{NR}_a^v\left(\mathbf{y}\right),\{0,v\}\right)
\]
for some $\mathbf{y}$ such that $\operatorname{NR}_a^v\left(\mathbf{y}\right)=n_2$. Hence, we identify $\operatorname{Q}_a\left(v\mid\mathcal{NC}^v_{a}\left(\mathbf{y}\right)\right)$ and, by Proposition~\ref{prop: identification of ratios of Q}, we also identify $\operatorname{Q}_a(v\mid\cdot)$. Since, the choice of $v$ was arbitrary, we identify $\operatorname{Q}_a$.

By Proposition~\ref{prop: counterfactual CCP}, we now can identify 
$\operatorname{R}_a\left(v\mid n_2,\{0,v\}\right)$ for all $v\neq0$ and $n_2$ in the support. Next, consider
\begin{align*}
\operatorname{P}^{*}_a(v \mid \mathbf{y}, \{0,v,v'\})=&\operatorname{Q}_a\left(v\mid\mathcal{NC}^v_{a}\left(\mathbf{y}\right)\right)\operatorname{Q}_a\left(v'\mid\mathcal{NC}^{v'}_{a}\left(\mathbf{y}\right)\right)\operatorname{R}_a\left(v\mid \operatorname{NR}_a^v\left(\mathbf{y}\right),\{0,v\}\right)+\\
+&\operatorname{Q}_a\left(v\mid\mathcal{NC}^v_{a}\left(\mathbf{y}\right)\right)(1-\operatorname{Q}_a\left(v'\mid\mathcal{NC}^{v'}_{a}\left(\mathbf{y}\right)\right))\operatorname{R}_a\left(v\mid \operatorname{NR}_a^v\left(\mathbf{y}\right),\operatorname{NR}_a^{v'}\left(\mathbf{y}\right),\{0,v,v'\}\right).
\end{align*}
Since $\operatorname{Q}_a$ and $\operatorname{R}_a$ for binary consideration sets are identified, we identify $\operatorname{R}_a$ for all possible sets of size $3$. Repeating the above argument, we identify $\operatorname{R}_a$ for all possible sets of size $4$. Applying this argument finitely many times, we can identify $\operatorname{R}_a$ for all possible sets.

\subsection{Proof of Proposition~\ref{ID2}}
\noindent Since $\lim\nolimits_{\Delta \rightarrow 0}\mathcal{P}\left( \Delta \right) =\mathcal{W}$, we can recover the transition rate matrix $\mathcal{W}$ from the data. Recall that each element in the transition rate matrix is defined as
\begin{equation*}
\operatorname{w}\left( \mathbf{y}'\mid \mathbf{y}\right) =\left\{ 
\begin{array}{lcc}
0 & \text{if} & \sum\nolimits_{a\in \mathcal{A}}\mathds{1}\left( y_{a}'\neq
y_{a}\right) >1 \\ 
\sum\nolimits_{a\in \mathcal{A}}\lambda_{a}\operatorname{P}_{a}\left( y_{a}' \mid \mathbf{y}\right) \mathds{1}\left( y_{a}'\neq y_{a}\right) & 
\text{if} & \sum\nolimits_{a\in \mathcal{A}}\mathds{1}\left( y_{a}'\neq
y_{a}\right) =1
\end{array}
\right. .
\end{equation*}
Thus, $\lambda_{a}\operatorname{P}_{a}\left( y_{a}' \mid \mathbf{y}\right) =\operatorname{w}\left( y_{a}',\mathbf{y}_{-a}\mid \mathbf{y}\right)$. It follows that we can recover $\lambda_{a}\operatorname{P}_{a}\left( v \mid \mathbf{y}\right)$ for each $v\in \mathcal{Y}$, $\mathbf{y}\in \mathcal{Y}^{A}$, and $a\in \mathcal{A}$. Note that, for each $\mathbf{y}\in \mathcal{Y}^{A}$, 
\begin{equation*}
\sum\nolimits_{v\in \mathcal{Y}}\lambda_{a}\operatorname{P}_{a}\left( v \mid \mathbf{y}\right) =\lambda_{a}\sum\nolimits_{v\in \mathcal{Y}}\operatorname{P}_{a}\left( v \mid \mathbf{y}\right) =\lambda_{a}.
\end{equation*}
Then we can also recover $\lambda_{a}$ for each $a\in \mathcal{A}$.

\subsection{Proof of Proposition~\ref{ID3}}
\noindent This proof builds on Theorem 1 of \citet{blevins2017identifying} and Theorem 3 of \citet{blevins2018identification}. For the present case, it follows from these two theorems, that the transition rate matrix  $\mathcal{W}$ is generically identified if, in addition to the conditions in Proposition~\ref{ID3}, we have that 
\begin{equation*}
\left( Y+1\right) ^{A}-AY-1\geq \frac{1}{2}.
\end{equation*}
This condition is always satisfied if $A>1$. Thus, the identification of $\mathcal{W}$ follows because $A\geq 2$. We can then uniquely recover $\left(\operatorname{P}_{a}\right)_{a\in \mathcal{A}}$ and $\left(\lambda_{a}\right)_{a\in \mathcal{A}}$ from $\mathcal{W}$ as in the proof of Proposition~\ref{ID2}.

\section{Simulation and Estimation of the Running Example}\label{app: simul}
\noindent This section offers Monte Carlo simulation and estimation results for Example 1 in the paper. This exercise aims to show that with a rich dataset the primitives of the model can be estimated by following our main identification strategy step-by-step. Specifically, if we have enough observations, we can first reliably and nonparametrically estimate the CCPs for each agent using a frequency estimator. Then, we can use these estimates to recover which peers are in the consideration and preference group of each agent ---by following Propositions~\ref{prop: identification of peers}-~\ref{prop: identification of groups}. With the network structure being estimated for each agent separately, we can then estimate the choice probability and the consideration mechanism following Propositions~\ref{prop: identification of ratios of Q}-~\ref{prop: identification of all}. We show later that when the dataset is not long enough one could implement a parametric version of this approach.

\noindent\textbf{Simulation Design} Recall that there are four agents and three alternatives (i.e., $\mathcal{A}=\left\{ 1,2,3,4\right\}$ and $\mathcal{Y}=\left\{ 0,1,2\right\}$). The reference groups for consideration and preferences are as follows
\[
\mathcal{NC}_{1}=\left\{2, 3\right\},\quad \mathcal{NC}_{2}=\left\{1\right\},\quad \mathcal{NC}_{3}=\left\{2\right\}, \quad\mathcal{NC}_{4}=\emptyset
\]
\[
\mathcal{NR}_{1}=\left\{3\right\},\quad \mathcal{NR}_{2}=\emptyset,\quad \mathcal{NR}_{3}=\left\{1\right\}, \quad \mathcal{NR}_{4}=\emptyset.
\]

We specify the preferences of the agents and their consideration mechanisms as follows. For all $a$ and $v\in\{1,2\}$,
\[
\operatorname{Q}_a(v\mid n)=\begin{cases}
    1/4 &\text{ if } n=0\\
    3/4 &\text{ if } n=1\\
    1 &\text{ if } n=2.
\end{cases}
\]
As in the paper, the default $0$ is always considered.

The mean utility for all $a$, $v \in \{1,2\}$, and $\mathcal{C}$ is 
\[
\bar{u}_{a,v,\mathcal{C}}(n)=\begin{cases}
    3 &\text{ if } n=0\\
    9/2 &\text{ if } n=1\\
    5 &\text{ if } n=2.
\end{cases}
\]
The mean utility from the default is normalized to be $0$ regardless of how many peers choose the default. We assume the payoff shocks are generated by a Type I extreme value distribution, so the choice probability has the logit form.  Note that the agent's previous choice does not affect her consideration probabilities or preferences.

We can calculate the implied (population) CCPs of Agent 1 as a function of choices of Agents 2 and 3. To simplify the notation, we ignore the previous choice of Agent 4 in the choice configuration $\mathbf{y}$ as it does not affect the CCPs of Agent 1. For example, when the previous choices of Agents 2 and 3 are the default, the probability that Agent 1 selects option 1 is
\begin{eqnarray*}
\operatorname{P}_{1}\left(v=1 \mid (0,0)\right)&=& \underbrace{\operatorname{Q}_1(1\mid (0,0))(1-\operatorname{Q}_2(1\mid (0,0)))}_{\text{prob of considering} \{0,1\}} \underbrace{\frac{\exp{(3)}}{1+\exp{(3)}}}_{\text{prob of choosing 1 when considers }  \{0,1\}} \\
&+& \underbrace{ \operatorname{Q}_1(1\mid (0,0))\operatorname{Q}_2(1\mid (0,0))}_{\text{prob of considering} \{0,1,2\}}\underbrace{\frac{\exp{(3)}}{1+\exp{(3)}+\exp{(3)}}}_{\text{prob of choosing 1 when considers }  \{0,1,2\}} \\
&\approx&1/4 \times (1-1/4) \times 0.9526+1/4 \times 1/4 \times  0.4879=0.209.
\end{eqnarray*}
Similarly, we can calculate the CCPs for other configurations of $\mathbf{y}$ and alternative $2$. 

\noindent\textbf{Simulation Procedure} We simulate the data by the following procedure. First, we simulate four different Poisson alarms with an arrival rate of 1 and record the specific time at which the alarm goes off for each of the four agents. We start by assuming that all of them have selected the default. We then simulate the choices of each agent based on the order of the Poisson alarms. We start the simulation for $t=0$ and continue until the time reaches $T$ from which we can collect the profile of actions at different times that we indicate by $\{y_{1t}, y_{2t}, y_{3t}, y_{4t} \}_{t\le T}$. We assume that we can observe when the agents select each alternative, including the default ---Dataset 1 in the paper. 

\noindent\textbf{Estimation of CCPs} With the simulated data, we can first estimate the CCPs of each agent by using a simple frequency estimator. We use Agent 1 to illustrate the ideas. With a slight abuse of notation, we indicate by $t$ the time at which the alarm of Agent 1 is off and reserve $t'$, $t''$, and $t''''$ to denote the previous time at which the alarms of Agents 2, 3, and 4 were off, respectively. Then the estimator of $\operatorname{P}_1$ is
\begin{align*}
\hat{\operatorname{P}}_1(v|\mathbf{y}) = \frac{\#_{ \{t: t'<t, t''<t, t'''<t\}}\{ a_{1t}=v, (y_{2t'},y_{3t''},y_{4t'''}) = \mathbf{y} \}}{\#_{ \{t: t'<t, t''<t, t'''<t\}}\{(y_{2t'},y_{3t''},y_{4t'''}) = \mathbf{y}\}}.
\end{align*}

Table~\ref{tab1:estimates} displays the average of the estimated CCPs of Agent 1 for 1000 replications and $T=800$ (about 800 choices per agent). We also present the true (population) CCPs calculated by using the primitives of the model.

\begin{table}[h]
\centering
\caption{Agent 1's CCPs}
\begin{tabular}{c|ccccc}
\hline
 & & & \multicolumn{3}{c}{estimates for different $y_4$} \\
 & population & estimates & $y_4=0$ & $y_4=1$ & $y_4=2$ \\
\hline
$\operatorname{P}_1(1|00)$ & 0.209 & 0.208 & 0.207 & 0.207 & 0.212 \\
$\operatorname{P}_1(1|01)$ & 0.708 & 0.710 & 0.710 & 0.712 & 0.713 \\
$\operatorname{P}_1(1|02)$ & 0.093 & 0.095 & 0.094 & 0.096 & 0.095 \\
$\operatorname{P}_1(1|10)$ & 0.627 & 0.629 & 0.628 & 0.629 & 0.630 \\
$\operatorname{P}_1(1|11)$ & 0.944 & 0.944 & 0.944 & 0.945 & 0.943 \\
$\operatorname{P}_1(1|12)$ & 0.280 & 0.280 & 0.279 & 0.280 & 0.285 \\
$\operatorname{P}_1(1|20)$ & 0.151 & 0.151 & 0.153 & 0.151 & 0.148 \\
$\operatorname{P}_1(1|21)$ & 0.641 & 0.643 & 0.641 & 0.646 & 0.641 \\
$\operatorname{P}_1(1|22)$ & 0.045 & 0.045 & 0.046 & 0.044 & 0.045 \\
$\operatorname{P}_1(2|00)$ & 0.209 & 0.209 & 0.208 & 0.210 & 0.211 \\
$\operatorname{P}_1(2|01)$ & 0.093 & 0.094 & 0.094 & 0.099 & 0.087 \\
$\operatorname{P}_1(2|02)$ & 0.708 & 0.709 & 0.709 & 0.704 & 0.711 \\
$\operatorname{P}_1(2|10)$ & 0.151 & 0.149 & 0.152 & 0.145 & 0.146 \\
$\operatorname{P}_1(2|11)$ & 0.045 & 0.045 & 0.046 & 0.044 & 0.046 \\
$\operatorname{P}_1(2|12)$ & 0.641 & 0.642 & 0.642 & 0.648 & 0.636 \\
$\operatorname{P}_1(2|20)$ & 0.627 & 0.627 & 0.626 & 0.624 & 0.628 \\
$\operatorname{P}_1(2|21)$ & 0.280 & 0.279 & 0.282 & 0.279 & 0.278 \\
$\operatorname{P}_1(2|22)$ & 0.944 & 0.944 & 0.943 & 0.945 & 0.944 \\
\hline
\end{tabular}
\label{tab1:estimates}
\end{table}

The nonparametric estimator performs reasonably well for $T=800$ observations per agent. This simple framework has four agents and three alternatives. For each agent, the configuration of the choices of the other agents can take $3^3=27$ values. Thus, one needs to estimate $27$ probabilities per agent. This requirement increases exponentially with the number of agents in the model. 

\noindent\textbf{Estimation of the Network} We estimate the network using the nonparametric estimators of CCPs we just described. We first recover the reference group of Agent 1 by following the identification strategy in Proposition~\ref{prop: identification of peers}. Specifically, we compute the difference of the ln of the probability that Agent 1 selects alternative 1 when all other agents initially select the default and the ones we obtain when we change, one by one, each other agent to alternative 1: $\Delta_{a}^1\ln\operatorname{P}_1(1\mid\mathbf{0})$, $a=2,3,4$. 
 
Note that because we work with a finite sample, we cannot directly conclude that there is a link when $\Delta_{a}^1\ln\hat{\operatorname{P}}_1(1\mid\mathbf{0})\neq 0$. To decide whether there is a link or not, we implement a t-statistic test with a sample-size-driven critical value. Specifically, we say that $a$ is a peer of Agent 1 if and only if $\abs{\Delta_{a}^1\ln\hat{\operatorname{P}}_1(1\mid\mathbf{0})/\textrm{std}(\Delta_{a}^1\ln\hat{\operatorname{P}}_1(1\mid\mathbf{0}))}> \kappa_T$, where $\kappa_T=0.15\ln{T}$ and $\textrm{std}(A)$ is the estimated standard error of $A$.\footnote{We use the estimated CCPs to bootstrap standard errors \citep{kline2016bayesian}.} \footnote{Any sequence that converges to $0$ slower than $1/\sqrt{T}$ will work asymptotically.}  By following this criterion we calculate the frequency of correct estimates of connections for for each agent in 1000 replications. We present the results in Table~\ref{tab2:peer_estimates} for different sample sizes. The results improve with the sample size, and the estimation performs reasonably well when the sample size is about 2000. 

\begin{table}[h]
\centering
\caption{Percentage of correctly estimated peers of Agent 1}
\begin{tabular}{c|c|c|c|c|c|c|c}
\hline
 Agent & Peer &  T=800 & T=2000 & T=4000 & T=8000  & T=10000 & T=100000\\
\hline
2 & Yes & 100.0  & 100.0 & 100.0 & 100.0  & 100.0 & 100.0\\
3 & Yes & 100.0  & 100.0 & 100.0 & 100.0  & 100.0 & 100.0\\
4 & No & 66.2  & 75.1 & 79.5 & 81.3  & 82.8 & 91.6\\
\hline
\end{tabular}
\label{tab2:peer_estimates}
\end{table}

To state whether Agents 2 or 3 are consideration-only peers we implement double differences --- as we do in the proof of Proposition~\ref{prop: identification of pref group}. We use the population CCPs to illustrate the identification strategy in Proposition~\ref{prop: identification of pref group}. Specifically, to determine whether Agent 2 is a consideration-only peer of Agent 1, we check the following double difference:
\begin{align*}
\left[\ln{\operatorname{P}_1(1|12)}-\ln{\operatorname{P}_1(1|10)}\right]-\left[\ln{\operatorname{P}_1(1|02)}-\ln{\operatorname{P}_1(1|00)}\right]\\
=\ln{(0.280)}-\ln{(0.627)}- \ln{(0.093)}+\ln{(0.209)} = 0.
\end{align*}
Since the difference is $0$ we could conclude that Agent 2 is a consideration-only peer of Agent 1. We conduct the same analysis for Agent 3, 
\begin{align*}
\left[\ln{\operatorname{P}_1(1|21)}-\ln{\operatorname{P}_1(1|01)}\right]-\left[\ln{\operatorname{P}_1(1|20)}-\ln{\operatorname{P}_1(1|00)}\right]\\
=\ln{(0.641)}-\ln{(0.708)}- \ln{(0.151)}+\ln{(0.209)}=0.2256 \neq 0,
\end{align*}
and conclude that Agent 3 affects the preferences and maybe consideration of Agent 1. 

We apply the above double differences to the estimated CCPs with the same threshold rule to address whether the peer is a consideration-only peer or not. The results of the simulation are presented in Table~\ref{tab3:peer_estimates}. We can see that the percentage of correct estimates of the network increases with the sample size. 

\begin{table}[h]
\centering
\caption{Percentage of correctly estimated consideration-only peers of Agent 1}
\begin{tabular}{c|c|c|c|c|c|c|c}
\hline
 Agent & Consideration-only Peers &  T=800 & T=2000 & T=4000 & T=8000  & T=10000 & T=100000\\
\hline
2 & Yes & 67.1  & 72.3 & 78.7& 79.5  & 81.3 & 92.1 \\
3 & No & 37.4  & 49.4 & 56.9& 74.5  & 78.7 & 100.0 \\
\hline
\end{tabular}
\label{tab3:peer_estimates}
\end{table}

We finally identify whether Agent 3 affects both preferences and consideration. We follow the identification strategy in Proposition~\ref{prop: identification of groups}. Instead of switching Agent 2's choice from default to alternative 2, we check the changes in Agent 1's choice when switching the choice of Agent 2, the consideration-only peer, from the default to alternative 1 in the following double difference: 
\begin{align*}
\left[\ln{\operatorname{P}_1(1|11)}-\ln{\operatorname{P}_1(1|01)}\right]-\left[\ln{\operatorname{P}_1(1|10)}-\ln{\operatorname{P}_1(1|00)}\right]\\
=\ln{(0.944)}-\ln{(0.627)}- \ln{(0.708)}+\ln{(0.209)}=-0.8109 \neq 0.
\end{align*}
Since the result differs from 0, we can correctly conclude that Agent 3 is a consideration-preference peer of Agent 1. 

The results of applying the same logic to estimated CCPs (with the same threshold rule) are presented in Table~\ref{tab4:peer_estimates}. We can see that the percentage of correct network estimates increases with the sample size. Moreover, even with a sample size of 800, we can correctly estimate the link in 99.8\% of 1000 replicated samples.

\begin{table}[h]
\centering
\caption{Percentage of correctly estimated preference-only peers of Agent 1}
\begin{tabular}{c|c|c|c|c|c|c|c}
\hline
 Agent & Preference-only Peer &  T=800 & T=2000 & T=4000 & T=8000  & T=10000 & T=100000\\
\hline
3 & No & 99.8  & 100.0 & 100.0 & 100.0  & 100.0 & 100.0 \\
\hline
\end{tabular}
\label{tab4:peer_estimates}
\end{table}

\noindent\textbf{Estimation of Consideration Mechanism} Once the network structure is recovered, we can proceed to identify and estimate the consideration mechanism. First, given that Agent 2 is a consideration-only peer of Agent 1, we can switch Agent 2 first and then Agent 3 from default to alternative $1$ to identify ratios of consideration probabilities of alternative 1 for different numbers of peers selecting it ---following the ideas in the proof of Proposition~\ref{prop: identification of ratios of Q}. Specifically, we get that
\begin{align*}
&\dfrac{\operatorname{Q}_1(1|1)}{\operatorname{Q}_1(1|0)} = \frac{\operatorname{P}_1(1|10)}{\operatorname{P}_1(1|00)}=\frac{0.627}{0.209}=3= \underbrace{\frac{3/4}{1/4}}_{\text{the ratio from the model}}\\
&\dfrac{\operatorname{Q}_1(1|2)}{\operatorname{Q}_1(1|1)} = \frac{\operatorname{P}_1(1|11)}{\operatorname{P}_1(1|01)}=\frac{0.944}{0.708}= 1.3333 = \underbrace{\frac{1}{3/4}}_{\text{the ratio from the model}}.
\end{align*}
Assuming that when all consideration peers are picking the alternative, the alternative is considered with probability 1, i.e., $\operatorname{Q}_1(1|2)=1$, we can fully identify and estimate the rest of the consideration probabilities. In particular,
\begin{align*}
\operatorname{Q}_1(1|1)=&\dfrac{\operatorname{P}_1(1|01)}{\operatorname{P}_1(1|11)},\\
\operatorname{Q}_1(1|0)=&\dfrac{\operatorname{P}_1(1|01)}{\operatorname{P}_1(1|11)} \dfrac{\operatorname{P}_1(1|00)}{\operatorname{P}_1(1|10)}.
\end{align*}
We estimate those consideration probabilities and present the bias and the root mean squared error (RMSE) in Table~\ref{tab5:peer_estimates} for different sample sizes with 1000 replications.

\begin{table}[h]
\centering
\caption{Consideration Mechanism}
\begin{tabular}{c|c|c|c|c}
\hline
 Consideration Probability & Performance &  T=800 & T=2000 & T=4000 \\
\hline
$\operatorname{Q}_1(1|1)$ & Bias & 0.0022  & -0.0001 & 0.0002 \\
 & RMSE & 0.0723  & 0.0292 & 0.0151 \\ \hline
$\operatorname{Q}_1(1|0)$ & Bias & 0.0006  & 0.0037 & 0.0007 \\
 & RMSE & 0.0674  & 0.0255 & 0.0130 \\
\hline
\end{tabular}
\label{tab5:peer_estimates}
\end{table}

\noindent\textbf{Estimation of Counterfactual CCPs and Choice Rule} We now follow the proof of Proposition~\ref{prop: counterfactual CCP} to identify and estimate the counterfactual CCPs. For instance, we can identify and estimate the counterfactual CCPs if we shrink the menu from $\{0,1,2\}$ to $\{0,1\}$. We denote these counterfactual CCPs as $\operatorname{P}^{*}_1(1|00,\{0,1\})$, where $00$ denotes the previous choices of Agents 2 and 3, respectively, and $\{0,1\}$ indicates the counterfactual menu. Given the model primitives, this counterfactual choice probability is given by
\begin{align*}
\operatorname{P}^{*}_1(1|00,\{0,1\})=\operatorname{Q}_1(1|0) \operatorname{R}_1(1|0,\{0,1\})=1/4\times \underbrace{\frac{\exp{(3)}}{1+\exp{(3)}}}_{\text{choice prob}}\approx 0.2381.
\end{align*}
The proof of Proposition~\ref{prop: counterfactual CCP} implies that 
\begin{align*}
\operatorname{P}^{*}_1(1|00,\{0,1\})= \frac{\operatorname{P}_1(1|20)- \frac{\operatorname{Q}_1(2|1)}{\operatorname{Q}_1(2|0)} \operatorname{P}_1(1|00) }{ 1-\frac{\operatorname{Q}_1(2|1)}{\operatorname{Q}_1(2|0)} } \approx \frac{0.151-3 \times 0.209}{1-3}= 0.2380.
\end{align*}
Hence, up to a numerical error, the results agree. Given the identified and estimated $\operatorname{Q}_1(1|0)$ we can recover
\begin{align*}
    \operatorname{R}_1(1|0,\{0,1\})= \frac{\operatorname{P}_1(1|20)- \frac{\operatorname{P}_1(2|20)}{\operatorname{P}_1(2|00)} \operatorname{P}_1(1|00) }{ \left(1-\frac{\operatorname{P}_1(2|20)}{\operatorname{P}_1(2|00)} \right)\dfrac{\operatorname{P}_1(1|00)}{\operatorname{P}_1(1|11)}}.
\end{align*}
A similar formula can be used to identify the rest of the counterfactual CCPs and the values of the choice rules for each consideration set and agent. We estimate the counterfactual CCP and the choice rules and present the bias and RMSE in Table~\ref{tab6:peer_estimates} for different sample sizes in 1000 replications.

\begin{table}[h]
\centering
\caption{Counterfactual CCP and Choice Rule}
\begin{tabular}{c|c|c|c|c}
\hline
 Probability & Performance &  T=800 & T=2000 & T=4000 \\
\hline
$\operatorname{P}^{*}_1(1|00,\{0,1\})$ & Bias & -0.0041  & 0.0029 & 0.0007 \\
 & RMSE & 0.0713  & 0.0278 & 0.0134 \\ \hline
$\operatorname{R}_1(1|0,\{0,1\})$ & Bias & -0.0153  & -0.0004 & 0.0022 \\
 & RMSE & 0.1949  & 0.0732 & 0.0368 \\
\hline
\end{tabular}
\label{tab6:peer_estimates}
\end{table}

\section{The Empirical Application}
\noindent This appendix offers supplemental material for the empirical application. First, we provide details for the data we use for estimation. Second, we provide extra details about the network estimation we use in the main text. Lastly, we conduct a robustness check of our empirical findings where we allow own stores in nearby markets to affect payoffs. 

\subsection{Data}
\noindent We purchased the tea chain expansion data from CnOpenData, a data marketing company that scraps all the registration data from the National Enterprise Credit Information Publicity system. This system, which is an information-query platform for all types of enterprises (market entities) in the People's Republic of China, was launched by the State Administration for Industry and Commerce of the People's Republic of China in February 2014. Users can employ it to search for enterprise registration and filing details, license approvals, administrative penalties, records of abnormal business operations, and other related information.

Enterprises are required to register at the local (city/district) Administration for Industry and Commerce. For each new store, the enterprise has to register and provide the required information to obtain the approval for operation ---the required information includes the specific street location of the store. The entry dates are the registration dates. If a store is closed, the enterprise is required by law to update this information and the date of the change of status is recorded. The overall framework, document standards, and processing time frames are unified nationwide by law, but individual service windows may apply slightly different procedures.

The market characteristics, including population, Gross Regional Product (GDP) of the city and area, were mostly collected from the China City Statistics Yearbook with two exceptions. First, the yearbook of 2016 - 2020 reported the registered population in the city in 2015-2019, but China conducted its seventh national population census in 2020, so the population reported in the Yearbook of 2021 is the resident population instead of the registered population in 2020. Second, the Yearbook of 2018 only reported GDP in the Districts under City in 2017, excluding the suburban area. We supplemented the missing of the total city GDP in 2017 and the registered population in 2020 through the China Economic and Social Big Data Research Platform, which is a large-scale, integrated statistical database that aggregates China's official economic and social development data from 1949 to the present. It brings together all central-, provincial-, and major municipal-level statistical yearbooks, as well as census reports, survey results and historical statistical compilations, covering 32 sectors and industries of the Chinese economy and society (\url{https://data.oversea.cnki.net/}). We also used the Consumer Price Index from the National Bureau of Statistics of China to convert the GDP into real terms (\url{https://data.stats.gov.cn/english/easyquery.htm?cn=C01}).

\subsection{Estimation Details for the Network Structure}
\noindent The vector of parameters $\theta$ consists of two parts: the parameters of the network structure (i.e., $\mathcal{NC}_a$, $a\in\mathcal{A}$), and the parameters of the attention index and marginal profits. To maximize the likelihood value by searching $\theta$ in its parameter space, we can proceed as follows: In the inner loop, fixing the network structure, we maximize the likelihood function over the consideration and payoff parameters by using the profiled likelihood estimation. The outer loop then searches for the network structure that leads to the highest likelihood. 

The set of parameters in the inner loop ---attention and payoff set of parameters--- is rather standard and does not pose any particular challenge. Unfortunately, checking all possible network structures is often computationally prohibitive without some extra restrictions. In our application, the parameter space for $(\mathcal{NC}_a)_{a\in\mathcal{A}}$, consists of $2^{2 \times 71\times(71-1)}=2^{9940}>10^{2400}$ possible network structures (i.e., there are $2\times71\times(71-1)$ binary variables). Even if we impose full symmetry, then the size of the parameter space drops to $2^{71\times(71-1)/2}=2^{2485}$. 
Instead, to simplify the estimation, we use spatial information about markets. In particular, we assume that if market $m'$ is in the neighborhood of market $m$, then at least one of the following three conditions holds: $m'$ and $m$ are in the same province; the prefectures where $m'$ and $m$ are located share a border; and/or $m'$ is at least the $5$-th closest (in terms of geographical distance) market to market $m$.

With these additional constraints, the number of binary parameters describing the network structure is $563$. Searching through all possible network structures still involves $2^{563}$ possibilities. To further facilitate the empirical analysis, instead of searching every possible network, we start the search from the initial/largest possible network and then shut down one link at a time to find the best improvement of the likelihood. We repeat this procedure until no link shutdown leads to any improvement.\footnote{This heuristic algorithm is a variation of a greedy optimization algorithm. See \citet{kitagawa2023individualized} for a recent application in the context of treatment allocation in sequential network games.} Though this method is only guaranteed to converge to a local optimum, we believe that, in our application, it provides an informative approximation of the solution.

\subsection{Robustness Checks: Own Stores in Nearby Markets Affecting Profits}
\noindent In our benchmark analysis, we assume that only own and rival stores in the focal market can affect the payoff of the firms in the focal market. This rules out spillover effects across different markets due to various channels. A potential channel could be transportation cost saving, though, as we argue in the paper, transportation costs seem minimal in the tea industry. Another potential spillover channel could be due to information aggregation. To incorporate these potential spillover effects, we relax our benchmark assumption and allow own stores in the nearby markets to affect the firm's profitability in the focal market. We still sustain that rival's stores in the nearby markets only affect consideration. That is, the number of rival stores in nearby markets still acts as the excluded variable. We, moreover, assume that the consideration and preference networks for each market coincide.  As a result, the marginal profit and the attention index, respectively, are represented by
\begin{align*}
    \bar{\pi}_{at}(S_{t}, N_{t};\theta)=&S_{mt}\tr \beta_{f}+\sum_{f'}\Big[N_{(f',m)t}\alpha_{f,f'}+N^2_{(f',m)t}\gamma_{f,f'}\Big], \\        +&\left[\sum_{a''\in\mathcal{N}_a:f''=f}N_{a''t}\delta_{f,f}+    \left(\sum_{a''\in\mathcal{N}_a:f''=f}N_{a't}\right)^2\eta_{f,f}\right]  \\
    \bar{\tilde{\pi}}_{at}(S_{t}, N_{t};\theta)=&S_{mt}\tr \tilde\beta_{f} +\sum_{f'}\Big[N_{(f',m)t}\tilde\alpha_{f,f'}+N^2_{(f',m)t}\tilde\gamma_{f,f'}\Big]+\\
    +&\sum_{f'}\left[\sum_{a''\in\mathcal{N}_a:f''=f'}N_{a''t}\tilde\delta_{f,f'}+    \left(\sum_{a''\in\mathcal{N}_a:f''=f'}N_{a't}\right)^2\tilde\eta_{f,f'}\right].  
\end{align*}
We present the estimated consideration and expansion probabilities in Figures~\ref{fig: con hist 12 app} and ~\ref{fig: exp hist 12 app}. We find that the directions of the effects are similar to the ones in the main paper. Both firms displayed limited consideration initially but became almost full consideration at the end of the measurement period because of the increased number of stores in different markets. As in the main paper, the expansion probabilities estimated under full consideration substantially underestimate the profitability of markets. 

\begin{figure}[h!]
\centering
  \includegraphics[width=0.7\textwidth]{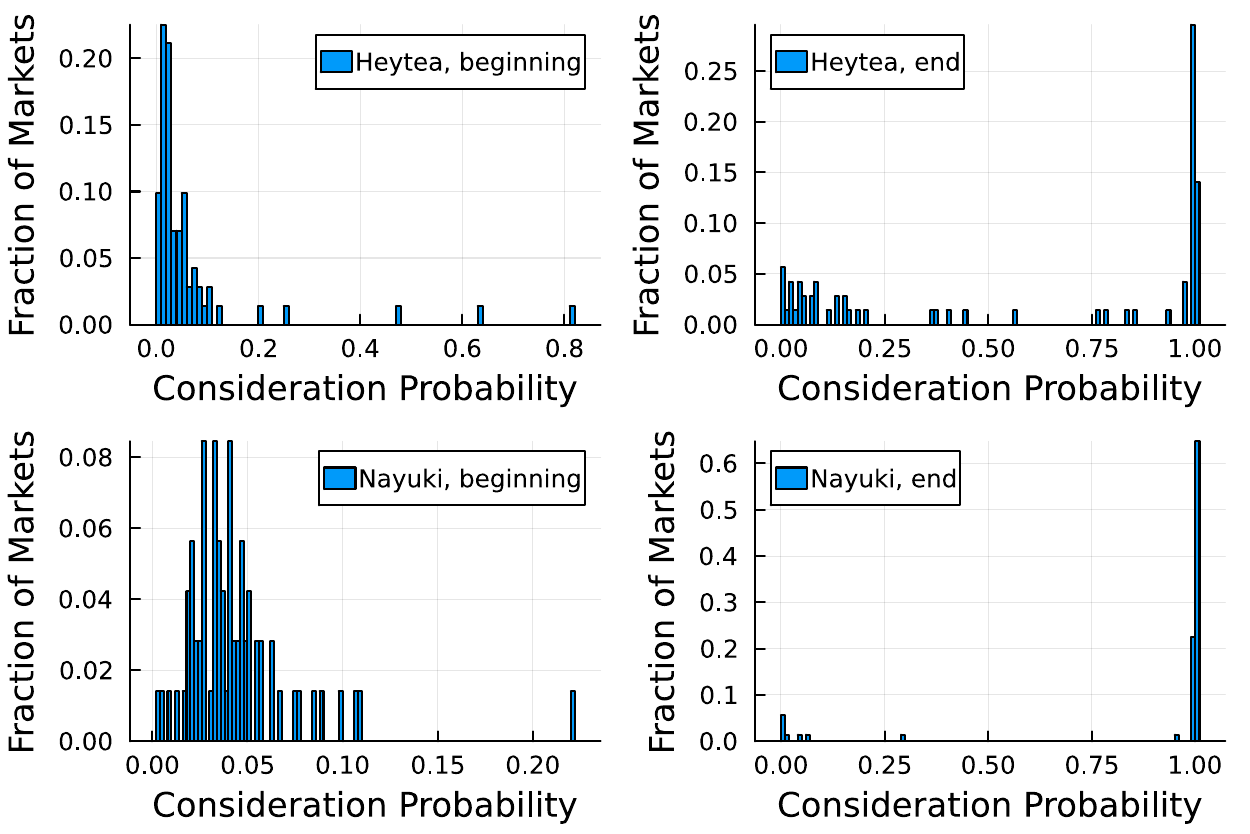}
\caption{\textbf{Normalized histogram of consideration probabilities for both firms at the data's beginning and end.} }
\label{fig: con hist 12 app}
\end{figure}

\begin{figure}[h!]
\centering
  \includegraphics[width=0.7\textwidth]{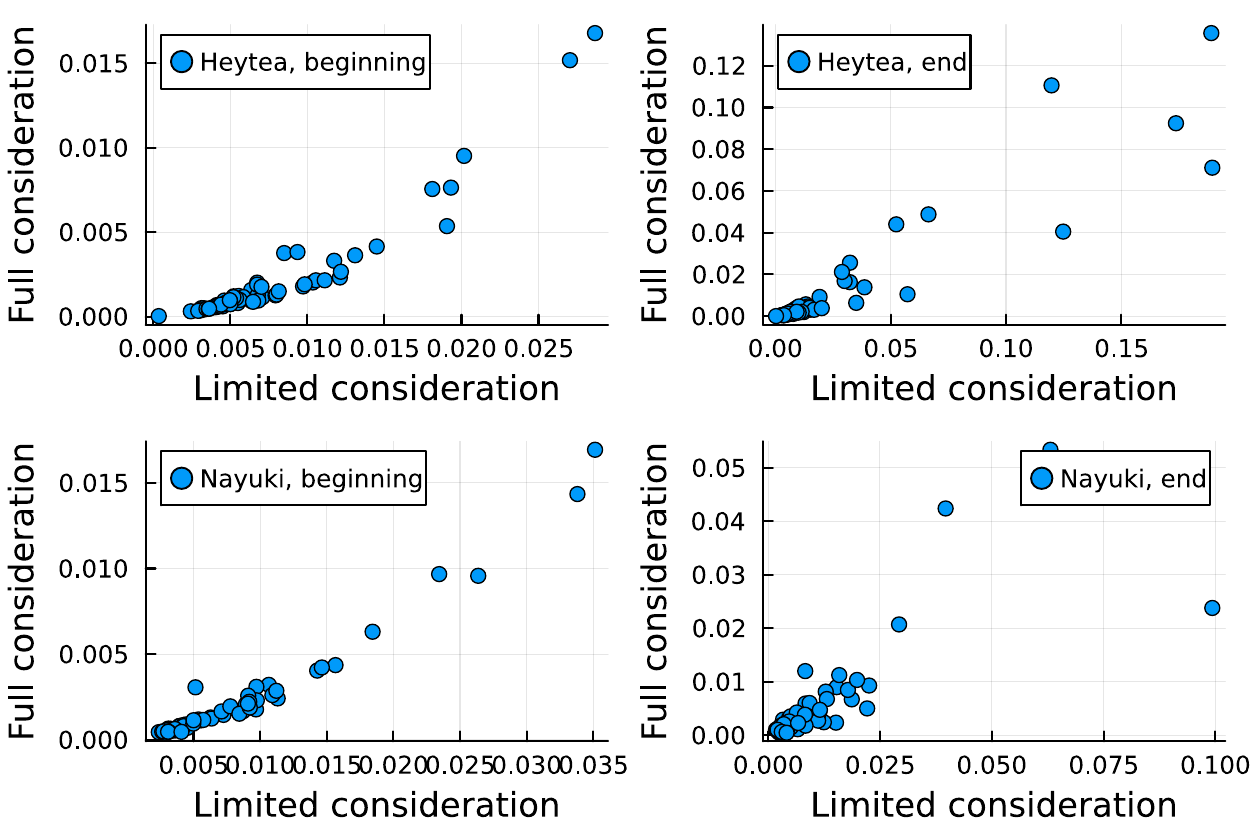}
\caption{\textbf{Limited consideration vs. full consideration expansion probabilities.} }
\label{fig: exp hist 12 app}
\end{figure}

\newpage

\bibliography{references}

\end{document}